\documentclass[11pt]{article} 
\usepackage{ifthen}
\usepackage{mdwlist}
\usepackage{amsmath,amssymb,amsfonts,amsthm}
\usepackage{bm}
\usepackage[colorlinks,citecolor=blue]{hyperref}
\usepackage{enumitem}
\usepackage{graphicx}
\usepackage{xspace}
\usepackage{verbatim}
\usepackage{algorithm}
\usepackage{algorithmic}
\usepackage[margin=1in]{geometry}
\usepackage[dvipsnames]{xcolor}
\usepackage{thmtools}
\usepackage{thm-restate}
\usepackage{mathtools}
\usepackage{multicol}
\usepackage{multirow}
\usepackage[T1]{fontenc}
\usepackage{latexsym}
\usepackage{epsfig}
\usepackage{epstopdf}
\usepackage{subcaption}
\usepackage{bbm}
\usepackage{tikz}
\usepackage{todonotes}
\usepackage{lmodern}
\usepackage{xcolor}
\usepackage{float}

\newcommand{\Upper}[1]{\mathsf{maxlen}(#1)}
\renewcommand{\Pr}{\mathbf{Pr}}
\newcommand{\rank}{\mathsf{rank}}

\def\vertex{\mathsf{node}}

\def\pred{\mathsf{pred}}
\def\succ{\mathsf{succ}}
\def\leftt{\mathsf{left}}
\def\rightt{\mathsf{right}}
\def\interior{\mathsf{interior}_H} 
\def\leng{\mathsf{len}}
\def\eps{\epsilon}
\def\calX{\mathcal{X}}
\def\calA{\mathcal{A}}

\newcommand{\Radius}[1]{\mathsf{radius}_H(#1)}
\newcommand{\leftradius}[1]{\mathsf{L}\text{-}\mathsf{radius}_H(#1)}
\newcommand{\rightradius}[1]{\mathsf{R}\text{-}\mathsf{radius}_H(#1)}
\newcommand{\Appearances}{\mathsf{endpoints}}

\newcommand{\restrict}[2]{\ensuremath{{#1}\hspace{-0.1cm}\downharpoonright_{#2}}}

\newcommand{\Cover}[1]{\mathsf{cover}_{#1}}
\newcommand{\Covereven}[1]{\mathsf{cover}_{#1}^{\mathsf{even}}}
\newcommand{\Coverodd}[1]{\mathsf{cover}_{#1}^{\mathsf{odd}}}
\newcommand{\Coverboundary}[1]{\mathsf{cover}_{#1}^{\mathsf{boundary}}}

\newcommand{\arcs}{\ensuremath{\mathcal{A}}}

\newcommand{\improvement}[2]{\mathsf{impvm}_{#1}(#2)}

\newcommand{\longradius}{\ensuremath{\mathsf{long}}-\ensuremath{\mathsf{radius}}~}
\newcommand{\shortradius}{\ensuremath{\mathsf{short}}-\ensuremath{\mathsf{radius}}~}

\newcommand{\good}[1]{\mathsf{good}_H(#1)}
\newcommand{\bad}[1]{\mathsf{bad}_H(#1)}
\newcommand{\dualbad}[1]{\mathsf{dual}\text{-}\mathsf{bad}_H(#1)}

\newcommand{\Roglin}{R$\ddot{\text{o}}$glin~}

\newcommand{\Touches}{\mathsf{Overlap}}
\newcommand{\NonTouches}{\mathsf{NonOverlap}}

\newcommand{\Xichen}[1]{\textcolor{Purple}{(Xi: #1)}}

\newcommand{\Xinzhi}[1]{\textcolor{NavyBlue}{(Xinzhi: #1)}}
\newcommand{\Manolis}[1]{\textcolor{Blue}{(Manolis: #1)}}

\newcommand{\answer}[1]{#1}

\newcommand{\group}{\mathcal{D}}
\newcommand{\chunk}{\mathcal{C}}
\usepackage{lipsum}  
\usepackage{thmtools}
\usepackage{thm-restate}

\topmargin -1.5cm        
\oddsidemargin -0.04cm   
\evensidemargin -0.04cm  
\textwidth 16.59cm
\textheight 21.94cm 

\newtheorem{theorem}{Theorem}[section]
\newtheorem{fact}{Fact}[section]
\newtheorem{lemma}{Lemma}[section]

\newtheorem{corollary}{Corollary}[section]
\newtheorem{definition}{Definition}[section]

\usetikzlibrary{positioning, shapes.geometric}

\title{Smoothed complexity of local Max-Cut and binary Max-CSP\vspace{0.2cm}}
\author {
Xi Chen\thanks{Supported by NSF IIS-1838154 and NSF CCF-1703925}\\
Columbia University\\
\tt{xichen@cs.columbia.edu}
\and
Chenghao Guo\thanks{Part of this work was done while the author was visiting Columbia University.} \\
IIIS, Tsinghua University\\
\tt{guoch16@mails.tsinghua.edu.cn}
\and
Emmanouil V. Vlatakis-Gkaragkounis\thanks{Supported by NSF CCF-1703925, NSF CCF-1763970, NSF CCF-1814873 and NSF CCF-1563155.}\\
Columbia University\\
\tt{emvlatakis@cs.columbia.edu}
\and
Mihalis Yannakakis\thanks{Supported by NSF CCF-1703925 and CCF-1763970.} \\
Columbia University\\
\tt{mihalis@cs.columbia.edu}
\and
Xinzhi Zhang\thanks{Part of this work was done while the author was visiting Columbia University.} \\
IIIS, Tsinghua University\\
\tt{zhang-xz16@mails.tsinghua.edu.cn}
}

\begin{document}

\maketitle
\begin{abstract}
We show that the smoothed complexity of the FLIP algorithm for local Max-Cut is at most $\smash{\phi n^{O(\sqrt{\log n})}}$, where $n$ is the number of nodes in the graph and $\phi$ is a parameter that measures the magnitude of perturbations applied on its edge weights. 
This improves the previously best upper bound
  of $\phi n^{O(\log n)}$ by Etscheid and \Roglin \cite{etscheid2015smoothed}.
Our result is based on an analysis of long sequences   of flips, which shows~that~it is very unlikely for every flip in a long sequence to incur a positive but small improvement in the cut weight. 
We also extend the same upper bound on the smoothed complexity of FLIP 
to all binary Maximum Constraint Satisfaction Problems.
\end{abstract}

\clearpage

\section{Introduction}

Local search is one of the most prominent algorithm design paradigms for combinatorial optimiza\-tion problems.
A local search algorithm begins with an initial candidate
  solution and then follows a path by iteratively moving to a better neighboring solution  
  until a local optimum is reached. 
Many algorithms currently deployed in practice  
  are based on local search, and all the empirical evidence suggests that they typically perform very well in practice,
  rarely running into long paths before reaching a local optimum.
  
However, despite their wide success in practice, 
the performance of many local search algorithms lacks
  rigorous justifications. 
A recurring phenomenon is that a local search algorithm is usually efficient in practice but
  analysis under the worst-case framework indicates the opposite --- that
  the algorithm has exponential running time due to
  delicate pathological instances that one may never encounter in practice. 
A concrete (and probably one of the simplest) example of this phenomenon is the \emph{FLIP} algorithm for the \emph{local Max-Cut} problem.
 
Given an undirected graph $G=(V,E)$ with edge weights $(X_e: e\in E)$ (wlog in $[-1,1]$),
  the local Max-Cut problem is to find a partition of $V$ into two sets $V_1$ and $V_2$
  such that the weight of the corresponding cut 
  (the sum of weights of edges with one node in $V_1$ and the other in $V_2$)
  cannot be improved by moving one of the nodes to the other set.
To find a local max-cut, the FLIP algorithm starts with an initial partition and keeps moving nodes to the other side, one by one, as long as the move increases the weight of the cut,
  until no local improvement can be made. 
  Note that the FLIP algorithm, similar to the simplex algorithm, is really a family of algorithms since one 
  can apply different rules, deterministic or randomized, to pick the next node when more than one nodes can improve the cut.
The local Max-Cut problem is known to be PLS-complete \cite{schaffer1991simple}, where PLS is a complexity class introduced by \cite{johnson1988easy} to 
  characterize  local search problems.
A consequence of the proof of the completeness result is that FLIP takes exponential time to solve local Max-Cut in the worst case, regardless of the pivoting rule used \cite{schaffer1991simple}.
The local Max-Cut problem can be viewed equivalently as the problem of finding a pure Nash equilibrium in a \emph{party affiliation game} \cite{fabrikant2004complexity}.
In this case, the FLIP algorithm corresponds to the better response dynamics for the game.
The local Max-Cut problem is also closely related to the problem of finding
a stable configuration in a neural network in the
  Hopfield model \cite{hopfield} (see Section \ref{sec:maxcsp} for the definition). In this case the FLIP algorithm corresponds to the natural asynchronous dynamics where in each step an unstable node flips its state, and the process repeats until the network converges to a stable configuration.

Max-Cut is an example of a Maximum Binary Constraint Satisfaction Problem (Max-2CSP).
In a general \emph{Max-2CSP}, 
  the input consists of a set of Boolean variables and a set of constraints with weights over some pairs of variables. The problem is then to find an assignment to the variables that maximizes the sum of weights of satisfied constraints.
So Max-Cut is the special case when~all constraints are XOR of the two variables.
Other well-studied special cases include Max-2SAT (Maximum Satisfiability when every clause has at most two literals),
 and Max-Directed Cut (the max-cut problem for weighted directed graphs); see Section \ref{sec:maxcsp} for their definitions.
We can consider more generally the \emph{Binary Function Optimization Problem} (or BFOP in short),
  where instead of constraints we have functions over some pairs of variables and the objective function
  is a weighted sum of these functions (again see Section \ref{sec:maxcsp} for the formal definition).
The FLIP algorithm can be used to find local optima for general Max-2CSP and BFOP, 
  where flipping the assignment of any single variable cannot improve the objective function.
  
In this paper we study the \emph{smoothed complexity} of the FLIP algorithm for local Max-Cut,
   Max-2CSP and BFOP.
The smoothed analysis framework was introduced by 
  Spielman and Teng \cite{spielman2004smoothed} to 
  provide rigorous
  justifications for the observed good practical performance of the simplex
  algorithm (the standard local search algorithm for Linear Programming), even though the simplex algorithm is known to take exponential-time in the worst case for most common pivoting rules (e.g. \cite{klee1970good}).
Since then, smoothed analysis has been applied in a range of areas; see \cite{spielman2009smoothed}.
Specialized to the local Max-Cut problem, the edge
weights of the given undirected graph $G=(V,E)$ are assumed to be drawn independently from a 
   vector $\calX=(\calX_e: e\in E)$ of probability distributions, one for each edge.
Each $\calX_e$ is a distribution supported on $[-1,1]$ and its density function is bounded from above
  by a parameter $\phi>0$. 
 {\color{blue} }{Notice that as $\phi\to 1/2$, the model approaches the average-case analysis framework for uniform edge weights. 
A related alternative model for smoothed analysis is to allow an adversary to pick arbitrary weights $w_e$, which are then perturbed by adding a small random perturbation $\mathcal{Z}_e$,  i.e. the edge weights are $\calX_e=w_e+\mathcal{Z}_e$.
In this case, $\phi$ corresponds to the maximum value of the pdf of $\mathcal{Z}_e$.}

The question is to give an upper bound $T(n,\phi)$ such that for any $G$ and $\calX$,
  the FLIP algorithm terminates within $T(n,\phi)$ steps with high probability (say $1-o_n(1)$)
  over the draw of edge weights $X\sim \calX$ 
  (where we use $X\sim \calX$ to denote independent draws of $X_e\sim \calX_e$).
  
The best result for $T(n,\phi)$ before our work is the quasipolynomial upper bound $\phi n^{O(\log n)}$ 
  by Etscheid and \Roglin \cite{etscheid2015smoothed}, based on a rank-based approach which we review in Section \ref{sec:rank}.
Before their work, polynomial upper bounds were obtained by Els$\ddot{\text{a}}$sser and Tscheuschner \cite{elsasser2011settling}
  and by Etscheid and \Roglin \cite{etscheid2015squaredeclidean} for special cases either when $G$ has $O(\log n)$ degree or 
  when $G$ is a complete graph with edge weights given by Euclidean distances.
After the work of \cite{etscheid2015smoothed}, Angel et. al \cite{angel2017local} obtained a polynomial upper bound for $T(n,\phi)$ when $G$ is a complete graph. 
Their polynomial bound was further improved by Bibak et. al \cite{bibak2019improving}, again for complete graphs. 
\subsection{Our results}

We prove a $\phi n^{O(\sqrt{\log n})}$ upper bound for the 
  smoothed complexity of FLIP for local Max-Cut:

\begin{restatable}[]{theorem}{maintheorem}
\label{maintheorem}
Let $G=(V,E)$ be an undirected graph over $n$ vertices, and let
  $\calX=(\calX_e:e\in E)$~be a sequence of probability distributions 
  such that every $\calX_e$ is supported on $[-1,1]$ and has its density function bounded from above by
  a parameter $\phi>0$.
Then with probability at least $1-o_n(1)$ over~the draw of edge weights $X\sim \calX$,
  any implementation of the FLIP algorithm running on $G$ and~$X$ takes at most $\phi n^{O(\sqrt{\log n})}$ many steps to terminate.
\end{restatable}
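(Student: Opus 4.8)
The plan is to follow the rank-based framework of Etscheid and R\"oglin but to replace their single-block argument with a two-level (recursive) decomposition of a long flip sequence, which is what should buy the improvement from $n^{O(\log n)}$ to $n^{O(\sqrt{\log n})}$. Fix a threshold $\epsilon$ and suppose, for contradiction, that with non-negligible probability FLIP runs for more than $T = \phi n^{O(\sqrt{\log n})}$ steps, so that there is a sequence of that many consecutive flips each of which improves the cut by at most $\epsilon$ (the ``expensive'' flips, those improving by more than $\epsilon$, can occur at most $\mathrm{poly}(n)/\epsilon$ times since the total cut weight is $O(n^2)$). The improvement made by flipping a node $v$ is a linear form $\sum_e \pm X_e$ over the edges incident to $v$, with signs determined by the current partition; so a window of $\ell$ consecutive cheap flips gives $\ell$ linear inequalities $|\langle a_i, X\rangle| \le \epsilon$. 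If the ``configuration'' of these $\ell$ flips (the multiset of nodes flipped together with the relevant local partition data each time) has the property that the associated vectors $a_1,\dots,a_\ell$ span a space of rank $r$, then a union bound over all configurations shows this event has probability at most (number of configurations)$\cdot (\epsilon \phi)^{r}$ times a volume factor, so it suffices to exhibit, in any long enough run, a sub-collection of flips whose linear forms have large rank relative to the number of configurations we must union-bound over.

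The combinatorial heart is therefore: \emph{in any sufficiently long flip sequence, one can find a segment whose associated improvement vectors have rank $\Omega(k)$ while the segment involves only $O(k)$ distinct nodes}, for a suitable $k$, AND the number of such segments/configurations is small enough that the union bound closes. The key observation driving the rank lower bound is that a node which flips twice in a segment forces a nontrivial linear relation that is ``new'' relative to earlier flips of its neighbors — essentially, between two consecutive flips of the same node $v$, the change in $v$'s improvement equals (up to sign) the sum of $2X_e$ over those neighbors $e$ of $v$ that flipped an odd number of times in between, so consecutive repeat-flips of $v$ yield vectors supported on small, controllable neighborhoods. I would organize these into an interval/arc structure (the macros \texttt{arcs}, \texttt{interval}, \texttt{Radius}, \texttt{Neighborhood} in the preamble suggest exactly this): associate to each node an ``arc'' spanning its first to last flip in the window, and argue by a counting/pigeonhole argument that either many nodes have short arcs (few neighbors flipped in between, so their repeat-flip vectors are nearly disjointly supported and hence high rank and cheaply enumerable), or the arc structure is highly nested, in which case one recurses on a sub-window. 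Setting the recursion depth to roughly $\sqrt{\log n}$ and the branching so that ($\text{configs}$) $\approx n^{\sqrt{\log n}}$ while rank $\approx \sqrt{\log n}\cdot(\text{something})$ balances the two sides.

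Concretely the steps are: (1) reduce Theorem~\ref{maintheorem} to showing that with high probability no window of length $L := n^{\Theta(\sqrt{\log n})}$ consists entirely of flips each improving by at most $\epsilon := 1/(\phi n^{\Theta(\sqrt{\log n})})$ (handling expensive flips separately and taking a union bound over starting times and over $\epsilon$ on a dyadic grid); (2) for a fixed window, define its combinatorial ``type'' (sequence of flipped nodes with local sign data) and the associated linear forms, and prove the probability bound $(\text{number of types of rank} \ge r)\cdot (C\epsilon\phi)^{r} \cdot (\text{vol factor}) = o_n(1)$; (3) the combinatorial lemma: every window of length $L$ admits, after restricting to a sub-window chosen by the recursive arc-decomposition, a set of flips with rank $\ge r$ and only $O(r)$ relevant distinct nodes, where the recursion has depth $\sqrt{\log n}$; (4) optimize $L$, $r$, $\epsilon$. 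The main obstacle is step (3) — getting the rank lower bound to grow fast enough relative to the explosion in the number of configurations as one descends the recursion; the delicate point is that naively each repeat-flip of a node could reuse the same neighborhood and contribute nothing new to the rank, so one must show the arc/nesting structure forces either genuine linear independence among a $\Theta(\text{depth})$-fraction of the repeat-flip vectors or enough nesting to recurse, and that the bookkeeping of ``which neighbors flipped an odd number of times'' can be encoded in few bits so the union bound survives.
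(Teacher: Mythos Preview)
Your proposal captures the high-level framework correctly---the reduction to bounding the probability that a long block is entirely $\eps$-improving, the use of arcs (pairs of consecutive flips of the same node) whose improvement vectors are independent of the configuration of inactive nodes, and a union bound balanced against a rank lower bound. The concrete gap is in step (3). You propose to find, inside a long window, a \emph{sub-window} (a contiguous substring) with good rank-to-length ratio, via a recursive short-arcs-versus-nesting dichotomy of depth $\sqrt{\log n}$. But this cannot work: the paper exhibits (Appendix~\ref{sec:hard-case}) a graph $G$ and a sequence $H$ of length $5n$ such that \emph{every} substring $S$ of $H$ satisfies $\rank(S)/\leng(S)=O(1/\log n)$. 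The Etscheid--R\"oglin substring rank lemma is therefore tight, and no amount of recursion over substrings can push the ratio beyond $1/\log n$.

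The idea you are missing is to allow $B$ to be a non-contiguous \emph{subsequence} of $H$. One takes a substring $H'$ of length $2\ell$ containing a set $C$ of arcs with near-full rank $\Omega(|C|)$ but only $|C|/\leng(H')=\Omega(1/\log n)$, and then \emph{deletes} carefully chosen moves from $H'$ so that the resulting subsequence $B$ has length $O(\sqrt{\log n})\cdot|C|$ while the improvement vectors of the arcs in $C$ are preserved exactly (the Vector-Preservation property in Lemma~\ref{lem:main}). The subtlety your outline does not address is that deleting a move inside an arc can flip the parity of a neighbor's occurrences and thus change the arc's improvement vector; controlling which moves are safe to remove requires the good/bad, long/short-radius, and $\Touches$/$\NonTouches$ case analysis of Section~\ref{sec:sqrtn-ub}. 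The $\sqrt{\log n}$ arises not from a recursion depth but from grouping the $\log n$ dyadic arc-length scales into $\sqrt{\log n}$ groups of $\sqrt{\log n}$ chunks each and balancing the cases. Finally, note that the relevant window length is $5n$, not $n^{\Theta(\sqrt{\log n})}$: the union bound in the paper runs over sequences $B$ of length $\ell\le 5n$ and configurations of their $\le \ell$ active nodes, which is what keeps the enumeration to $n^{O(\ell)}$.
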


Given $G$ and edge weights $X$, we define the (directed) configuration graph they form as follows:
  vertices of the graph correspond to configurations (or partitions) $\gamma:V\rightarrow \{-1,1\}$; 
  there is an edge from $\gamma$ to $\gamma'$ if $\gamma'$ can be obtained from $\gamma$ by moving 
  one node
  and the weight of $\gamma'$ is strictly larger than that of $\gamma$ under $X$, i.e., each edge is 
  a move that strictly improves the cut weight.
Theorem \ref{maintheorem} is established by showing that, 
  with probability at least $1-o_n(1)$ over~the draw of $X\sim \calX$,
  there is no directed path longer than $\phi n^{O(\sqrt{\log n})}$ in the configuration graph formed by $G$ and $X$.



We also extend Theorem \ref{maintheorem} to obtain the same upper bound for the smoothed complexity of the FLIP algorithm running 
  on Max-2CSP and BFOP.

\begin{theorem}\label{maxcsptheorem}
Let $I$ be an arbitrary instance of a Max-2CSP (or BFOP) problem with $n$ variables
and $m$ constraints (or functions) with independent random weights in $[-1,1]$ with
density at most~$\phi$. Then with probability at least $1-o_n(1)$ over the draw of weights, any implementation of the FLIP 
algorithm running on $I$ 
takes at most $ \phi m  n^{O(\sqrt{\log n})}$ many steps to terminate.
\end{theorem}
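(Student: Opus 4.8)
The plan is to reduce the FLIP dynamics on a Max-2CSP (or BFOP) instance $I$ to the FLIP dynamics on an auxiliary local Max-Cut--type instance, and then to re-run the argument behind Theorem~\ref{maintheorem} on that instance, working throughout with the original independent weights rather than with any aggregated edge weights.

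\emph{Reduction.} Encode each variable as a $\pm 1$ variable and write every constraint/function $f_c$ over a pair $(x_i,x_j)$ as $f_c(x_i,x_j)=\alpha_c+\beta_c x_i+\gamma_c x_j+\delta_c x_ix_j$ with fixed constants $\alpha_c,\beta_c,\gamma_c,\delta_c$ --- for Max-2CSP these lie in $\tfrac14\mathbb{Z}$, and for BFOP one first normalizes the $f_c$ to be integer-valued of bounded range. Then
\[
\obj(x)\;=\;\mathrm{const}\;+\;\sum_i b_i x_i\;+\;\sum_{i<j} A_{ij}\,x_i x_j,
\]
where each $b_i$ and $A_{ij}$ is an integer combination, with coefficients of absolute value $O(1)$, of the independent weights $(w_c)$. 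Flipping $x_i$ changes $\obj$ by a linear form $\sum_{c\,:\,i\in\mathrm{vars}(c)}\lambda_c w_c$ with integer coefficients $\lambda_c$ of absolute value $O(1)$ (for Max-2CSP, in $\{-1,0,1\}$) --- exactly as flipping a vertex of a weighted graph changes the cut by a $\pm 2$-combination of its incident edge weights. Introduce one frozen vertex $v_0$, permanently assigned $+1$ and never eligible to be flipped, and regard the coefficient of $x_i$ as the weight of an extra ``constraint'' between $x_i$ and $v_0$; then the FLIP dynamics on $I$ becomes literally the FLIP dynamics on a local Max-Cut instance whose interaction graph is the constraint graph of $I$ (with $v_0$ adjoined), in which $v_0$ is inert. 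We keep the constraints as the atomic carriers of randomness, so that the mutually independent, bounded-density quantities remain the $w_c\in[-1,1]$ rather than the dependent aggregates $b_i,A_{ij}$.

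\emph{Re-running the proof.} The proof of Theorem~\ref{maintheorem} accesses the randomness only through: (i) the improvement of each move (and, more generally, any partial sum of move-improvements along a directed path in the configuration graph) being a linear form in the independent weights with integer coefficients of bounded magnitude, supported on the weights ``incident'' to the flipped vertex; and (ii) the independence and density-$\phi$ bound on those weights. Both hold after the reduction, with ``weights'' reading ``constraints''. Every combinatorial step of the proof --- the good/bad classification of moves, the rank bounds on the improvement vectors produced by a long directed path, and the union bound over short witness sub-sequences --- refers only to the interaction graph, which is an ordinary (multi)graph on $O(n)$ vertices; the single inert vertex $v_0$ only shrinks the set of admissible pivots and is otherwise transparent. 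Refining the edge-weight family to the finer family $(w_c)$ only makes the required linear-independence facts easier to obtain. Finally, the length of any directed path is at most the total range of $\obj$, namely $O(\sum_c |w_c|)=O(m)$, divided by the smallest per-step improvement; this is the only place the extra factor $m$ appears (for pure Max-Cut, the analogous range $|E|\le n^2$ was absorbed into $n^{O(\sqrt{\log n})}$). Combining this with the high-probability lower bound of order $\phi^{-1}n^{-O(\sqrt{\log n})}$ on the relevant per-step improvement established in the proof of Theorem~\ref{maintheorem} gives the claimed bound $\phi m\,n^{O(\sqrt{\log n})}$, with failure probability $o_n(1)$.

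\emph{Main obstacle.} The real work is the bookkeeping just described: going through each lemma behind Theorem~\ref{maintheorem} and checking that it is insensitive to (a) parallel edges in the interaction graph, (b) the inert vertex $v_0$, and (c) the fact that a single independent weight $w_c$ may now sit on more than one edge of the interaction graph (the term $\beta_c x_i+\gamma_c x_j+\delta_c x_i x_j$ of a constraint touches the $v_0$-edges at $v_i$ and $v_j$ as well as the edge $\{v_i,v_j\}$) --- in particular that the rank estimates, phrased for Max-Cut in terms of incidences between pivot vertices and their incident edges, continue to hold in this slightly generalized setting. I expect this to be routine rather than to demand new ideas, since the reduction produces a genuine graph-structured local-search instance and only enlarges the underlying family of independent random variables.
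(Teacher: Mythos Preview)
Your reduction is correct in outline and would yield the theorem, but you have made it harder than necessary and consequently are worried about obstacles that do not in fact arise. The paper's argument is a strict simplification of yours.

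The key observation you are missing is that once one passes from single-move improvements to \emph{arc} improvements (the sum of the changes at two consecutive flips of the same variable $x_j$), all linear and separable contributions cancel identically. In your multilinear expansion $f_c(x_i,x_j)=\alpha_c+\beta_c x_i+\gamma_c x_j+\delta_c x_ix_j$, the $\alpha_c$, $\beta_c x_i$ and $\gamma_c x_j$ terms each contribute zero to the arc improvement, and only the quadratic term $\delta_c x_ix_j$ survives; its coefficient in the arc improvement vector is nonzero precisely when $\delta_c\ne 0$ (equivalently, $f_c$ is nonseparable) and $x_i$ appears an odd number of times inside the arc. This is exactly the Max-Cut incidence pattern on the simple graph whose edges are the pairs carrying a nonseparable constraint. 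The paper states this directly: the zero--nonzero structure of the arc-improvement matrix for BFOP is identical to that of Max-Cut, and since every rank bound in the Main Lemma is obtained by exhibiting a diagonal or triangular submatrix, the entire argument transfers verbatim.

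This dissolves all three of your obstacles. No frozen vertex $v_0$ is needed, because the unary terms it was meant to carry vanish in arc improvements; there are no parallel edges in the relevant graph, only (possibly several) constraint-indexed rows over each variable pair, which can only help the rank; and the worry in (c) about one weight $w_c$ sitting on several ``edges'' disappears because only the single quadratic edge survives. Your bookkeeping program would succeed, but it is checking robustness against complications that the arc construction already eliminates.
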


\subsection{The rank-based approach}\label{sec:rank}

We briefly review the ranked-based approach of \cite{etscheid2015smoothed} and then discuss the main technical barrier
  to obtaining an upper bound that is asymptotically better than $n^{O(\log n)}$.

Since the maximum possible weight of a cut in the weighted graph is at most $O(n^2)$, if an execution of the FLIP algorithm is very long, then almost all the steps must have a very small gain, less than some small amount $\epsilon$.
Therefore, the execution must contain many long substrings  (consecutive subsequences) of moves, 
all of which yield very small gain, in $(0,\epsilon]$. 
Let   $B=(\sigma_1,\ldots,\sigma_{k})$ be a sequence of moves, where the $\sigma_i$'s are the nodes
flipped in each step, and
let $\gamma:V\rightarrow \{-1,1\}$ be the configuration (partition) of the nodes at the beginning.
The increase of the cut weight made by the $i$-th move  is a linear combination of the
weights of the edges incident to the node $\sigma_i$
that is flipped in the $i$-th step with coefficients either $-1$ or $1$; thus, the increase can be written
  as the inner product of a $\{-1,0,1\}$-vector indexed by $e\in E$ and the edge weight vector $X$.
We refer to the former as the improvement vector of the $i$-th move. From our assumption about the probability distributions of  edge weights, it is
easy to see that for any step, the probability that the increase lies in $(0,\epsilon]$ is at most $\phi \epsilon$. If these events for different steps were independent, then the probability that all the steps of the sequence have this property would be at most $(\phi \epsilon)^k$,
i.e., it would go down rapidly to 0 with the length $k$ of the sequence. Unfortunately these events may be quite correlated. However, a lemma of \cite{etscheid2015smoothed} (restated as Lemma \ref{lem:slowly_increase_probability} in Section \ref{sec:prelim})  shows that if the improvement vectors in some steps are linearly independent then they behave like independent events in the sense that the probability that they all yield a gain in $(0,\epsilon]$ is at most  $(\phi \epsilon)^r$, where $r$ is the number of linearly independent steps. This suggests that a useful parameter for obtaining a bound is the rank of the set of improvement vectors for the steps of the sequence.

One problem is that the improvement vectors 
generally depend on the initial configuration $\gamma$ of nodes that do not appear in the sequence $B$. Their number may be much larger than the rank $r$, and thus considering all their possible initial values will overwhelm the probability  $(\phi \epsilon)^r$. For this reason, \cite{etscheid2015smoothed} (and we) combine consecutive occurrences of the same node in the sequence $B$ of moves: for each pair $(i,j)$, $i<j\in [k]$, such that $\sigma_i$ and $\sigma_j$ are  two consecutive occurrences of the same node in $B$ (we call such a pair an {\em arc}), we form the improvement vector of the arc by summing the improvement vectors of the two steps $i$ and $j$. Thus, the total gain in cut weight from the two steps is given by the inner product of the improvement vector for the arc and~$X$; if every step of $B$ has gain at most $\epsilon/2$ then every arc has gain at most $\epsilon$. We call such a sequence $\eps$-\emph{improving}. The improvement vectors of the arcs do not depend on the initial configuration of   \emph{inactive} nodes, those that do not appear in the sequence. The \emph{rank} of the sequence $B$ is defined as the rank of the matrix $M_{B,\gamma}$ whose rows correspond to  edges of $G$ and whose columns are  improvement vectors of arcs of $B$.
The aforementioned lemma (Lemma \ref{lem:slowly_increase_probability} in Section \ref{sec:prelim}) then implies that if the rank of a sequence $B$ is $r$ then the probability that $B$ is $\eps$-improving is at most $(\phi \eps)^r$.

The main technical lemma of \cite{etscheid2015smoothed}, which we will refer to as \emph{the rank lemma}, shows that
\begin{center}\begin{minipage}{15cm}
\emph{Given any sequence $H$ of length $5n$,
  there always exists a substring $B$ of $H$ such that \\the rank\footnote{Note that the rank 
  is defined earlier using both $B$ and the initial configuration $\gamma$. 
An observation from \cite{angel2017local} shows that the rank actually does not depend on $\gamma$ but only $B$.}
of $B$ is at least $\Omega(\leng(B)/\log n)$.}
\end{minipage}\end{center}
\clearpage 
With this lemma, one can apply a union bound to upper bound the probability that there are an initial configuration $\gamma$ and a sequence $B$ with $\leng(B)\le 5n$ and rank $\Omega(\leng(B)/\log n)$ such that $B$
is $\eps$-improving with respect to $\gamma$ and $X\sim\calX$ as follows:~\begin{equation}\label{hauu}
\sum_{\ell\in [5n]} 2^\ell \cdot n^{\ell}\cdot (\phi \eps)^{\Omega(\ell/\log n)}.
\end{equation}
Here $n^\ell$ is a trivial bound for the number of sequences of length $\ell$
  and $(\phi\eps)^{\Omega(\ell/\log n)}$ is the probability that a sequence with rank 
  $\Omega(\ell/\log n)$ is $\eps$-improving.
A crucial observation is that, because of the definition of ranks (based on arcs instead of individual moves),
  we do not need to apply the union bound on the $2^n$ configurations over all nodes but
  only on configurations of nodes that appear in the sequence.
In other words, initial configurations that only differ on non-active nodes can be treated as the same.
This is why we can use $2^\ell$ instead of $2^n$ in (\ref{hauu}) since $\ell$ is a trivial upper bound for the number of active nodes.
By setting $\eps=1/(\phi n^{O(\log n)})$, (\ref{hauu}) becomes $1-o_n(1)$.
It follows from the rank lemma that, with high probability,
  no sequence $H$ of length $5n$ can be $\eps$-improving and thus, 
  the cut weight must go up by at least $\eps$ for every $5n$ moves.
The $\phi n^{O(\log n)}$ upper bound of \cite{etscheid2015smoothed} then follows since the
maximum possible weight of a cut is $O(n^2)$.

A natural question for further improvements is whether  the $\log n$-factor
  lost in the rank lemma of \cite{etscheid2015smoothed} is necessary. 
Taking a closer look, the proof of \cite{etscheid2015smoothed} consists of two steps. First it is shown that given any
  sequence $H$ of length $5n$, there is a substring $B$ such that the number of repeating
  nodes in $B$ (i.e., those that appear at least twice in $B$) is 
  $\Omega(\leng(B)/\log n)$.
The rank lemma then follows by showing that the rank of $B$ is at least proportional to the number of repeating nodes in it
  (which we include as Lemma \ref{firstlowerbound} in Section \ref{sec:basiclemma}).
On the one hand, the first step of the proof turns out to be tight given an
  example constructed in \cite{angel2017local}.
Furthermore, we give a construction in Appendix \ref{sec:hard-case} to show that, not only the proof approach of 
  \cite{etscheid2015smoothed} is tight, but the rank lemma itself is indeed tight,
  by giving a graph $G$ and a sequence $H$ of length $5n$ such that 
  every substring $B$ of $H$ has rank at most $O(\leng(B)/\log n)$. 
Therefore, one cannot hope to obtain a bound better than $n^{O(\log n)}$ based on an
  improved version of this rank lemma.

\subsection{A new rank lemma}\label{sec:newrank}

We overcome the $\log n$-barrier to the rank-based approach of \cite{etscheid2015smoothed} on general graphs by considering 
  not only substrings of $H$ but also its \emph{subsequences}.
Recall that a subsequence of $H$ is of the form $(\sigma_{i_1},\ldots,\sigma_{i_k})$ with
  $i_1<\cdots<i_k$.
We use the same arc-based rank notion defined above. The main technical component (Lemma \ref{lem:main}) is a new rank lemma that 
  can be stated informally as follows:
\begin{center}\begin{quote}
\emph{If $H$ is a sequence of moves of length $5n$, then there is a subsequence $B$ of $H$\\
  such that the rank of $B$ is at least $\Omega(\leng(B)/\sqrt{\log n})$.}
\end{quote}\end{center}
  
While the $\sqrt{\log n}$ in the statement naturally leads to the improvement from
  $\log n$ to $\sqrt{\log n}$ in our smoothed complexity bound,
  one needs to be careful when working with subsequences $B$ of $H$.  
An advantage of using substrings of $H$ is that improvement vectors of arcs
  are trivially preserved, which is not necessarily the case for subsequences of $H$.
More formally, let $B=(\sigma_\ell,\ldots,\sigma_r)$ be a substring of $H$
  and $\alpha=(i,j)$ be an arc of $H$ such that $\ell\le i<j\le r$.
Then the corresponding arc $\beta=(i-\ell+1,r-\ell+1)$ of $B$ has the same 
  improvement vector as that of $\alpha$ in $H$.
Therefore, $B$ being not $\eps$-improving trivially implies that $H$ is not $\eps$-improving.
However, when $B=(\sigma_{i_1},\ldots,\sigma_{i_k})$ is a subsequence of $H$,
  every arc $\beta$ of $B$ can be mapped back to be an arc $\alpha$ of $H$ but
  now it is in general not true that $\alpha$ and $\beta$ share the same improvement vector and thus,
  $B$ being not $\eps$-improving does not
  necessarily imply that $H$ is not $\eps$-improving. 
  
{{\color{blue} } Despite this limitation, we  prove a \emph{subsequence rank lemma} in Section \ref{sec:sqrtn-ub} of the following form }(still an informal\footnote{The lemma
  stated here is still not in its  formal version since we ignore the 
  involvement of the initial configuration $\gamma$; see Lemma \ref{lem:main} for details.
Fortunately the initial configuration will play a minimal role in the proof and 
  we find it easier to gain intuition about the proof without considering it in the picture.}
   version; see Lemma \ref{lem:main}):
\begin{center}\begin{minipage}{13.5cm}
\emph{If $H$ is a sequence of moves of length $5n$, then there is a subsequence $B$ of $H$
  and a set of arcs $Q$ of $B$ such that the rank of $Q$ (i.e., the rank of the matrix where
  we only include improvement vectors of arcs in $Q$) is at least $\Omega(\leng(B)/\sqrt{\log n})$ and
  the improvement vector of every arc in $Q$ is the same as that of its corresponding arc in $H$.}
\end{minipage}\end{center}
$\text{\\}$Theorem \ref{maintheorem} then follows quickly from the new rank lemma by a similar union bound. 
$\text{\\}$

The technical challenge for proving our new rank lemma is to balance the following trade-off.
On the one hand, 
  we would like to keep as many arcs of $H$ in $Q$ as possible so that they together give us 
  a high rank compared to the length of $B$.
On the other hand, the more arcs we want to keep the less aggressively we can
  delete moves from $H$, in order to have their improvement vectors preserved.
To achieve this for an arc $\alpha=(i,j)$ of $H$, we need to make sure that the parity of the number of occurrences inside the arc of any node 
  adjacent to the node $\sigma_i=\sigma_j$ in $G$  remains the same after deletions.

We now give a sketch of the proof of our Main Lemma (Lemma \ref{lem:main}). 
Let $H$ be a sequence of moves of length $5n$. Given that it is much longer than the number $n$ of vertices,
  it is easy to show that $H$ has many arcs (actually at least $4n$; see Lemma \ref{lem:arc-length-ratio-lb}).
We first partition all arcs of $H$ into $\log n$ many chunks according to their lengths (the length
  of an arc $(i,j)$ is defined to be $j-i+1$): chunk $\chunk_j$ contains all arcs of length between $2^j$ and 
  $2^{j+1}$.
Then there must be a $j^*$ such that $|\chunk_j^*|$ is at least $\Omega(n/\log n)$.
Focusing on arcs in $\chunk_{j^*}$ and letting $\smash{\ell=2^{j^*+1}}$, one can show (Lemma \ref{lem:arc_node_ratio_in_cover} in Section \ref{sec:basiclemma})
  that there is a substring $H'=(\sigma_i,\ldots,\sigma_{i+2\ell-1})$ of length $2\ell$
  such that the number of $\chunk_{j^*}$-arcs contained in $H'$ is $\Omega(\ell/\log n)$
  (this should not come as a surprise because this is basically the expected number of $\chunk_{j^*}$-arcs
  when we pick the window uniformly at random).
Let $C$ be the set of $\chunk_{j^*}$-arcs in $H'$.
If we take $B$ to be $H'$ and $Q$ to be arcs that correspond to $C$ in $B$, then
  the rank of $Q$ can be shown to be  $\Omega(|Q|)$ (by applying Lemma \ref{firstlowerbound} 
  discussed earlier
  and using the fact that all arcs in $Q$ are almost as long as $B$ up to a constant).
However, the ratio $|Q|/\leng(B)=|C|/(2\ell)$ is only $\Omega(1/\log n)$, too weak for our goal.
Instead our proof uses the following new ideas.

The first idea is to group the $\log n$ chunks
  $\chunk_1,\ldots,\chunk_{\log n}$ into $\sqrt{\log n}$ groups $\smash{\group_1,\ldots,\group_{\sqrt{\log n}}}$, each
  being the union of $\sqrt{\log n}$ consecutive chunks.
In Case 1 and Case 2  of the proof, we pick a group $\group_{i^*}$, with $\ell''$ set to be
  the maximum length of arcs in $\group_{i^*}$, and then pick a substring $H''$ of $H$ of length $2\ell''$ by
  Lemma \ref{lem:arc_node_ratio_in_cover} so that the number
  of $\group_{i^*}$-arcs in $H''$ is  
  $\Omega(\ell''/\sqrt{\log n})$.
We show that when these $\group_{i^*}$-arcs satisfy
  certain additional properties (see more discussion about these properties below), then their rank is
  almost full and Lemma \ref{lem:main} for these two cases follows by setting $B$ to be $H''$ and $Q$ to be arcs of $B$ that correspond to these $\group_{i^*}$-arcs 
  in $H''$.
  
 The second idea is to continue using the 
  substring $H'$ and the set $C$ of $\chunk_{j^*}$-arcs
  in it, with the rank of $C$ being $\Omega(\leng(H')/\log n)$, but now we try to \emph{delete} as many moves from $H'$ as possible to obtain the desired subsequence $B$ and at the same time \emph{preserve} improvement vectors of arcs in $C$.
  
We make two key observations about which moves can or cannot be deleted.
First let $\sigma_k$ be a move in $H'$ such that node $\sigma_k$ only appears once in $H'$.
Then we cannot delete $\sigma_k$ if $i<k<j$ for some arc $\alpha=(i,j)\in C$ 
  and $(\sigma_i,\sigma_k)$ is an edge in $G$; otherwise the improvement 
  vector of $\alpha$ will not remain the same at the entry indexed by edge $(\sigma_i,\sigma_k)$.
As a result, if there are many such moves in $H'$ then we cannot hope to preserve arcs in $C$
  and at the same time increase the ratio $|C|/\leng(B)$ up to $1/\sqrt{\log n}$.
To handle this situation, our first key observation is that having many such $\sigma_k$ is indeed
  a good case: it would imply that many arcs $\alpha=(i,j)$ in $H$ have a 
  $\sigma_k$ (referred to as a \emph{witness} for $\alpha)$ such that $i<k<j$, $(\sigma_i,\sigma_k)$ is an edge in $G$,
  $\sigma_k$ only appears once inside $\alpha$, and \emph{both the previous and next occurrences
  of $\sigma_k$ are pretty far away from $k$}. We handle this case in Case 2 of our proof.
As discussed earlier, we pick a group $\group_{i^*}$ and 
  a substring $H''$ of $H$.
Assuming that most $\group_{i^*}$-arcs in $H''$
  satisfy this additional property now,
  their witnesses can then be used to certify
  the linear independence of their improvement
  vectors; this  implies
  that these $\group_{i^*}$-arcs in $H''$ have 
  almost full rank.

The next observation is about repeating nodes in $H'$.
Let $\beta=(k,r)$ be an arc that shares no endpoint with arcs in $C$.
We say $\beta$ \emph{overlaps} with an arc $\alpha=(i,j)\in C$ if the $(\sigma_k,\sigma_i)$ is
  an edge in $G$ and either $k<i<r<j$ or $i<k<j<r$.
If $\beta$ does \emph{not} overlap with any arc in $C$ then it is not difficult to show
  that the deletion of both moves $k$ and $r$ of $\beta$ 
  will have no effect on improvement vectors
  of arcs in $C$.
Therefore, we can keep deleting until no such arc exists in $H'$ anymore.
But, what if many arcs in $H'$ overlap with arcs in $C$?  
Our second observation is that this is again a good case for us. 
Assuming that there are $\Omega(\ell/\sqrt{\log n})$ arcs in $H'$ that overlap with arcs in $C$,
  we show that the rank of these arcs is almost full and 
  thus, the ratio of the rank and the length of $H'$ is $\Omega(1/\sqrt{\log n})$;
  this is our Case 3.1.
(Note that the discussion here is very informal. In the actual proof, we need to 
  impose an extra condition (see Definition \ref{def:good}) on arcs in $C$ in order to show that 
  the rank of arcs overlapping with arcs in $C$ is almost full. 
We handle the case when most arcs of $H$ violate this condition in Case 1 of the proof, by working with a group
  $\group_{i^*}$ as discussed earlier.)
  
Now we can assume that all moves in $H'$ can be deleted except    those that are endpoints of arcs in $C$ and endpoints of arcs that overlap with at least
  one arc in $C$ (the number of which is at most $O(\ell/\sqrt{\log n})$).
Recall from the discussion at the beginning that the rank of $C$ is almost full.
Given that the length of the subsequence $B$ obtained after deletions 
  is  $O(\sqrt{\log n})\cdot |C|$, the rank lemma follows
  (since we made sure that the deletion of moves does not affect improvement vectors of arcs in $C$).
This is handled as the last case, Case 3.2, in the proof of the Main Lemma.

With the proof sketch given above, the choice of $\sqrt{\log n}$ in the statement of the Main Lemma is 
   clearly the result of balancing these delicate cases.
At a high level, the proof of the Main Lemma 
relies on a detailed classification of arcs
  based on a number of their attributes that we can take advantage in
  the analysis of their ranks.
The proof involves an intricate analysis of sequences and their properties and uses very little from the structure of the graph itself and the Max-Cut problem. As a consequence, the proof readily extends to all other local Max-2CSP problems with the same bound on their smooth complexity.\medskip

\noindent\textbf{Organization.} The structure of the rest of the paper is as follows. Section \ref{sec:prelim} gives basic definitions and background. Section \ref{sec:maintheorem} states the Main Lemma and uses it to prove Theorem \ref{maintheorem}. Section \ref{sec:sqrtn-ub}, which is technically the heart of the paper, proves the Main Lemma. Section \ref{sec:maxcsp} presents the extension to general binary Max-CSP and Function problems, and
Section \ref{sec:conclusions} offers concluding remarks and open problems.

\section{Preliminaries}\label{sec:prelim}

Given a positive integer $n$ we use $[n]$ to denote $\{1,\ldots,n\}$.
Given two integers $i\le j$, we write $[i:j]$ to denote the interval
  of integers $\{i, \ldots,j\}$. {Given an interval $I=[i:j]$, we write $\leng(I)=j-i+1$ to denote the \emph{length} of the interval $I$.}

Let $G=(V,E)$ be a weighted undirected graph with a weight vector
  $X=(X_e:e\in E)$, where $X_e\in [-1,1]$ is the weight of edge $e\in E$.
Under the smoothed complexity model,
  there is a family  $\calX=(\calX_e: e\in E)$ of probability distributions, one for each edge;
  the edge weights $X_e$ are drawn independently from the corresponding distributions $\calX_e$.
We assume that each $\calX_e$ is a distribution supported on $[-1,1]$ and its density function is bounded from above by a parameter $\phi>0$. (The assumption that the edge weights are in $[-1,1]$ is no loss of generality, since they can be always scaled to lie in that range.) 
A \emph{configuration} $\gamma$ of a set of nodes $S\subseteq V$ is a map from $S$ to $\{-1,1\}$. A configuration $\gamma$ of $V$ corresponds to a partition of the nodes into two parts: the left part $\{ u \in V: \gamma(u)=-1\}$ and the right part $\{ u \in V: \gamma(u)=1\}$. The \emph{weight} of a configuration (partition) $\gamma$ of $V$ with respect to a weight vector $X$ is the weight of the corresponding cut, i.e., the sum of weights of all edges that connect a left node with a right node.

 Formally, it is  given by
\begin{equation}\label{eqn:maxcut-obj}
\textsf{obj}_{G,X}(\gamma)= \sum_{(u,v)\in E}  {X_{(u,v)}}\cdot \mathbf{1}\{\gamma(u) \neq \gamma(v) \} = \frac{1}{2}\sum_{(u,v)\in E} X_{(u,v)}\cdot \big(1-\gamma(u)\gamma(v)\big).
\end{equation}
The problem of finding a configuration of $V$ that maximizes the cut weight is the well-known~Max-Cut problem.
We are interested in the Local Max-Cut problem, where
  the goal is to find a configuration $\gamma$ of $V$ that is 
  a local optimum, i.e., 
  $\textsf{obj}_{G,X}(\gamma)\ge \textsf{obj}_{G,X}(\gamma^{(v)})$
  for all $v\in V$, where $\gamma^{(v)}$ is the configuration
  obtained from $\gamma$ by flipping the sign of $\gamma(v)$.
  
A simple algorithm for Local Max-Cut  is the following FLIP algorithm.\medskip
\begin{flushleft}\begin{minipage}{16cm}
\emph{``Start from some initial configuration $\gamma=\gamma_0$ of $V$.  While there exists a node $v\in V$
such that flipping the sign of $\gamma(v)$ would increase the cut weight, select such a node $v$
(according to some pivoting criterion) and execute the flip, i.e., set $\smash{\gamma_{i+1}=\gamma_i^{(v)}}$ and repeat.}\medskip
\end{minipage}\end{flushleft}
The algorithm terminates with 
  a configuration of $V$ that cannot be improved by flipping any single node. The execution of FLIP for a given graph $G$ and edge weights $X$ depends on both the initial configuration $\gamma_0$ and the pivoting criterion used to select a node to flip in each iteration, when there are multiple nodes which can be profitably moved. Each execution of FLIP generates a sequence of nodes that are moved during the execution.

Given $G=(V,E)$  {we denote a \emph{sequence of moves} as a sequence $H=(\sigma_1,\ldots,\sigma_k)$
  of nodes from $V$, where we write $\leng(H)=k$ to denote its \emph{length}.}
We say a node $v\in V$ is \emph{active} in $H$ if it~appears in $H$, and is 
  \emph{repeating} if it appears at least twice in $H$.
We write $S(H)$ to denote the set of active nodes in $H$,
and use $S_1(H)$ (resp. $S_2(H)$) to denote the set of nodes that appear only once (resp. two or more times) 
 in $H$.
As usual, a \emph{substring} of $H$ is a sequence of the form~$(\sigma_i, \sigma_{i+1},\ldots,\sigma_j)$ for some
  $1\le i<j\le k$,  and a \emph{subsequence} of $H$ is 
  a sequence of the form $(\sigma_{i_1},\ldots,
  \sigma_{i_\ell})$ for some $1\le i_1<\cdots<i_\ell\le k$.
Given a set $P\subseteq [k]$, we write $H_P$
  to denote the subsequence of $H$ obtained by restricting to indices in $P$.
When $P$ is an interval $[i:j]\subseteq [k]$, $H_P$ is a substring of $H$.

Next we introduce the notion of arcs and define their improvement vectors. 
An \emph{arc} $\alpha=(i,j)$ of $H=(\sigma_1,\ldots,\sigma_k)$ is a pair of indices
$i<j\in [k]$ such that $\sigma_i=\sigma_j$ and $\sigma_i\ne\sigma_\ell$
for all $i<\ell<j$ 
(i.e., $\sigma_i$ and $\sigma_j$ are two 
consecutive occurrences of the same node in $H$).
We let $\vertex_H(\alpha)=\sigma_i=\sigma_j\in V$ and refer to it as the node of $\alpha$.
We also refer to $i$ as the \emph{left endpoint} and $j$ as the \emph{right endpoint} of $\alpha$,
  and write $\leftt(\alpha)=i$ and $\rightt(\alpha)=j$.
  We will sometimes omit the subscript $H$ when it is clear from the context.
We write $\leng(\alpha)=j-i+1$ to denote the length of $\alpha$.

Given a sequence $H=(\sigma_1,\ldots,\sigma_k)$ of moves (nodes) and an initial configuration $\gamma=\gamma_0$ before the first move of $H$, let $\gamma_i$ denote the configuration after the $i$-th move of $H$. The gain in the cut weight from the $i$-th move is a linear combination of the weights of the edges incident to node $\sigma_i$ that is flipped, where some edges have coefficient $1$ and the rest have coefficient $-1$. Note that if $H$ is part of an execution of the FLIP algorithm, then the gain is positive at every move.

For each arc $\alpha=(i,j)$ of $H$, we define
  the \emph{improvement vector}~of $\alpha$ with respect to $\gamma$ and $H$, denoted by 
  $\improvement{\gamma,H}{\alpha}$, as follows:
$\improvement{\gamma,H}{\alpha}$ is a vector in $\{-2,0,2\}^{ E }$ indexed by edges $e\in E$ (just like  the weight vector $X$); 
its entry indexed by $e\in E$ is
nonzero iff $e=(\vertex_H(\alpha),v)\in E$~for some node $v$ that appears
  an odd number of times in $\sigma_{i+1},\ldots,\sigma_{j-1}$. 
When this is the case, 
  its value is set to be $2\gamma_{i-1}(\vertex_H(\alpha))\gamma_{i-1}(v)$. 
{Note that for this definition we do not need to have the full configuration $\gamma$
  of $V$ but only of the active nodes in $S(H)$.
It also follows from the definition of improvement vectors that, if $\gamma$ is the 
  initial configuration of $S(H)$  and we move nodes one by one 
  according to $H$,  then the \emph{total} gain in the cut weight $\textsf{obj}$
  from the $i$-th move and the $j$-th move is given by the inner product of~$\improvement{\gamma,H}{\alpha}$~and~$X$.
  Indeed, letting $u=\vertex_H(\alpha)$, the total gain from these two moves equals
  \[\underbrace{\sum_{v:\hspace{0.03cm} (u,v) \in E} X_{(u,v)}\cdot \gamma_{i-1}(u)\gamma_{i-1}(v) }_{i\text{-th move}}+
 \underbrace{\sum_{v:\hspace{0.03cm} (u,v)\in E} X_{(u,v)}\cdot \gamma_{j-1}(u)\gamma_{j-1}(v) }_{j\text{-th move}}
 \]
Given that $\gamma_{i-1}(u)=-\gamma_{j-1}(u)$, only those neighbors $v$ of $u$ that flipped an odd number of times~in
  $\sigma_{i+1},\ldots,\sigma_{j-1}$, i.e., $\gamma_{i-1}(v)\neq \gamma_{j-1}(v) $, contribute in the \emph{total} gain of the two moves.}
  
  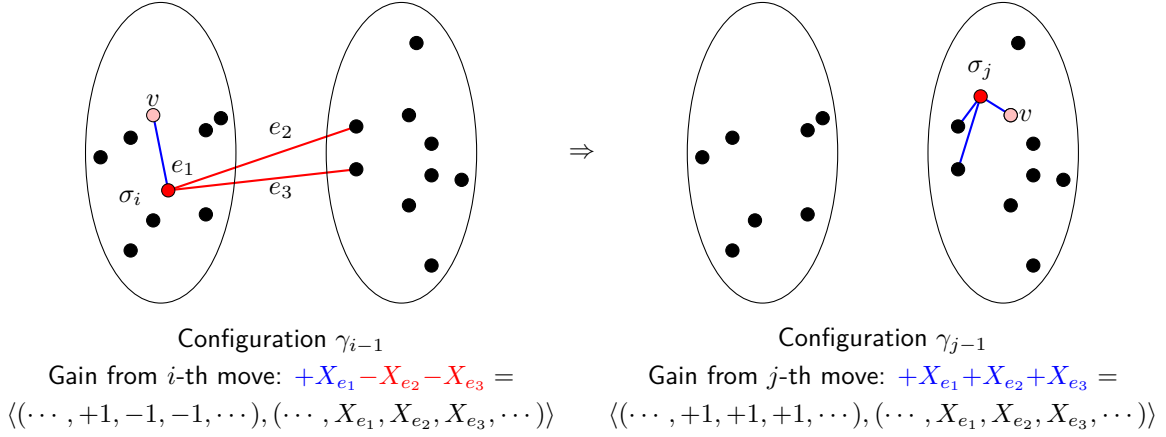
\begin{figure}[t!]
\begin{tikzpicture}
\tikzstyle{every node}=[node font={\sf \small}]

\draw[thick,draw=blue] (-1.5,-0.5) -- (-1.7,0.5);
\draw[thick,draw=red] (-1.5,-0.5) --  (1,0.35);
\draw[thick,draw=red] (-1.5,-0.5) --  (1,-0.22);
\draw[thick,draw=blue] (8+1.3,0.75) -- (8+1.7,0.5);
\draw[thick,draw=blue] (8+1.3,0.75) --  (8+1,0.35);
\draw[thick,draw=blue] (8+1.3,0.75) --  (8+1,-0.22);

\draw[fill=black] (-1.7,-0.9) circle (2.5pt);
\draw[fill=black] (-2.4,-0.06) circle (2.5pt);
\draw[fill=black] (-1,-0.82) circle (2.5pt);
\draw[fill=black] (-2,-1.3) circle (2.5pt);
\draw[fill=red] (-1.5,-0.5) circle (2.5pt);
\draw[fill=pink] (-1.7,0.5) circle (2.5pt);
\draw[fill=black] (-0.8,0.46) circle (2.5pt);
\draw[fill=black] (-2,0.2) circle (2.5pt);
\draw[fill=black] (-1,0.3) circle (2.5pt);

\draw[fill=black] (1.7,-0.7) circle (2.5pt);
\draw[fill=black] (2.4,-0.36) circle (2.5pt);
\draw[fill=black] (1,-0.22) circle (2.5pt);
\draw[fill=black] ( 2,-0.3) circle (2.5pt);
\draw[fill=black] (2,-1.5) circle (2.5pt);
\draw[fill=black] (1.7,0.5) circle (2.5pt);
\draw[fill=black] (1.8,1.46) circle (2.5pt);
\draw[fill=black] (2,0.12) circle (2.5pt);
\draw[fill=black] (1,0.35) circle (2.5pt);

\draw (-1.6,0) ellipse (1cm and 2cm);
\draw (1.6,0) ellipse (1cm and 2cm);

\node at (0,-2.5) {Configuration $\gamma_{i-1}$};
\node at (-1.3,-0.2) {$e_1$};
\node at (-2.0,-0.6) {$\sigma_i$};


\draw[fill=black] (8-1.7,-0.9) circle (2.5pt);
\draw[fill=black] (8-2.4,-0.06) circle (2.5pt);
\draw[fill=black] (8-1,-0.82) circle (2.5pt);
\draw[fill=black] (8-2,-1.3) circle (2.5pt);
\draw[fill=red]   (8+1.3,0.75) circle (2.5pt);
\draw[fill=black] (8-0.8,0.46) circle (2.5pt);
\draw[fill=black] (8-2,0.2) circle (2.5pt);
\draw[fill=black] (8-1,0.3) circle (2.5pt);

\draw[fill=black] (8+1.7,-0.7) circle (2.5pt);
\draw[fill=black] (8+2.4,-0.36) circle (2.5pt);
\draw[fill=black] (8+1,-0.22) circle (2.5pt);
\draw[fill=black] (8+2,-.3) circle (2.5pt);
\draw[fill=black] (8+2,-1.5) circle (2.5pt);
\draw[fill=black] (8+1.7,.5) circle (2.5pt);
\draw[fill=black] (8+1.8,1.46) circle (2.5pt);
\draw[fill=black] (8+2,0.12) circle (2.5pt);
\draw[fill=black] (8+1,0.35) circle (2.5pt);
\draw[fill=pink]  (8+1.7,0.5) circle (2.5pt);

\draw (8-1.6,0) ellipse (1cm and 2cm);
\draw (8+1.6,0) ellipse (1cm and 2cm);
\node at (4+0,0) {$\Rightarrow $};
\node at (8+0,-2.5) {Configuration $\gamma_{j-1}$};
\node at (8+1.3,1.1) {$\sigma_j$};

\node at (0,0.3) {$e_2$};
\node at (0,-0.5) {$e_3$};
\node at (+0,-3) {Gain from $i$-th move: ${\color{blue}+X_{e_1}}{\color{red}-X_{e_2}}{\color{red}-X_{e_3}}=$};
\node at (+0,-3.5) { $\left \langle (\cdots,+1,-1,-1,\cdots),(\cdots,X_{e_1},X_{e_2},X_{e_3},\cdots) \right \rangle$};
\node at (8+0,-3) {Gain from $j$-th move: ${\color{blue}+X_{e_1}}{\color{blue}+X_{e_2}}{\color{blue}+X_{e_3}}=$};
\node at (8+0,-3.5) {$\left \langle (\cdots,+1,+1,+1,\cdots),(\cdots,X_{e_1},X_{e_2},X_{e_3},\cdots) \right \rangle$};

\node at (-1.7,0.7) {$v$};
\node at (8+1.9,0.5) {$v$};

\end{tikzpicture}
\centering
\caption{An illustration of an arc $\alpha=(i,j)$. Adding the two corresponding improvement vectors\newline results in $(+2,0,0)$, since only the neighbor $v$ moved an odd number of times in the meantime.}
\end{figure}

Inspired by the definition of improvement vectors above, we define~the \emph{interior} of $\alpha$, denoted by $\interior(\alpha)$,
  as follows: $\interior(\alpha)$ contains all $k\in [i+1:j-1]$ such that 
  node $\sigma_k$ appears an odd number of times in $\sigma_{i+1},\ldots,\sigma_{j-1}$
  and $\sigma_k$ is adjacent to $\sigma_i=\vertex_H(\alpha)$ in the graph $G$. 
  
We say an arc $\alpha$ of $H$ is 
  \emph{improving} with respect to $\gamma$ and $X$ if
    the inner product of $\improvement{\gamma,H}{\alpha}$ and $X$
   is positive.
We say it is \emph{$\eps$-improving} for some parameter $\eps>0$ if the inner product is~in~$(0,\eps]$.
Furthermore, we say a set $C$ of arcs of $H$ 
  is improving (or $\eps$-improving)  with respect to $\gamma$ and $X$ if
  every arc in $C$ is  improving (or $\eps$-improving).
A sequence $H$ of moves is improving (or $\eps$-improving)
  with respect to $\gamma$ and $X$ if every arc of $H$ is improving (or $\eps$-improving).
Note that if $H$ is part of the sequence of moves generated by an execution of the FLIP algorithm then $H$~must~be improving, because every move must increase the weight of the cut and therefore every arc is improving. 
On the other hand,  
 if some move in $H$ increases the cut weight by more than $\epsilon$ then the same is true for the arc that has it as an endpoint and thus, 
 $H$ is not $\epsilon$-improving.

Let $C$ be a set of arcs of $H$. 
A key parameter of $C$ that will be used~to bound the probability of $C$ being $\eps$-improving
  (over the draw of the edge weights $X\sim \calX$) is the \emph{rank} of $C$ with respect to $\gamma$ and $H$, denoted by
  $\rank_{\gamma,H}(C)$: this is the rank of the $|E|\times |C|$ matrix that contains 
  improvement vectors $\improvement{\gamma,H}{\alpha}$ as its column vectors, one for
  each arc  $\alpha\in C$.
To give some intuition for this parameter, one may hope that for 
a fixed sequence of moves $H$ with $K$ arcs $$\Pr_{X\sim \calX}\Big[\hspace{0.04cm}\text{$H$ is $\eps$-improving}\hspace{0.04cm}\Big]=
\prod_{i=1}^{K}
\Pr_{X\sim \calX}\Big[\hspace{0.04cm}\text{$i$-th arc of $H$ is $\eps$-improving}\hspace{0.04cm}\Big].$$
However, since there could be improving steps that are strongly correlated
  (as an extreme situation there could be two arcs with exactly the same improvement vector), one may expect the product on the right hand side to hold 
 only for linearly independent  
 $\improvement{\gamma,H}{\alpha}$'s, introducing the necessity of analysis of the $\rank_{\gamma,H}(C)$.

An observation from \cite{etscheid2015smoothed}  is that $\rank_{\gamma,H}(C)$ is \emph{independent} of the choice of $\gamma$. 
{{\color{blue}} Indeed, a change of a node in the initial configuration would result in a change of sign on every row of the matrix that is incident with this node.}
So from now on we write it as $\rank_H(C)$.
To simplify our discussion on $\rank_H(C)$ later, we use
  $\improvement{H}{\alpha}$ to denote $\improvement{\gamma_0,H}{\alpha}$
  where $\gamma_0$ is the default initial configuration of $S(H)$ that maps every node in $S(H)$ to $-1$.
Then $\rank_H(C)$ is the rank of the matrix that consists of   $\improvement{H}{\alpha}$,
  $\alpha\in C$.
The next tool from \cite{etscheid2015smoothed} shows that the higher the rank is, the 
  less likely for $C$ to be $\eps$-improving.
  
\begin{lemma}[Lemma A.1 from \cite{etscheid2015smoothed}]\label{lem:slowly_increase_probability}
Let $\calX=(\calX_i:i\in [m])$ be a sequence of probability distributions in which each
  $\calX_i$ has density bounded from above by a parameter $\phi>0$.
Let $r_1,\ldots,r_k\in \mathbb{Z}^m$ be $k$ vectors that are linearly independent.
Then for any $\eps>0$, we have
\[
\Pr_{X\sim \calX}\Big[\hspace{0.05cm}\forall\hspace{0.05cm} i\in [k]: \langle r_i,X\rangle \in [0,\eps] \hspace{0.04cm}\Big] \le (\phi\epsilon)^k.
\]
\end{lemma}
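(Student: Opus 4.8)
The plan is to reduce the lemma to a single change-of-variables estimate, with the integrality of the $r_i$'s doing the essential work. First I would use linear independence to locate a well-conditioned square subsystem: since the $m\times k$ integer matrix $R$ whose columns are $r_1,\dots,r_k$ has (column, hence row) rank $k$, there is a set $S=\{j_1<\dots<j_k\}\subseteq[m]$ of $k$ coordinates for which the $k\times k$ submatrix $A$ formed by the rows of $R$ indexed by $S$ is nonsingular. The key point — and the only place the hypothesis $r_i\in\mathbb{Z}^m$ is used — is that $A$ then has a nonzero \emph{integer} determinant, so $|\det A|\ge 1$.

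Next I would condition on the coordinates of $X$ outside $S$. Writing $X=(X_S,X_{\overline S})$ with $X_S=(X_j)_{j\in S}$, independence of the $X_j$ means that conditioning on any value $X_{\overline S}=x_{\overline S}$ leaves $X_S$ with its original product distribution, whose joint density is bounded above by $\phi^k$. For fixed $x_{\overline S}$, the map $x_S\mapsto(\langle r_i,(x_S,x_{\overline S})\rangle)_{i\in[k]}$ is affine with linear part $A^\top$; that is, it equals $A^\top x_S+c$ for a vector $c=c(x_{\overline S})$ that depends only on $x_{\overline S}$. Hence, after conditioning, the event $\{\forall i\in[k]:\langle r_i,X\rangle\in[0,\eps]\}$ is exactly $\{X_S\in T\}$ with $T=(A^\top)^{-1}([0,\eps]^k-c)$, a translate of the image of the cube $[0,\eps]^k$ under the linear isomorphism $(A^\top)^{-1}$. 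Its Lebesgue volume is $\eps^k/|\det A|\le\eps^k$.

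Putting the two observations together, for every value $x_{\overline S}$ we get
\[
\Pr\big[\,X_S\in T\ \big|\ X_{\overline S}=x_{\overline S}\,\big]\ \le\ \phi^k\cdot\mathrm{vol}(T)\ \le\ (\phi\eps)^k,
\]
and averaging this bound over the draw of $X_{\overline S}$ gives $\Pr_{X\sim\calX}[\forall i\in[k]:\langle r_i,X\rangle\in[0,\eps]]\le(\phi\eps)^k$, as desired. The step I expect to be the crux is the determinant bound $|\det A|\ge 1$: the rest is routine (Fubini/conditioning for product measures, and the scaling of Lebesgue measure under invertible linear maps). It is worth noting that the lemma genuinely needs integrality — scaling a single $r_i$ by a small constant would otherwise make $\langle r_i,X\rangle$ arbitrarily concentrated near $0$ — so an honest proof has to, and does, invoke it precisely at this point.
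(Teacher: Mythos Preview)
Your argument is correct and is essentially the standard proof of this anti-concentration lemma: pick an invertible $k\times k$ integer minor, condition on the remaining coordinates, and use $|\det A|\ge 1$ together with the change-of-variables formula to bound the volume of the preimage of $[0,\eps]^k$. Note that the present paper does not actually supply a proof of this statement; it is quoted verbatim as Lemma~A.1 of \cite{etscheid2015smoothed} and used as a black box, so there is no in-paper argument to compare against. Your write-up would serve perfectly well as the missing proof.
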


\begin{corollary}\label{coro1}
Let $G=(V,E)$ be an undirected graph and let $\calX=(\calX_e:e\in E)$ be a sequence
  of distributions such that each $\calX_e$ has density bounded from above 
  by a parameter $\phi>0$.
Let $H$ be a sequence of moves, $\gamma$ be a configuration of $S(H)$, and $C$ be a set of arcs of $H$.
Then for any $\eps>0$, 
$$
\Pr_{X\sim \calX}\Big[\hspace{0.04cm}\text{$C$ is $\eps$-improving   with respect to $\gamma$ and $X$}\hspace{0.04cm}\Big]
\le (\phi \eps)^{\rank_H(C)}.
$$
\end{corollary}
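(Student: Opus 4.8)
The plan is to reduce Corollary \ref{coro1} directly to Lemma \ref{lem:slowly_increase_probability} by passing to a maximal linearly independent subset of the improvement vectors. First I would fix the configuration $\gamma$ of $S(H)$ and unwind the definitions: by construction, $C$ being $\eps$-improving with respect to $\gamma$ and $X$ means $\langle \improvement{\gamma,H}{\alpha},X\rangle\in(0,\eps]$ for every arc $\alpha\in C$, and in particular this inner product lies in the closed interval $[0,\eps]$. Set $r=\rank_H(C)$, which (by the observation of \cite{etscheid2015smoothed} recalled above) equals the rank of the $|E|\times|C|$ matrix whose columns are the vectors $\improvement{\gamma,H}{\alpha}$, $\alpha\in C$.

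Next I would invoke the definition of matrix rank to choose a subset $C'\subseteq C$ with $|C'|=r$ such that the vectors $\{\improvement{\gamma,H}{\alpha}:\alpha\in C'\}$ are linearly independent. The only point requiring attention is that Lemma \ref{lem:slowly_increase_probability} is stated for integer vectors: this is fine here, since by definition each $\improvement{\gamma,H}{\alpha}\in\{-2,0,2\}^{E}\subseteq\mathbb{Z}^{|E|}$. So the lemma applies to the family $\{\improvement{\gamma,H}{\alpha}:\alpha\in C'\}$ with $m=|E|$ and $k=r$.

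Finally, since the event ``$C$ is $\eps$-improving with respect to $\gamma$ and $X$'' is contained in the event ``$\langle\improvement{\gamma,H}{\alpha},X\rangle\in[0,\eps]$ for all $\alpha\in C'$'' (we discarded the arcs in $C\setminus C'$ and relaxed $(0,\eps]$ to $[0,\eps]$), monotonicity of probability together with Lemma \ref{lem:slowly_increase_probability} gives
$$\Pr_{X\sim\calX}\big[\hspace{0.04cm}\text{$C$ is $\eps$-improving w.r.t.\ $\gamma$ and $X$}\hspace{0.04cm}\big]\le \Pr_{X\sim\calX}\Big[\forall\hspace{0.04cm}\alpha\in C':\langle\improvement{\gamma,H}{\alpha},X\rangle\in[0,\eps]\Big]\le(\phi\eps)^{r}=(\phi\eps)^{\rank_H(C)},$$
which is the desired bound.

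I do not expect any genuine obstacle in this argument; it is a routine corollary. The two things to be careful about are exactly the two noted above: that the improvement vectors are integer-valued so the hypothesis of Lemma \ref{lem:slowly_increase_probability} is met, and that $(0,\eps]\subseteq[0,\eps]$ so the $\eps$-improving event is a sub-event of the one the lemma controls. The $\gamma$-independence of $\rank_H(C)$ is needed only to make the \emph{statement} well-posed in terms of $\rank_H(C)$ rather than $\rank_{\gamma,H}(C)$; since the proof fixes a single $\gamma$ throughout, it is not otherwise used.
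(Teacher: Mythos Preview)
Your proposal is correct and matches the paper's own treatment: the paper states Corollary~\ref{coro1} immediately after Lemma~\ref{lem:slowly_increase_probability} without a separate proof, treating it as the direct consequence you spell out. The points you flag (integrality of the improvement vectors and the inclusion $(0,\eps]\subseteq[0,\eps]$) are exactly the details needed to make the reduction go through.
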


Finally we say a sequence $H$ of moves is \emph{nontrivial} if
  the interior of every arc in $H$ is nonempty;
  $H$ is \emph{trivial} if at least one of its arcs has $\interior(\alpha)=\emptyset$.
It follows from definitions that $H$ cannot be improving 
  (with respect to any $\gamma$ and any $X$) if it is trivial.
  Since every sequence resulting from an execution of the FLIP algorithm is improving, it follows that it is also nontrivial.
We will only consider henceforth only nontrivial sequences (see Lemma \ref{lem:main}).

\section{Main Lemma and the Proof of Theorem \ref{maintheorem}}\label{sec:maintheorem}

We prove the following main technical lemma in Section \ref{sec:sqrtn-ub}:

\begin{lemma}\label{lem:main}
Let $G=(V,E)$ be an undirected graph over $n$ vertices.
Given any  nontrivial sequence $H$ of moves of length $5n$ and any configuration $\gamma$ of $S(H)$,
  there exist (i) a sequence $B$ of moves of length at most $5n$,
  (ii) a configuration $\tau$ of $S(B)$, and (iii) a set of arcs $Q$ of $B$ such that
\begin{enumerate}
    \item The rank of $Q$ in $B$ satisfies 
    \begin{equation}\label{aslarge}\frac{\rank_B(Q)}{\leng(B)}\ge \Omega\left(\frac{1}{\sqrt{\log n}}\right) \quad\ \  (\text{High-rank property});
    \end{equation}\item For every arc $\alpha\in Q$, there exists an arc $\alpha'$ of $H$ such that $$\improvement{\tau,B}{\alpha}=\improvement{\gamma,H}{\alpha'}.
    \quad\ \  (\text{Vector-Preservation property}).
    $$
\end{enumerate}
\end{lemma}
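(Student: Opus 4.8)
The plan is to follow the length-based classification of the arcs of $H$ outlined in Section~\ref{sec:newrank}. In every case $B$ will be obtained from $H$ either as a substring (Cases~1, 2 and~3.1 below) or as a subsequence of a substring (Case~3.2), and $\tau$ will be the restriction to $S(B)$ of the configuration reached by executing from $\gamma$ the moves of $H$ that precede the relevant window. With this choice the Vector-Preservation property is immediate whenever $B$ is a substring of $H$ (corresponding arcs then have literally the same improvement vector), so the real work is to establish the High-rank property in each case and, in Case~3.2, to check that the deletions preserve the improvement vectors of the arcs we retain.

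\emph{Combinatorial skeleton.} Since $\leng(H)=5n$, Lemma~\ref{lem:arc-length-ratio-lb} gives $H$ at least $4n$ arcs. Partition them into $\lceil\log n\rceil$ \emph{chunks}, where $\chunk_j$ collects the arcs of length in $[2^j,2^{j+1})$, and group consecutive chunks into $\sqrt{\log n}$ \emph{groups} $\group_1,\dots,\group_{\sqrt{\log n}}$, each a union of $\sqrt{\log n}$ consecutive chunks. By pigeonhole some $\chunk_{j^*}$ has $|\chunk_{j^*}|=\Omega(n/\log n)$; setting $\ell=2^{j^*+1}$ and applying Lemma~\ref{lem:arc_node_ratio_in_cover} with a window of length $2\ell$ yields a substring $H'$ of $H$ of length $2\ell$ containing a set $C$ of $\chunk_{j^*}$-arcs with $|C|=\Omega(\ell/\log n)$. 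Every arc of $C$ has length $\Theta(\leng(H'))$, so Lemma~\ref{firstlowerbound} gives $\rank_{H'}(C)=\Omega(|C|)$; the obstacle is that this only produces density $|C|/\leng(H')=\Omega(1/\log n)$, a factor $\sqrt{\log n}$ short of~\eqref{aslarge}.

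\emph{The cases.} The argument now splits. Suppose first that a constant fraction of the arcs of $H$ violate the ``good'' condition of Definition~\ref{def:good} (Case~1): then some group $\group_{i^*}$ concentrates $\Omega(n/\sqrt{\log n})$ bad arcs, and, letting $\ell''$ be the longest arc length in $\group_{i^*}$ and extracting via Lemma~\ref{lem:arc_node_ratio_in_cover} a length-$2\ell''$ substring $H''$ holding $\Omega(\ell''/\sqrt{\log n})$ of them, the structure witnessing badness supplies for each such arc an edge of $G$ on which its improvement vector is nonzero while those of the later (in a suitable order) kept arcs vanish; the matrix is triangular up to constants, so $B=H''$ with $Q$ the corresponding arcs satisfies $\rank_B(Q)/\leng(B)=\Omega(1/\sqrt{\log n})$. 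So assume a constant fraction of arcs are good, and take $C$ to lie inside the good arcs. If a constant fraction of arcs of $H$ additionally have a \emph{witness} --- an interior node adjacent in $G$ to the arc's node, occurring exactly once inside the arc, with its previous and next occurrences in $H$ far away (Case~2) --- the same group-and-window extraction applies, the witnesses now serving as the private coordinates that certify near-full rank. In the remaining case we return to $H'$ and the set $C$ of good $\chunk_{j^*}$-arcs, and call an arc $\beta$ of $H'$ sharing no endpoint with any arc of $C$ \emph{overlapping} if $(\vertex(\beta),\vertex(\alpha))\in E$ and $\beta,\alpha$ interleave for some $\alpha\in C$. If $\Omega(\ell/\sqrt{\log n})$ arcs of $H'$ are overlapping (Case~3.1), goodness of the arcs of $C$ makes their improvement vectors near-independent, so $B=H'$ with $Q$ these arcs gives density $\Omega((\ell/\sqrt{\log n})/\ell)=\Omega(1/\sqrt{\log n})$. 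Otherwise (Case~3.2) we delete, one pair at a time, both endpoints of every arc of $H'$ that shares no endpoint with and does not overlap any arc of $C$; each such deletion leaves $\interior(\alpha)$ --- equivalently, the parity of every $G$-neighbor of $\vertex(\alpha)$ strictly inside $\alpha$ --- unchanged for all $\alpha\in C$, hence preserves the improvement vector of every arc of $C$. The surviving subsequence $B$ retains only the $2|C|$ endpoints of arcs of $C$, the $O(\ell/\sqrt{\log n})$ endpoints of overlapping arcs, and the $O(\ell/\sqrt{\log n})$ witness positions, so $\leng(B)=O(|C|)+O(\ell/\sqrt{\log n})=O(\sqrt{\log n}\cdot|C|)$ using $|C|=\Omega(\ell/\log n)$; the arcs of $C$ survive as arcs of $B$ with their improvement vectors intact, so taking $Q$ to be their images gives $\rank_B(Q)=\rank_{H'}(C)=\Omega(|C|)$ and hence $\rank_B(Q)/\leng(B)=\Omega(1/\sqrt{\log n})$.

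\emph{Main obstacle.} The crux is the ``near-full rank'' claims of Cases~1, 2 and~3.1 together with the bookkeeping of Case~3.2. For the rank claims one must exhibit, for each kept arc, a coordinate (edge of $G$) that isolates it from the later kept arcs under a carefully chosen ordering (by left endpoint, or by length), which is exactly where the good condition, the witness condition, and the overlap pattern enter; in particular the ``far-away'' clause in the witness definition must interact correctly with the window length so that distinct witnesses yield distinct usable coordinates. For Case~3.2 one must prove the precise deletion invariant (a position lying outside every arc of $C$ and not both interior to and $G$-adjacent to any arc of $C$ can be removed without changing any parity relevant to $C$), check that deleting \emph{both} endpoints of a non-overlapping arc keeps the process iterable, and verify the count that at most $O(|C|)+O(\ell/\sqrt{\log n})$ positions are protected. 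Balancing the window lengths $\ell,\ell''$ and the per-case densities across these four cases is precisely what pins the exponent at $\sqrt{\log n}$.
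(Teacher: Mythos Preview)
Your high-level plan tracks the informal sketch of Section~\ref{sec:newrank}, but the case organisation you propose omits a mechanism that is essential for Case~3.2, and without it the length bound $\leng(B)=O(\sqrt{\log n}\,|C|)$ does not follow. The gap is your claim that only $O(\ell/\sqrt{\log n})$ ``witness positions'' survive in $H'$. You justify this by negating your Case~2, but your Case~2 is a \emph{global} statement about arcs of $H$; its failure says nothing about how many undeleteable single-occurrence positions sit inside the particular window $H'$. In the paper the corresponding bound comes from a different chain of choices. First one selects a group $\group_{i^*}$ satisfying \emph{two} density conditions (Lemma~\ref{lem:existence-interval}), the second asserting that $|\good{\group_{i^*}}|$ is a constant fraction of $|\group_{i^*-1}\cup\group_{i^*}\cup\group_{i^*+1}|$. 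The Case~2/3 split is then \emph{inside} this group, by whether most of its good arcs have long or short radius. In Case~3 every arc of $C$ is \emph{short-radius}, which forces every $k\in\interior(\alpha)$ for $\alpha\in C$ to have $\Radius{k}\le 2\Upper{\group_{i^*}}$; consequently each single-occurrence position in $H'$ that cannot be deleted is an endpoint of an arc in $\group_{i^*-1}\cup\group_{i^*}\cup\group_{i^*+1}$, i.e.\ lies in $P$. Finally the window $I$ is chosen via Lemma~\ref{lem:arc_node_ratio_in_cover} applied \emph{with this $P$}, yielding $|\restrict{C^*}{I}|/|P\cap I|=\Omega(1/\sqrt{\log n})$ (equation~\eqref{hahe1}), which is exactly the estimate $|P\cap I|=O(\sqrt{\log n}\,|C|)$ you need. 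Your single pigeonhole on chunks, done before any case split, gives neither the neighbour-density condition on $\group_{i^*}$ nor the $P$-aware window selection, so the $O(\ell/\sqrt{\log n})$ count is unsupported.

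Two smaller points. Your Case~1 argument is not the paper's: the paper passes from bad arcs to \emph{dual-bad} arcs, pigeonholes them into a group, and then observes that in a window of length $2\ell$ at most $O(1)$ dual-bad arcs can share the same node (each being adjacent to an arc of length $\ge\ell$), so Lemma~\ref{firstlowerbound} already gives near-full rank---no triangular matrix is built. And in Case~3.2 the deletion is more delicate than ``remove both endpoints of every non-overlapping arc'': one must treat each active node according to the parity of its number of occurrences in $H_I$ (Cases a, b, c$_1$--c$_3$ in Section~\ref{sec:case32}) and verify separately for each pattern that the parity of every $G$-neighbour inside every $\alpha\in C$ is preserved; your one-line invariant does not cover, e.g., nodes of odd multiplicity in $H_I$ none of whose arcs overlap $C$.
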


As discussed earlier in Section \ref{sec:newrank}, the new sequence
  $B$ in Lemma \ref{lem:main} is either a substring  or a subsequence of $H$.
When we pick $B$ to be a substring of $H$, say a substring that starts with
  the $i$-th move of $H$, the natural choice of $Q$ is the set of all arcs in $B$ (since we would like $\rank_B(Q)$ to be as large as possible in (\ref{aslarge}))
  and that of $\tau$ is  the configuration $\gamma_{i-1}$ of $S(B)$
  derived from $\gamma$ after making the first $i-1$ moves of $H$.
With these choices, the second condition of Lemma \ref{lem:main} is trivially satisfied and
  the main goal is to lowerbound the rank of  arcs in $B$.
This is indeed the proof strategy followed in all previous works \cite{etscheid2015smoothed,angel2017local,bibak2019improving}.
The key new idea of the paper is the use of subsequences of $H$ as $B$ instead of substrings of~$H$. 
While this gives us more flexibility
  in the choice of $B$ to overcome the $(\log n)$-barrier of \cite{etscheid2015smoothed} as sketched earlier in Section \ref{sec:newrank},
  one needs to be very careful when deleting moves and picking arcs
  to be included in $Q$ in order to satisfy the second condition.
  

We delay the proof of Lemma \ref{lem:main} to Section \ref{sec:sqrtn-ub}.
Instead, below we  use it to prove Theorem \ref{maintheorem}.

\maintheorem*
\begin{proof}[Proof of Theorem \ref{maintheorem} assuming Lemma \ref{lem:main}]
Let $c_1>0$ be a constant to be specified later and let 
$$\eps=\frac{1}{\phi\cdot n^{c_1\sqrt{\log n}}}.$$
We write $F$ to denote the following event on the draw of the
  weight vector $X\sim \calX$:
\begin{flushleft}\begin{quote}
Event $F$: For every sequence $B$  of length at most $5n$, every configuration
  $\tau$ of $S(B)$, and every set $Q$ of arcs of $B$ satisfying
  (where $a>0$ is the constant in Lemma \ref{lem:main})
\begin{equation}\label{hehe1}
\frac{\rank_B(Q)}{\leng(B)}\ge  \frac{a}{\sqrt{\log n}},
\end{equation}
$Q$ is \emph{not} $\eps$-improving with respect to $\tau$ and $X$.
\end{quote}\end{flushleft}
We break the proof into two steps.
First we prove that $F$ occurs 
  with probability at least $1-o_n(1)$ over the draw of the weight vector
   $X\sim \calX$.
Next we show that when $F$ occurs, any implementation of 
  the FLIP algorithm must terminate within $\phi\cdot n^{O(\sqrt{\log n})}$ many
  steps.

For the first step, we fix an $\ell\in [5n]$, a sequence $B$ of length $\ell$,
  a configuration $\tau$ of $S(B)$ and a set $Q$ of arcs of $B$ that satisfies (\ref{hehe1})
  (so the rank is at least $a\ell/\sqrt{\log n}$).
It follows from Corollary \ref{coro1} that the probability of $Q$ being $\eps$-improving
  with respect to $\tau$ and $X\sim \calX$ is at most $\smash{(\phi \eps)^{a\ell/\sqrt{\log n}}}$.
Applying a union bound (on $\ell$, $B$, $\tau$ and $Q$),
  $F$ does not occur with probability at most
$$
\Pr[F] \le \sum_{\ell\in [5n]} n^\ell\cdot 2^\ell\cdot 2^{\ell-1}\cdot (\phi\eps)^{\frac{a\ell}{\sqrt{\log n}}}  
\le \sum_{\ell\in [5n]} \left((4n)^{\frac{\sqrt{\log n}}{a}}\cdot \phi\eps\right)^{\frac{a\ell}{\sqrt{\log n}}}=o_n(1),
$$
where the factor $n^\ell2^\ell$ is an upper bound for the number of choices for $B$ of length $\ell$ and the initial configuration $\tau$ of $S(B)$, and the factor $2^{\ell-1}$ is because there can be no more than $\ell-1$  arcs in~a sequence of length $\ell$. The last equation follows by setting  $c_1$ in the choice of $\eps$ sufficiently large.

For the second step, first notice that when $F$ occurs, it follows
  from Lemma \ref{lem:main} that there exist no sequence $H$ of length $5n$ together
  with a configuration $\gamma$ of $S(H)$ so that 
  $H$ is  $\eps$-improving with
  respect to $\gamma$ and $X$.
Taking any implementation of the FLIP algorithm running on $G$ with weights $X$,
  this implies that the cut weight $\textsf{obj}$ must go up by at least $\eps$ for every $5n$ consecutive moves.
As the weight of any cut lies in $[-n^2,n^2]$, the number of 
  steps it takes to terminate is at most
$$
5n\cdot \frac{2n^2}{\eps} = \phi\cdot n^{O(\sqrt{\log n})}.
$$
This finishes the proof of Theorem \ref{maintheorem}.
\end{proof}

\section{Proof of the Main Lemma} 
\label{sec:sqrtn-ub}

We proceed now to the proof of Lemma \ref{lem:main}.
The plan of the section is as follows.
Given a nontrivial sequence $H=(\sigma_1,\ldots,\sigma_m)$ of moves of length $m=5n$, 
  we classify in Section \ref{arctypes} its arcs into good ones~and bad ones and 
  introduce the notion of the radius of an arc.
In Section \ref{sec:basiclemma} we prove a few basic lemmas that will be used in the proof
  of Lemma \ref{lem:main}.
Next we partition the set of all arcs of $H$~into chunks according to their lengths
  in Section \ref{sec:overview} and present an overview of cases of the proof of Lemma \ref{lem:main}.
There will be three cases, from Case $1$ to Case $3$, and they will be covered in
  Section \ref{sec:case1}, \ref{sec:case2} and \ref{sec:case3}, respectively.
For each case we choose $B$ to be either a substring or a subsequence of $H$.
Among all cases, there is only one occasion where we choose 
  $B$ to be a subsequence of $H$.~As discussed earlier in Section \ref{sec:maintheorem}, the second condition of 
  Lemma \ref{lem:main}
  is trivially satisfied when $B$ is a substring of $H$ (since we don't change the interior of any arc).
Therefore, there is no need to specify the configuration $\tau$
  in cases when $B$ is chosen to be a substring of $H$.

\subsection{Classification of arcs}\label{arctypes}

We start with a quick proof that there are many arcs in a long sequence of moves.

\begin{lemma}\label{lem:arc-length-ratio-lb}
For any sequence $B$ of moves, the number of arcs in $B$ is at least $\leng(B)-n$.
\end{lemma}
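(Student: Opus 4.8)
The plan is to count arcs via the standard fact that consecutive occurrences of a node contribute exactly one arc each time a node reappears. More precisely, fix a sequence $B=(\sigma_1,\ldots,\sigma_k)$ of length $k=\leng(B)$, and for each active node $v\in S(B)$ let $c_v\ge 1$ denote the number of times $v$ occurs in $B$. Every occurrence of $v$ except its first one is the right endpoint of exactly one arc with node $v$ (namely the arc joining it to the previous occurrence of $v$), and conversely every arc with node $v$ arises this way. Hence the number of arcs with node $v$ is exactly $c_v-1$, and the total number of arcs in $B$ is $\sum_{v\in S(B)}(c_v-1)$.

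Now I would finish the count as follows. Since $\sum_{v\in S(B)} c_v = k$ (each position of $B$ is one occurrence of some node) and $|S(B)|\le n$ (there are only $n$ vertices in $G$), we get
\[
\#\{\text{arcs in }B\} \;=\; \sum_{v\in S(B)}(c_v-1) \;=\; k - |S(B)| \;\ge\; k - n \;=\; \leng(B)-n.
\]
This is exactly the claimed bound.

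There is no real obstacle here: the only thing to be careful about is the bijection between ``non-first occurrences of $v$'' and ``arcs with node $v$'', which is immediate from the definition of an arc as a pair of \emph{consecutive} occurrences of the same node. One could also phrase the argument without naming the $c_v$'s: walk through $B$ from left to right, and charge each arc to its right endpoint; every index of $B$ receives at most one charge, and an index receives no charge precisely when it is the first occurrence of its node, of which there are $|S(B)|\le n$. Either way the bound $\leng(B)-n$ follows, and equality holds iff every active node occurs at least once, which is automatic, so in fact the count is exactly $\leng(B)-|S(B)|$.
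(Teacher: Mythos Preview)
Your proof is correct and follows essentially the same counting argument as the paper: both count arcs as $\sum_v(\text{occurrences of }v-1)$ and use that the total number of occurrences is $\leng(B)$ while the number of distinct nodes is at most $n$. If anything, your version summing over $S(B)$ is slightly cleaner than the paper's sum over all of $V$, since it avoids the $-1$ contributions from inactive nodes.
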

\begin{proof}
Denote by $\chi_v$ the number of occurrences of node $v$ in $B$. Then 
  the number of arcs in $B$ is $  \sum_{v\in V}(\chi_v - 1) \ge \sum_{v\in V}\chi_v - n = \leng(B) - n ~.$
\end{proof}

\begin{corollary}\label{cor:arc-length-ratio-lb-5n}
If $H$ is a sequence of moves of length $5n$, then it contains at least $4n$ arcs.
\end{corollary}

Next we give the definition of good and bad arcs in a sequence.
We start with some notation. Let $H=(\sigma_1,\ldots,\sigma_{m})$
  be a sequence of moves of length $m=5n$.
We use $\calA$ to denote the set of all arcs in $H$; by Corollary \ref{cor:arc-length-ratio-lb-5n}
  we have  $|\calA |\ge 4n$.

For each $k\in [m]$, we define the \emph{predecessor} $\pred_H(k)$ of
  the $k$-th move to be the largest index $i<k$ such that $\sigma_i=\sigma_k$ and set $\pred_H(k)$ to be $-\infty$ if no such index $i$ exists (i.e., $\pred_H(k)$ is the index of the previous occurrence of the same node $\sigma_k$).
Similarly we define the \emph{successor} $\succ_H(k)$ of the $k$-th move in $H$ to be the smallest index $j>k$ such that $\sigma_j=\sigma_k$ and set $\succ_H(k)$ to be $+\infty$ if no such $j$ exists. 
Next we define the radius of a move and an arc:


\begin{definition}[Radius]
For each $k\in [m]$ we define the \emph{radius}
  of the~$k$-th move in $H$ as $$\Radius{k}=\max\Big\{\leftradius{k},\rightradius{k}\Big\},$$  where
$$
\leftradius{k}=k-\pred_H(k)+1\quad\ \text{and}\ \quad
\rightradius{k}=\succ_H(k)-k+1.
$$


Given an arc $\alpha=(i,j)\in \calA$ of $H$,
  we define its radius as\footnote{Note that this is well defined because when $H$ is nontrivial, the $\interior(\alpha)$ of 
  every arc $\alpha$ is nonempty.}
$$\Radius{\alpha}= \max \Big\{ \Radius{k} \hspace{0.08cm} \Big| \hspace{0.08cm}  k \in \interior(\alpha) \Big\}.$$
\end{definition}

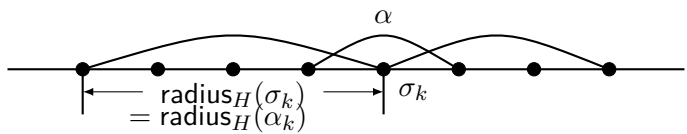
\begin{figure}[H]
\begin{tikzpicture}
\draw[fill=black] (-4,0) circle (2.5pt);
\draw[fill=black] (-3,0) circle (2.5pt);
\draw[fill=black] (-2,0) circle (2.5pt);
\draw[fill=black] (-1,0) circle (2.5pt);
\draw[fill=black] (0,0)  circle (2.5pt);
\draw[fill=black] (1,0)  circle (2.5pt);
\draw[fill=black] (2,0)  circle (2.5pt);
\draw[fill=black] (3,0) circle (2.5pt);

\node at (0,0.7) {$\alpha$};
\node at (0.4,-0.3) {$\sigma_k$};
\node at (-2,-0.35) {$\Radius{\sigma_k}$};
\node at (-2.2,-0.65) {$=\Radius{\alpha_k}$};
\draw[thick] (-5,0) -- (4,0);
\draw[thick] (-1,0) .. controls (-0,0.6) .. (1,0);
\draw[thick] (0,0) .. controls (-2,0.6) .. (-4,0);
\draw[thick] (0,0) .. controls (1.5,0.6) .. (3,0);
\draw[thick] (0,0) -- (0,-0.6);
\draw[thick] (-4,0) -- (-4,-0.6);
\draw[-latex] (-3.2,-0.3) -- (-4,-0.3);
\draw[-latex] (-0.8,-0.3) -- (0,-0.3);
\end{tikzpicture}
\centering
\caption{An example of radius}\label{exampleradius}
\end{figure}

It follows from the definition that if $\Radius{k}$ is not $+\infty$ 
  then both $\pred_H(k)$ and $\succ_H(k)$~are defined and then both
  $(\pred_H(k),k)$ and $(k,\succ_H(k))$ are arcs of $H$.
As an example of the radius of an arc $\alpha$, if there is a $k\in [i+1:j-1]$ such that 
  node $\sigma_k$ is adjacent to $\vertex_H(\alpha)$ in $G$ and 
  $\sigma_k$ does not appear anywhere else in $H$ (i.e., $\sigma_k\in S_1(H)$)  then $\Radius{\alpha}=+\infty$.
Another example is shown in Figure \ref{exampleradius}. Here $\Radius{\alpha}=\Radius{k}=5$
  assuming that $(\sigma_k,\sigma_i)$ is in $G$. 

Finally we define good arcs and bad arcs.

\begin{definition}[Good and bad arcs]\label{def:good}
We say an arc $\alpha=(i,j)$ of $H$ is \emph{good} if 
    \[
     \answer{\min\Big\{ \leftradius{i},\hspace{0.05cm} \rightradius{j}\Big\} \ge \frac{\leng(\alpha)}{2^{\lceil \sqrt{\log n}\rceil}} ~.}
    \]
    Otherwise we say that $\alpha$ is \emph{bad}.
Given a set of arcs $C\subseteq \mathcal{A}$ we write $\good{C}$ to denote the 
  set of good arcs in $C$ and $\bad{C}$ to denote the set of bad arcs in $C$.
\end{definition} 

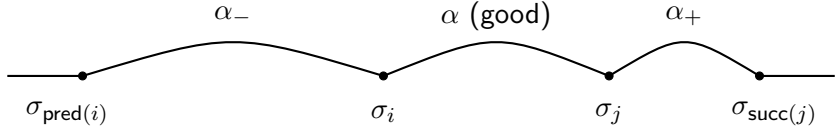
\begin{figure}[H]
\begin{tikzpicture}
\tikzstyle{every node}=[node font={\sf}]
\draw[fill=black] (-4,0) circle (1.5pt);
\draw[fill=black] (0,0)  circle (1.5pt);
\draw[fill=black] (3,0) circle (1.5pt);
\draw[fill=black] (5,0) circle (1.5pt);

\node at (-2,0.8) {$\alpha_-$};
\node at (1.5,0.8) {$\alpha$ (good)};
\node at (4,0.8) {$\alpha_+$};
\node at (0,-0.5) {$\sigma_i$};
\node at (3,-0.5) {$\sigma_j$};
\node at (-4.2,-0.5) {$\sigma_{\pred(i)}$};
\node at (5.2,-0.5) {$\sigma_{\succ(j)}$};
\draw[thick] (-5,0) -- (-4,0);
\draw[thick] (5,0) -- (6,0);
\draw[thick] (0,0) .. controls (-2,0.6) .. (-4,0);
\draw[thick] (0,0) .. controls (1.5,0.6) .. (3,0);
\draw[thick] (3,0) .. controls (4,0.6) .. (5,0);
\end{tikzpicture}
\centering
\caption{The arc $\alpha$ is a good arc iff for both sides of $\alpha$ the adjacent arcs have large length, i.e., both $\leng(\alpha_-)=\leng(\pred({i}),i)$ and $\leng(\alpha_+)=\leng(j,\succ(j))$ are at least $ {\leng(\alpha)}/{2^{\lceil \sqrt{\log n}\rceil}}$}\label{examplegood}
\end{figure}

Given a nonempty set of arcs $C\subseteq \calA$,
we write
$\Upper{C} = \max_{\alpha\in C} \{\leng(\alpha) \}$ and  
$$  \Appearances(C) = \Big\{k\in [m]: k = \leftt(\alpha) \text{ or } k = \rightt(\alpha) \text{ for some } \alpha \in C \Big\}.
$$

\subsection{Basic lemmas}\label{sec:basiclemma}

We start with a lemma that will be used to bound the
  rank of a set of arcs in a sequence.
It~is essentially the same as Lemma 3.2 in \cite{etscheid2015smoothed}, which connects the rank
  of a set $C$ of arcs with the number of distinct nodes
  that appear as endpoints of arcs in $C$.
We include a proof for completeness.
\begin{lemma}\label{firstlowerbound}
Let $C$ be a set of arcs of a nontrivial sequence $H$ such that 
  nodes of arcs in $C$ are all distinct.
Then we have 
$ \rank_H(C)\ge |C|/2 .$ 
\end{lemma}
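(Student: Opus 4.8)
The plan is to exhibit an explicit set of edges $E' \subseteq E$ indexed by arcs of $C$ together with a sign pattern, and argue that the submatrix of improvement vectors restricted to the rows in $E'$ has rank at least $|C|/2$; since rank can only decrease under deleting rows, this yields $\rank_H(C) \ge |C|/2$. Concretely, for each arc $\alpha=(i,j)\in C$ we know $H$ is nontrivial, so $\interior(\alpha)\ne\emptyset$: there is some $k_\alpha \in [i+1:j-1]$ with $\sigma_{k_\alpha}$ adjacent to $\vertex_H(\alpha)$ and appearing an odd number of times strictly between $i$ and $j$. Let $e_\alpha = (\vertex_H(\alpha),\sigma_{k_\alpha}) \in E$; by definition of the improvement vector, the entry of $\improvement{H}{\alpha}$ indexed by $e_\alpha$ is nonzero (equal to $\pm 2$).

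The key combinatorial observation is that, because the nodes $\vertex_H(\alpha)$ for $\alpha\in C$ are all \emph{distinct}, each edge $e_\alpha$ can be "claimed" by at most two arcs of $C$: $e_\alpha$ has two endpoints, and an arc $\beta\in C$ can only have $e_\beta = e_\alpha$ if $\vertex_H(\beta)$ is one of those two endpoints — and the map $\beta \mapsto \vertex_H(\beta)$ is injective on $C$. Hence the assignment $\alpha \mapsto e_\alpha$ is at most $2$-to-$1$, so we may greedily select a subset $C' \subseteq C$ with $|C'| \ge |C|/2$ on which $\alpha \mapsto e_\alpha$ is injective. Now restrict the improvement-vector matrix to the rows indexed by $\{e_\alpha : \alpha \in C'\}$ and the columns indexed by $C'$. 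For each $\alpha \in C'$ we need the crucial structural fact (this is where the definition of improvement vectors and the adjacency condition are used): the row $e_\alpha$ has a nonzero entry in column $\alpha$, and I must control which \emph{other} columns $\beta \in C'$ have a nonzero entry in row $e_\alpha$. The entry of $\improvement{H}{\beta}$ at $e_\alpha = (\vertex_H(\alpha), \sigma_{k_\alpha})$ is nonzero only if $\vertex_H(\beta)$ is an endpoint of $e_\alpha$, i.e. $\vertex_H(\beta) \in \{\vertex_H(\alpha), \sigma_{k_\alpha}\}$; by distinctness and injectivity of $\alpha\mapsto e_\alpha$ on $C'$, the only way this happens with $\beta \ne \alpha$ is $\vertex_H(\beta) = \sigma_{k_\alpha}$, which pins down $\beta$ uniquely (at most one such $\beta$). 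So each row of the restricted matrix has at most two nonzero entries.

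To finish, I would order $C'$ so that the matrix becomes, after a suitable permutation, block-triangular or diagonally dominant in a way that forces full column rank $|C'|$ — e.g., order the arcs $\alpha_1,\dots,\alpha_t$ of $C'$ so that whenever row $e_{\alpha_p}$ has its (unique) off-diagonal nonzero in column $\alpha_q$, we have $q > p$ (this needs a short argument that the "blocking" relation, associating to $\alpha$ the unique $\beta$ with $\vertex_H(\beta) = \sigma_{k_\alpha}$, can be acyclically ordered — possibly after further halving the set, or by a direct argument that the digraph on $C'$ with these edges has a topological order since each vertex has out-degree $\le 1$, so its components are "rho-shaped" and we can break the one cycle per component by dropping one arc, losing at most another constant factor; I'd first try to avoid the extra loss by a cleverer choice of $k_\alpha$ or by noting that a matrix whose associated digraph has out-degree and in-degree $\le 1$ is a permuted sum of a nonsingular diagonal and a permutation-like perturbation). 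The upper-triangular structure with nonzero diagonal entries $\pm 2$ then gives $\rank \ge |C'| \ge |C|/2$, hence $\rank_H(C) \ge |C|/2$.

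\textbf{Main obstacle.} The delicate point is the final rank argument: having at most two nonzeros per row (one "diagonal", one "off-diagonal") does not by itself guarantee full column rank — a $2\times 2$ all-ones block is rank $1$. I expect the real work is to choose the witnesses $k_\alpha$ and an ordering of $C'$ so that the off-diagonal nonzeros point "forward", making the matrix triangular, and to verify that this can always be arranged (or that the obstruction — a directed cycle in the blocking digraph — can be destroyed at the cost of only a constant factor, consistent with the stated bound $|C|/2$). This is exactly the content of Lemma 3.2 of \cite{etscheid2015smoothed}, so I would essentially reproduce that argument.
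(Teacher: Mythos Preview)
Your overall plan is right—pick a witness edge $e_\alpha = (\vertex_H(\alpha),\sigma_{k_\alpha})$ for each arc and extract a large subset $C'$ on which the witness structure forces linear independence. The gap is in how you choose $C'$: making the map $\alpha \mapsto e_\alpha$ injective is not the condition that controls the rank, and it is exactly why you end up fighting cycles in the blocking digraph. Your parenthetical about ``out-degree and in-degree $\le 1$'' is also incorrect: two different arcs $\alpha_1,\alpha_2\in C'$ can both have $\sigma_{k_{\alpha_1}}=\sigma_{k_{\alpha_2}}=\vertex_H(\beta)$ while $e_{\alpha_1}\ne e_{\alpha_2}$ (their first coordinates differ), so in-degree can be large. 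Breaking one vertex per cycle in an out-degree-$\le 1$ digraph then only gets you to roughly $|C|/3$ or $|C|/4$, not the claimed $|C|/2$.

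The paper's greedy sidesteps your obstacle by selecting $C'$ differently. Instead of enforcing distinct witness edges, it does: pick any remaining $\alpha\in C$, add it to $C'$, and \emph{delete from $C$ the (at most one) arc whose node equals $v_\alpha:=\sigma_{k_\alpha}$}. Distinctness of nodes means each addition kills at most one other arc, so $|C'|\ge |C|/2$. The payoff is structural: if some $\beta\in C'$ has $\vertex_H(\beta)=v_\alpha$, then $\beta$ must have been added \emph{before} $\alpha$ (otherwise it would have been deleted at the step when $\alpha$ entered $C'$). Hence, ordering $C'$ by time of addition, the submatrix with rows $e_{\alpha}$ and columns $\alpha$ is lower-triangular with nonzero diagonal, giving full rank $|C'|\ge |C|/2$ with no cycle-breaking and no lost constant. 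The one-line fix to your argument is exactly this: don't select for edge-injectivity; select so that each chosen arc's potential blocker has already been removed from the pool.
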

\begin{proof} For every arc $\alpha \in C$, let
$v_{\alpha}$ be an arbitrary node in $\interior(\alpha)$
(it is nonempty since $H$ is nontrivial).
Construct a subset $C' \subseteq C$ of arcs as follows.
Start with $C' = \emptyset$. Pick an arbitrary arc $\alpha \in C$, add $\alpha$ to $C'$, remove from $C$ any arc whose node is $v_{\alpha}$ (if there is such
an arc), and repeat until $C$ becomes empty.
Since arcs of $C$ have distinct nodes,
adding an arc to $C'$ causes at most one other arc to be deleted from $C$. Hence, $|C'| \geq |C|/2$.

To see that the improvement vectors of arcs in $C'$ are 
  linearly independent, note for each $\alpha\in C'$
  that the $(\vertex_H(\alpha),v_\alpha)$-entry of $\improvement{H}{\alpha}$ is nonzero but
  the same entry of every other improvement vector 
  from $C'$ is $0$.
Therefore, the rank of $C'$ is full and
  $\rank(C)\ge \rank(C')\ge |C|/2$.
\end{proof}




We need some notation for the next lemma.
We say an arc $\alpha\in \calA$
  is \emph{contained} in an interval $I \subseteq [m]$ if both $\leftt(\alpha)$ and $\rightt(\alpha)$
  lie in $I$. For a set of arcs $C\subseteq \calA$ of $H$, 
 we write $\restrict{C}{I}$ to denote 
  the set of $\alpha\in C$ contained in $I$.
Intuitively $\restrict{C}{I}$ is the set of arcs in 
$C$
  that can be inherited by the substring $H_I$ of $H$.
The next lemma follows from (essentially) an averaging argument; the proof can be found in Appendix \ref{sec:app1}.

\begin{lemma}\label{lem:arc_node_ratio_in_cover}
Let $H$ be a sequence of moves of length $m$, $P\subseteq [m]$ be a nonempty
  set of indices and $C\subseteq \calA$ be a 
  nonempty set of arcs of $H$ such that $\leng(\alpha)\le \ell$ for all $\alpha\in C$
  for some positive integer parameter $\ell\le m/2$.
%
Then there exists an interval $I=[i:i+2\ell-1]\subseteq [m]$ of length $2\ell$ such that
    \begin{equation}\label{hehe2}
    \frac{\left| \restrict{C}{I} \right|}{|C|}  \ge \max\left\{    \frac{2\ell}{16m},\hspace{0.04cm}
       \frac{|P\cap I|}{4|P|} \right\}.
    \end{equation}
     Thus, it holds that 
    \begin{equation} \label{transfer-lower-bounds}
        \frac{\left| \restrict{C}{I} \right|}{\leng(I)}\ge \Omega\left(\frac{|C|}{\leng(H)}\right) \quad \text{ and } \quad \frac{\left| \restrict{C}{I} \right|}{|P\cap I|}\ge \Omega\left(\frac{|C|}{|P |}\right).
    \end{equation}
\end{lemma}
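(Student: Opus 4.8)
\textbf{Proof proposal for Lemma \ref{lem:arc_node_ratio_in_cover}.}

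The plan is to establish the two lower bounds on $|\restrict{C}{I}|$ separately by averaging over a family of candidate windows, and then take the maximum. Partition $[m]$ into $\lceil m/\ell\rceil$ consecutive blocks of length $\ell$ (the last one possibly shorter). For each block $B_t$, let $I_t$ be the interval of length $2\ell$ consisting of $B_t$ together with the block immediately to its right (for the last block, just take the last interval of length $2\ell$ in $[m]$, so that every $I_t\subseteq[m]$). The key observation is that \emph{every} arc $\alpha\in C$, having $\leng(\alpha)\le\ell$, is contained in at least one of these $I_t$: if $\leftt(\alpha)$ falls in block $B_t$, then $\rightt(\alpha)\le\leftt(\alpha)+\ell-1$ lies in $B_t\cup B_{t+1}\subseteq I_t$. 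Hence $\sum_t |\restrict{C}{I_t}|\ge |C|$, and since there are $\lceil m/\ell\rceil \le 2m/\ell$ windows, some window $I=I_{t^\star}$ satisfies $|\restrict{C}{I}|\ge |C|\cdot \ell/(2m)$, giving the first term $2\ell/(16m)$ up to the constant.

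For the second term, I would run a similar averaging argument but weighted by $|P\cap I_t|$ in a way that couples the two quantities. Concretely: since each $\alpha\in C$ is contained in at least one $I_t$, we also have $\sum_t |\restrict{C}{I_t}|\ge |C|$ as above; separately, since each index of $[m]$ lies in at most two of the $I_t$'s (each point is in its own block and the block to its left), $\sum_t |P\cap I_t|\le 2|P|$. Now consider the ratio: there must exist a $t$ with $|\restrict{C}{I_t}|/|P\cap I_t| \ge \big(\sum_t|\restrict{C}{I_t}|\big)/\big(\sum_t |P\cap I_t|\big) \ge |C|/(2|P|)$ — but this only shows the ratio is large for \emph{some} window, not that the \emph{same} window also has $|\restrict{C}{I_t}|$ large in the absolute sense needed for the first bound. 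To get a single $I$ achieving both, the cleanest route is to argue the maximum of the two expressions on the right of (\ref{hehe2}) is what we want: pick $I$ maximizing $|\restrict{C}{I}|$ among the $I_t$; it satisfies the first bound automatically, and then a short case analysis (if $|P\cap I|$ is small relative to $|P|$ the second bound is vacuous-ish, otherwise the abundance of $P$-points in $I$ combined with the block structure forces enough arcs) closes it. Alternatively — and this is likely the intended proof — one shows the stronger pointwise statement that for the specific maximizing window the inequality $|\restrict{C}{I}|/|C| \ge |P\cap I|/(4|P|)$ holds by a direct double-counting: sum $|\restrict{C}{I_t}|$ against $|P\cap I_t|$, using that each arc contributes to $\ge 1$ window and each $P$-point to $\le 2$ windows.

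The derived inequalities (\ref{transfer-lower-bounds}) are then immediate: $\leng(I)=2\ell$ and the first bound gives $|\restrict{C}{I}|/\leng(I) \ge |C|/(16m) = \Omega(|C|/\leng(H))$; the second bound gives $|\restrict{C}{I}|/|P\cap I| \ge |C|/(4|P|) = \Omega(|C|/|P|)$ directly (with the convention that the ratio is $+\infty$, hence the bound trivially holds, when $P\cap I=\emptyset$). The main obstacle I anticipate is the coupling issue flagged above — namely producing \emph{one} interval $I$ that simultaneously witnesses both lower bounds rather than two different intervals each good for one bound. I expect this is handled by the $\max$ in the statement together with choosing $I$ to maximize $|\restrict{C}{I}|$ and then doing a short two-case argument, or by a single weighted double-counting; either way it is a routine but slightly delicate averaging manipulation, which is presumably why the authors defer it to Appendix \ref{sec:app1}.
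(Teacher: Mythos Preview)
Your block-and-window setup is essentially the paper's cover $\Cover{\ell}$ (they use even-offset windows, odd-offset windows, and one boundary window, with each point in at most three windows and each arc of length $\le\ell$ in at least one), and your averaging argument for the first bound $|\restrict{C}{I}|/|C|\ge \Omega(\ell/m)$ is correct and matches theirs.

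The genuine gap is exactly the coupling issue you flag but do not resolve. Neither of your two suggested fixes works as stated: maximizing $|\restrict{C}{I}|$ over windows gives no control on $|P\cap I|$, and your ``short case analysis'' is not an argument; the mediant inequality gives a window with large ratio $|\restrict{C}{I}|/|P\cap I|$ but possibly with $|\restrict{C}{I}|$ tiny, so it need not satisfy the first bound. The paper's clean trick is a two-step filter. First, call a window $I$ \emph{large} if $|\restrict{C}{I}|/|C|\ge \ell/(8m)$; since there are at most $2m/\ell+1$ windows, the small windows together account for at most $1/4$ of $|C|$, so $\sum_{I\ \text{large}} |\restrict{C}{I}|/|C| > 3/4$. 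Second, if every large window failed the $P$-bound, i.e.\ $|\restrict{C}{I}|/|C| < |P\cap I|/(4|P|)$ for all large $I$, then summing gives
\[
\frac{3}{4} \;<\; \sum_{I\ \text{large}} \frac{|\restrict{C}{I}|}{|C|} \;<\; \sum_{I\ \text{large}} \frac{|P\cap I|}{4|P|} \;\le\; \frac{3|P|}{4|P|} \;=\; \frac{3}{4},
\]
using that each index of $P$ lies in at most three windows --- a contradiction. So some large window satisfies both inequalities simultaneously. This thresholding-then-contradiction step is the missing idea in your proposal.
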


\subsection{Overview of cases}\label{sec:overview}

We now begin to prove Lemma \ref{lem:main}.
Let $G=(V,E)$ be a graph over $n$ vertices, $H=(\sigma_1,\ldots,\sigma_m)$
  be a nontrivial sequence of moves of length $m=5n$, and $\gamma$ be a configuration of $S(H)$.
Let $\calA$ be the set of all arcs of $H$ (with $|\calA|\ge 4n$ by Corollary \ref{coro1}).

We first partition $\calA$ into $s=\lceil \log m\rceil=\Theta(\log n)$ \emph{chunks}
  $\chunk_1,\ldots,\chunk_s$ according to lengths:\vspace{0.03cm}
    \[\chunk_i=\big\{\alpha\in \arcs:\leng(\alpha)\in [2^{i-1}+1:2^i] \big\}.\vspace{0.03cm} \]
Letting 
  $w=\lceil \sqrt{\log n}\rceil$ and~$t=$ $\left\lceil s/w\right\rceil$ so both $w$ and $t$ are $\Theta(\sqrt{\log n}),$
  next we assign these chunks to $t$ \emph{groups} $\group_1,\ldots,\group_t$:
$
\group_i=\chunk_{(i-1)w+1}\cup  \cdots \cup \chunk_{iw},
$
where the last group $\group_t$ may contain less than $w$ chunks.
It is easy to check from the definition of \emph{chunks} and \emph{groups} the following fact:
\begin{fact}\label{fact:bounds-groups-chunks}
If $P$ is a set of arcs from the same chunk $\chunk_j$, then for any arc $\alpha \in P$, it holds that 
\[\leng(\alpha)\le \Upper{P}\le 2\cdot\leng(\alpha)\]
If $Q$ is a set of arcs from the same group $\group_i$, then for any arc $\beta \in Q$, it holds that 
\[\leng(\beta)\le \Upper{Q} \le 2^w\cdot \leng(\beta) \]
\end{fact}

For each  chunk $\chunk_j$ in $\group_i$, we further partition it into two sets 
  $\chunk_j=\mathcal{L}_j\cup \mathcal{S}_j$ based on the radius  (when $\group_i=\emptyset$, $\mathcal{L}_j$ and $\mathcal{S}_j$
  are trivially empty even though $\Upper{\group_i}$ below is not defined):
    \begin{align*}\mathcal{L}_j &= \big\{\alpha\in \chunk_j : \Radius{\alpha} > 2\cdot \Upper{\group_{i}}\big\}\quad\text{and}\\[0.8ex]
    \mathcal{S}_j &= \big\{\alpha\in \chunk_j : \Radius{\alpha} \le 2\cdot \Upper{\group_{i}}\big\}.
    \end{align*}
Here $\mathcal{L}_j$ (the \emph{long-radius} arcs in $\chunk_j$) contains all arcs $\alpha\in \chunk_j$ such that there exists a $k\in \interior(\alpha)$ (where $\sigma_k$ is adjacent to $\vertex_H(\alpha)$ in $G$ and $\sigma_k$ 
     occurs an odd number of times inside $\alpha$) 
     such that $\Radius{k}$ is larger\footnote{Remember that this could be the case when for example either $\pred_H(k)=-\infty$  or $\succ_H(k)=+\infty$.} than $ 2\cdot \Upper{\group_{i}}$. The set $\mathcal{S}_j$ (the \emph{short-radius} arcs in $\chunk_j$) contains all arcs  $\alpha\in \chunk_j$ such that every $k\in \interior(\alpha)$ has predecessor and successor within $2\cdot \Upper{\group_{i}}$.

\begin{figure}[ h!]
\begin{tikzpicture}
\draw[thick] (-2.5,-1) -- (-2.5,1);
\draw[thick] (-7.5,-1) -- (-7.5,1);
\draw[thick] (-8.5,0) -- (-1.5,0);
\draw[thick, draw=WildStrawberry] (-6,0) .. controls (-5,0.5) .. (-4,0);
\draw[thick, draw=RoyalBlue, dashed] (-3,0) .. controls (-4,0.7) .. (-5,0);
\draw[thick, draw=RoyalBlue, dashed] (-8,0) .. controls (-6.5,0.7) .. (-5,0);

\draw[fill=RoyalBlue] (-8,0) circle (2.5pt);
\draw[fill=WildStrawberry] (-6,0) circle (2.5pt);
\draw[fill=WildStrawberry] (-4,0) circle (2.5pt);
\draw[fill=RoyalBlue] (-3,0) circle (2.5pt);
\draw[fill=RoyalBlue] (-5,0) circle (2.5pt);
\node at (-2.8,1.5) {$2\Upper{\group_i}$};

\draw[thick] (8-2.5,-1) -- (8-2.5,1);
\draw[thick] (8-7.5,-1) -- (8-7.5,1);
\draw[thick] (8-8.5,0) -- (8-1.5,0);
\draw[thick, draw=WildStrawberry] (8-6,0) .. controls (8-5,0.5) .. (8-4,0);
\draw[thick, draw=RoyalBlue, dashed] (8-7.2,0) .. controls (8-6.1,0.7) .. (8-5,0);
\draw[thick, draw=RoyalBlue, dashed] (8-2.8,0) .. controls (8-3.9,0.7) .. (8-5,0);

\draw[fill=RoyalBlue] (8-7.2,0) circle (2.5pt);
\draw[fill=WildStrawberry] (8-6,0) circle (2.5pt);
\draw[fill=WildStrawberry] (8-4,0) circle (2.5pt);
\draw[fill=RoyalBlue] (8-2.8,0) circle (2.5pt);
\draw[fill=RoyalBlue] (8-5,0) circle (2.5pt);
\node at (8-2.8,1.5) {$2\Upper{\group_i}$};

\end{tikzpicture}
\vspace{0.25cm}
\centering
\caption{A simplified example of \longradius (left) and \shortradius (right) arcs with only one node in their interior.}
\end{figure}
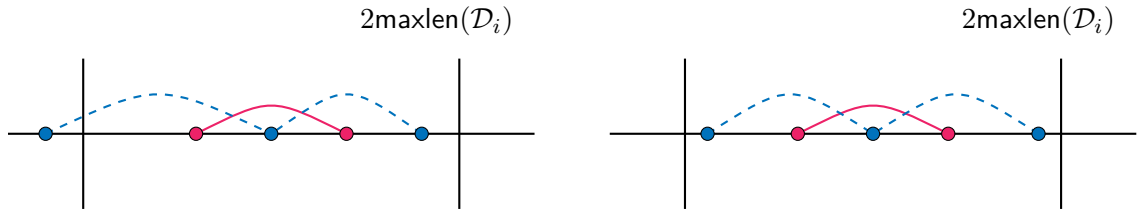

\clearpage 
Recall that our goal is to find a sequence $B$, a set $Q$ of arcs of $B$, and
  a configuration $\tau$ of $S(B)$ that satisfy both conditions of Lemma \ref{lem:main}.
The first case we handle in Section \ref{sec:case1} is when 
\begin{equation}\label{eq:case1}
{ \textbf{Case 1:}}\quad \big|\bad{\calA}\big| \ge 0.01\cdot |\calA|.
\end{equation}
Otherwise, $|\good{\calA}|\ge 0.99\cdot |\calA|$ and
  we pick a group $\group_{i^*}$ according to the following lemma:

\begin{lemma}\label{lem:existence-interval}
Assume that $|\good{\calA}|\ge 0.99\cdot |\calA|$. Then there exists a group $\group_{i^*}$,
  for some $i^*\in [t]$, that satisfies the following two conditions 
  (below we assume $\group_0=\group_{t+1}=\emptyset$ by default):
  \begin{equation*}  \frac{|\good{\group_{i^*}}|}{|\calA|} \ge \frac{1}{2t}\quad\ \text{and}\ \quad
    \frac{|\good{\group_{i^*}}|}{|\group_{i^*-1} \cup \group_{i^*}\cup \group_{i^*+1}|} \ge \frac{1}{7}.
    \end{equation*}
\end{lemma}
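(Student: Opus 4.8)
The plan is a two-level counting (pigeonhole plus double counting) argument. First I would isolate the groups that are individually ``heavy'' in the sense of already witnessing the first displayed inequality, show that collectively they still carry almost half of all good arcs, and then argue that at least one heavy group must also satisfy the second inequality, since otherwise the heavy groups could not carry that many good arcs.

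Concretely, first I would call a group $\group_i$ \emph{heavy} if $|\good{\group_i}| \ge |\calA|/(2t)$, so that a heavy group automatically satisfies the first condition $|\good{\group_{i^*}}|/|\calA| \ge 1/(2t)$. Since the groups $\group_1,\dots,\group_t$ partition $\calA$, we have $|\good{\calA}| = \sum_{i\in[t]} |\good{\group_i}|$, and each of the (at most $t$) non-heavy groups contributes strictly fewer than $|\calA|/(2t)$ good arcs, so the non-heavy groups together contain fewer than $t\cdot |\calA|/(2t) = |\calA|/2$ good arcs. Combining this with the hypothesis $|\good{\calA}| \ge 0.99\,|\calA|$, the heavy groups together contain strictly more than $0.99\,|\calA| - |\calA|/2 = 0.49\,|\calA|$ good arcs; in particular at least one heavy group exists.

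Next I would establish the second condition by contradiction. Suppose every heavy group $\group_i$ fails it, i.e.\ $|\good{\group_i}| < \tfrac17\big(|\group_{i-1}| + |\group_i| + |\group_{i+1}|\big)$, where we use the convention $\group_0 = \group_{t+1} = \emptyset$ and the disjointness of the groups so that the right-hand side equals $\tfrac17|\group_{i-1}\cup\group_i\cup\group_{i+1}|$. Summing this inequality over all heavy indices and noting that each group $\group_j$ appears in at most three of the triples $(\group_{i-1},\group_i,\group_{i+1})$ over $i\in[t]$, the good arcs in heavy groups number strictly fewer than $\tfrac17\cdot 3\sum_{j\in[t]}|\group_j| = \tfrac37\,|\calA|$. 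Since $\tfrac37 < 0.49$, this contradicts the lower bound from the previous paragraph, so some heavy group $\group_{i^*}$ satisfies both conditions.

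I do not expect a genuine obstacle here; the argument is elementary. The only points needing care are the bookkeeping with the boundary convention $\group_0 = \group_{t+1} = \emptyset$ (which only shrinks the denominator in the second inequality and hence is harmless), the fact that the groups form a partition of $\calA$ (used both to split $|\good{\calA}|$ as a sum and to bound $\sum_j |\group_j| = |\calA|$), and verifying the numerical slack $0.99 - \tfrac12 = 0.49 > \tfrac37$, which is precisely why the constant $0.99$ rather than some smaller fraction is assumed in the hypothesis.
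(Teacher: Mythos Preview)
Your proposal is correct and takes essentially the same approach as the paper: both arguments bound the total good arcs in groups failing the first condition by $|\calA|/2$ and those failing the second by $\tfrac{3}{7}|\calA|$ via the ``each $\group_j$ appears in at most three triples'' double count, then use $0.99 > \tfrac12 + \tfrac37$ to conclude. The only cosmetic difference is that the paper applies a union bound directly over groups violating either condition, whereas you first restrict to heavy groups and then derive a contradiction---the arithmetic is the same.
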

\begin{proof}
    For $\group_i$'s that violate the first inequality, the sum of their $|\good{\group_i}|$ is at most 
$ 
t\cdot (|\calA|/2t)$  $\le  {|\calA|}/{2}.
$ 
    For $\group_i$'s that violate the second inequality,  
      the  sum of their $|\good{\group_i}|$ is at most
\[
\sum_{i\in [t]} \frac{1}{7}\cdot \big|\group_{i+1}\cup \group_{i}\cup \group_{i-1}\big|
\le \frac{3}{7}\sum_{i\in [t]} |\group_i|=\frac{3|\calA|}{7}.
    \] 
Thus, the sum of $|\good{D}|$ where $D$ violates at least one inequality is at most $(\tfrac{1}{2}+\tfrac{3}{7})\calA<0.93 |\calA| $.
Given $|\good{\calA}|\ge 0.99\cdot |\calA|$,  there must exist a group $\group_{i^*}$ that satisfies both conditions.
\end{proof}
Fixing a group $\group_{i^*}$ that satisfies Lemma \ref{lem:existence-interval}
  and letting $\mathcal{L}=\cup_{\chunk_j\subseteq \group_{i^*}}\mathcal{L}_j$ and $\mathcal{S}=\cup_{\chunk_j\subseteq \group_{i^*}}\mathcal{S}_j$,
  we next split into two cases, depending on whether the majority of good arcs in $\group_{i^*}$ 
  are long-radius or short-radius.
We handle the second case in Section \ref{sec:case2} when
\begin{equation}\label{hehe10}
{ \textbf{Case 2:}}\quad
\big|\hspace{0.03cm}\good{\mathcal{L}} \big|\ge  0.5\cdot \big|\hspace{0.03cm}\good{\group_{i^*}}\big|
\end{equation}
and we handle the third case in Section \ref{sec:case3} when
\begin{equation}
{ \textbf{Case 3:}}\quad
\label{finalcase}\big|\hspace{0.03cm}\good{\mathcal{S}}\big|\ge  0.5\cdot \big|\hspace{0.03cm}\good{\group_{i^*}}\big|.
\end{equation}

\subsection{Case 1: Many arcs in $\calA$ are bad}\label{sec:case1}

\answer{We say an arc $\alpha=(i,j)$ of $H$ is \emph{dual-bad} if either $\pred_H(i)$ exists and $\leftradius{i}>\leng(\alpha)\cdot 2^w$, or 
 $\succ_H(j)$ exists and $\rightradius{j}>\leng(\alpha)\cdot 2^w$ where we recall that $w=\lceil \sqrt{\log n}\rceil$.
 Given a set of arcs $C\subseteq \mathcal{A}$, we write 
  $\dualbad{C}$ to denote the set of dual-bad arcs in $C$.}

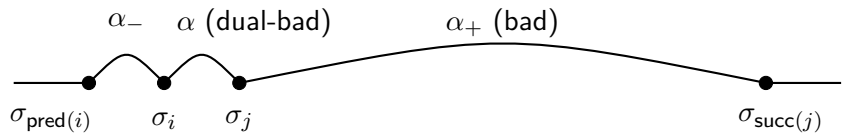
\begin{figure}[H]
\begin{tikzpicture}
\tikzstyle{every node}=[node font={\sf}]

\draw[fill=black] (-4,0) circle (2.5pt);
\draw[fill=black] (-3,0) circle (2.5pt);
\draw[fill=black] (-2,0) circle (2.5pt);
\draw[fill=black] (5,0) circle (2.5pt);

\node at (-3.5,0.8) {$\alpha_-$};
\node at (-1.8,0.8) {$\alpha$ (dual-bad)};
\node at (1.5,0.8) {$\alpha_+$ (bad)};
\node at (-3,-0.5) {$\sigma_i$};
\node at (-2,-0.5) {$\sigma_j$};
\node at (-4.5,-0.5) {$\sigma_{\pred(i)}$};
\node at (5.2,-0.5) {$\sigma_{\succ(j)}$};
\draw[thick] (-5,0) -- (-4,0);
\draw[thick] (5,0) -- (6,0);
\draw[thick] (-4,0) .. controls (-3.5,0.5) .. (-3,0);
\draw[thick] (-3,0) .. controls (-2.5,0.5) .. (-2,0);
\draw[thick] (-2,0) .. controls (1.5,0.7) .. (5,0);
\end{tikzpicture}
\centering
\caption{The arc $\alpha$ is dual-bad if either the adjacent $\leng(\alpha_-)$ or $\leng(\alpha_+)$ is longer than $\leng(\alpha)\cdot 2^w$. We can also interpret it as either $\alpha_-$ or $\alpha_+$ exists and is a bad arc.}\label{exampledualbad}
\vspace{-0.2cm}
\end{figure}
\answer{We prove the following lemma which implies Lemma~\ref{lem:main} for Case 1, by setting the subsequence $B$ to be $H_I$ and 
$Q$ to be the set of arcs of $B$ induced by $C$, with
$C$ and $I$ from the below statement.}
\begin{lemma}[\answer{Lemma~\ref{lem:main} for Case 1}]\label{lem:good-arcs-lb}
Assume that $|\bad{\calA}\big| \ge 0.01 \cdot |\calA|$.  
Then there exist an interval $I\subseteq [m]$ and a set~$C$ of arcs  of $H$ contained in $I$ such that $$\frac{\rank_H(C)}{\leng(I)}\ge \Omega\left(\frac{1}{\sqrt{\log n}}\right).$$
\end{lemma}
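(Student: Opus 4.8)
The plan is to convert the hypothesis ``many bad arcs'' into ``many \emph{dual-bad} arcs'', then pack a large family of dual-bad arcs of comparable length into a short window, and finally show that inside such a window each node is the node of only $O(1)$ of them, so this family has almost full rank. Throughout, $H=(\sigma_1,\ldots,\sigma_m)$ is the nontrivial sequence of length $m=5n$ fixed in Section~\ref{sec:overview} and $\calA$ its set of arcs, and in the end we take $B=H_I$ and $Q$ to be the arcs of $B$ induced by $C$, as indicated just before the statement.

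\textbf{Step 1: from bad arcs to dual-bad arcs.} I would first show that every bad arc is adjacent to a dual-bad arc. If $\alpha=(i,j)$ is bad because $\leftradius{i}<\leng(\alpha)/2^w$, then $\pred_H(i)$ exists and the arc $\alpha_-=(\pred_H(i),i)$ has $\leng(\alpha_-)=\leftradius{i}<\leng(\alpha)/2^w$; since $\succ_H(i)=j$, the right-radius of the right endpoint of $\alpha_-$ is $\rightradius{i}=j-i+1=\leng(\alpha)>2^w\cdot\leng(\alpha_-)$, so $\alpha_-$ is dual-bad. Symmetrically, if $\alpha$ is bad because $\rightradius{j}$ is small, the arc $\alpha_+=(j,\succ_H(j))$ is dual-bad. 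The induced map from $\bad{\calA}$ to $\dualbad{\calA}$ is at most $2$-to-$1$: a dual-bad arc $\gamma$ can be $\alpha_-$ only for the unique arc whose left endpoint is $\rightt(\gamma)$, and $\alpha_+$ only for the unique arc whose right endpoint is $\leftt(\gamma)$. Hence $|\dualbad{\calA}|\ge\tfrac{1}{2}|\bad{\calA}|\ge 0.005\,|\calA|=\Omega(n)$.

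\textbf{Step 2: localizing a group of dual-bad arcs.} If at least half of these dual-bad arcs have length $>m/2$, I would take $I=[1:m]$ and $C$ equal to that set: the lengths of all arcs of a fixed node sum to at most $m+O(1)$, so each node is the node of $O(1)$ such arcs, and Lemma~\ref{firstlowerbound} gives $\rank_H(C)=\Omega(|C|)=\Omega(m)$, hence $\rank_H(C)/\leng(I)=\Omega(1)$. Otherwise $\Omega(n)$ dual-bad arcs have length $\le m/2$; partitioning them according to the group $\group_i$ they belong to and keeping the most populous of the $t=\Theta(\sqrt{\log n})$ groups yields a set $C_0$ of $\Omega(n/\sqrt{\log n})$ dual-bad arcs lying in a single group $\group_{i^*}$, with $\ell_0:=\Upper{C_0}\le m/2$. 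Applying Lemma~\ref{lem:arc_node_ratio_in_cover} to $C_0$ (with any nonempty $P$, using only the first bound in~\eqref{hehe2}) produces an interval $I$ of length $2\ell_0$ such that $C:=\restrict{C_0}{I}$ satisfies $|C|\ge\Omega(|C_0|\cdot\ell_0/m)=\Omega(\ell_0/\sqrt{\log n})$.

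\textbf{Step 3: bounding the rank, and the main obstacle.} The crux is to show each node $v$ is the node of at most $O(1)$ arcs of $C$. Listing the occurrences of $v$ inside $I$ as $q_1<\cdots<q_r$, every arc of $v$ contained in $I$ equals $(q_b,q_{b+1})$ for some $b$. If such an arc lies in $C$, is dual-bad via its left endpoint, and $b\ge 2$, then $\pred_H(q_b)=q_{b-1}$, and dual-badness together with $(q_b,q_{b+1})\in\group_{i^*}$ forces $q_b-q_{b-1}>2^{(i^*-1)w}\cdot 2^w-1\ge \ell_0-1$, i.e. a gap of size $\ge\ell_0$ among the gaps of $v$ inside $I$, which sum to at most $\leng(I)-1=2\ell_0-1$; hence there is at most one such gap and so at most one such arc. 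Adding the single arc $(q_1,q_2)$ (the only one with $b=1$) and the symmetric count for dual-badness via the right endpoint shows $v$ is the node of at most $4$ arcs of $C$. Therefore $C$ contains a subset $C'$ with pairwise distinct nodes and $|C'|\ge|C|/4$, and Lemma~\ref{firstlowerbound} gives $\rank_H(C)\ge\rank_H(C')\ge|C'|/2=\Omega(|C|)=\Omega(\ell_0/\sqrt{\log n})$; dividing by $\leng(I)=2\ell_0$ proves the lemma, which in turn gives Lemma~\ref{lem:main} for Case 1. The only real obstacle is this counting argument, and it is a matter of making the scales agree: the dual-bad threshold $2^w$, the group width $w=\lceil\sqrt{\log n}\rceil$, and the window length $2\ell_0\le 2\cdot 2^{i^*w}$ are chosen precisely so that every dual-bad arc of $\group_{i^*}$ internal to $I$ forces an adjacent gap of length at least half the window. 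The rest is bookkeeping --- the boundary arcs $(q_1,q_2)$ and $(q_{r-1},q_r)$, arcs dual-bad on both sides, the case $\pred_H=-\infty$, and the edge case $\Upper{C_0}>m/2$ handled above --- and requires no new idea.
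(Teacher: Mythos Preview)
Your proof is correct and follows essentially the same approach as the paper: map bad arcs to dual-bad arcs, pigeonhole over the $t=\Theta(\sqrt{\log n})$ groups, localize via Lemma~\ref{lem:arc_node_ratio_in_cover}, and then observe that inside the resulting window each node carries only $O(1)$ dual-bad arcs of that group (because dual-badness forces an adjacent gap of length comparable to the window), so Lemma~\ref{firstlowerbound} yields almost full rank. One small simplification you missed: by the definition of dual-bad, every dual-bad arc $\alpha$ automatically satisfies $\leng(\alpha)\cdot 2^{w}<m$, hence $\leng(\alpha)<m/2$, so your Step~2 edge case ``at least half of the dual-bad arcs have length $>m/2$'' is vacuous and can be dropped (the paper uses exactly this observation to guarantee $\ell\le m/2$ directly).
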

\begin{proof}
It follows from the definition of bad arcs that if $\alpha=(i,j)$ is bad, then either its previous arc $(\pred_H(i),i)$ or its next arc
  $(j,\succ_H(j))$ is dual-bad. As a result, the number of dual-bad arcs in $\calA$ is $\Omega(|\calA|)$ as well.
Thus, there exists a group $\group_{k}$ for some $k\in [t]$ (recall $t=\Theta(\sqrt{\log n})$) such that
    \[
\frac{|\dualbad{\group_{k}}|}{|\calA|} \ge \Omega\left(\frac{1}{\sqrt{\log n}}\right).
    \]
\answer{Let $\ell=\Upper{\dualbad{\group_{k}}}$. 
By the definition of dual-bad arcs, 
  $m\ge \ell\cdot 2^w$ and thus, $\ell\le m/2$.
It follows from Lemma \ref{lem:arc_node_ratio_in_cover} (setting $P=[m]$) that there is an interval $I$  of length $2\ell$ such that 
$$
\frac{|\restrict{\dualbad{\group_{k}}}{I}|}{\leng(I)}\ge \Omega\left(\frac{|\dualbad{\group_k}|}{\leng(H)}\right)
=\Omega\left(\frac{1}{\sqrt{\log n}}\right).
$$
where the last equation holds because $\leng(H)=m=\Theta(|\calA|)$.}

Let $C=\restrict{\dualbad{\group_k}}{I}$.
We  show that the rank of $C$ is almost full:
\begin{equation}\label{hehe6}
\rank_H(C)=\Omega(|C|)=\Omega(\ell/\sqrt{\log n}).
\end{equation}
The lemma then follows. 
To prove (\ref{hehe6}) we 
  show that for each $v\in V$, the number of~arcs~$\alpha\in C$ with 
  $\vertex_H(\alpha)=v$ can be bounded from above by a constant.
To see this is indeed the case, each $\alpha\in C $ with $\vertex_H(\alpha)=v$ must have one of its adjacent arcs $\beta$ (i.e.  
  either arc $(\pred_H(i),i)$ or $(j,\succ_H(j))$)
  defined, $\beta$ has at least one of its endpoints in $I$, and has length at least
$ 
\leng(\alpha)\cdot 2^{w}\ge \ell$ since $\alpha\in \group_k$ (and using Fact \ref{fact:bounds-groups-chunks}).
Given that the length of $I$ is $2\ell$, the number of such arcs $\beta$ (with the same $\vertex_H(\beta)=v$) is  
$O(1)$\footnote{In fact, the constant here is actually exactly one, since one dual-bad and two bad arcs of the same node has total length at least $3+2(\ell-1)=2\ell+1$ exceeding the $2\ell$ length of an interval.}.
Since each $\beta$ 
  can be adjacent to at most two arcs $\alpha\in C$,
  we conclude that the number of  $\alpha\in C$
  with $\vertex_H(\alpha)=v$ is $O(1)$.
As a result, there is a subset $C'$ of $C$ such that $|C'|=\Omega(|C|)$
  and all arcs in $C'$ have distinct nodes.
The inequality (\ref{hehe6}) then follows from 
$$
\rank_H(C)\ge \rank_H(C')\ge \Omega(|C'|)\ge \Omega(|C|)
$$
where the second inequality follows from Lemma \ref{firstlowerbound}.
\end{proof}

\subsection{Case 2: Most arcs in $\good{\group_{i^*}}$ are long-radius}\label{sec:case2}

We start with some intuition for the second case.

{In this case we work with a group 
  $\group_{i^*}$ that has many long-radius\footnote{Notice that the assumption of Case 2 is actually stronger, that there are many arcs in $\group_{i^*}$ that are both good and long-radius.
It turns out that we will only use their long-radius property in the proof of this case.} arcs. 
Recall that $\mathcal{L}$ is~the~set of long-radius arcs in $\group_{i^*}$.
Let $\ell=\Upper{\mathcal{L}}$.
It follows from Lemma \ref{lem:existence-interval} that there is an interval $I$ of length $2\ell$ such that $\restrict{\mathcal{L}}{I}$ is large. 
Furthermore, given that arcs in $\restrict{\mathcal{L}}{I}$
  are long-radius, one can show that every 
  $\alpha\in \restrict{\mathcal{L}}{I}$ has a 
  $k\in \interior(\alpha)$ such that either $\pred_H(k)$
  or $\succ_H(k)$ is outside of $I$.

Now if it is the  oversimplified scenario that
  \emph{both} $\pred_H(k)$ and $\succ_H(k)$ are outside of $I$,
  then one can conclude that $\restrict{\mathcal{L}}{I}$
  has full rank since only the improvement vector
  of $\alpha$ has a nonzero entry indexed by $(\vertex_H(\alpha),\sigma_k)$.
To see this is the case, note that there is no 
  arc with node $k$ contained in $I$ and no other arc
  with node the same as $\alpha$ can have $\sigma_k$ in
  its interior.
Our goal is to show for the
  general scenario that $\restrict{\mathcal{L}}{I}$
  has almost full rank. 

The following lemma 
  implies Lemma~\ref{lem:main} for the second case, by setting the subsequence $B$ to be $H_I$ and the
  set $Q$ to be arcs of $B$ induced by $C$,
  using $I$ and $C$ of the below statement.}
\begin{lemma}[\answer{Lemma~\ref{lem:main} for Case 2}]\label{secondlowerbound:2}
Assume that there exists a group $\group_{i^*}$ that satisfies Lemma \ref{lem:existence-interval} and (\ref{hehe10}).
Then there exist an interval $I\subseteq [m]$ and a set~$C$ of arcs  of $H$ contained in $I$ such that $$\frac{\rank_H(C)}{\leng(I)}\ge \Omega\left(\frac{1}{\sqrt{\log n}}\right).$$
\end{lemma}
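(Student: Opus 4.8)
The plan is to take the sequence $B$ required by Lemma~\ref{lem:main} to be a substring $H_I$ of $H$ (so that the Vector-Preservation property and the configuration $\tau$ are automatic) and to take $Q$ to be the arcs of $B$ induced by a well-chosen family $C$ of long-radius arcs; everything then reduces to proving the stated inequality $\rank_H(C)/\leng(I)=\Omega(1/\sqrt{\log n})$. First I would fix $\mathcal{L}$ to be the set of long-radius arcs of $\group_{i^*}$ (note $\mathcal{L}\subseteq\group_{i^*}$, so $\ell:=\Upper{\mathcal{L}}\le\Upper{\group_{i^*}}$). Chaining Lemma~\ref{lem:existence-interval} with the Case~2 hypothesis~(\ref{hehe10}) gives $|\mathcal{L}|\ge|\good{\mathcal{L}}|\ge\tfrac12|\good{\group_{i^*}}|\ge|\calA|/(4t)=\Omega(n/\sqrt{\log n})$ (using $|\calA|\ge 4n$ and $t=\Theta(\sqrt{\log n})$), and since $\leng(H)=m=5n$ this means $|\mathcal{L}|/\leng(H)=\Omega(1/\sqrt{\log n})$. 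If $\ell\le m/2$, I apply Lemma~\ref{lem:arc_node_ratio_in_cover} with this $\mathcal{L}$ and $P=[m]$ to obtain an interval $I$, $\leng(I)=2\ell$, with $|\restrict{\mathcal{L}}{I}|/\leng(I)=\Omega(|\mathcal{L}|/\leng(H))=\Omega(1/\sqrt{\log n})$; if instead $\ell>m/2$ I simply take $I=[1:m]$, for which $\restrict{\mathcal{L}}{I}=\mathcal{L}$. Either way I set $C=\restrict{\mathcal{L}}{I}$, so that $|C|/\leng(I)=\Omega(1/\sqrt{\log n})$, and it remains only to prove $\rank_H(C)\ge|C|/2$.

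For the rank bound I attach to each $\alpha\in C$ a \emph{witness} index: since $\alpha$ is long-radius there is $k_\alpha\in\interior(\alpha)$ with $\Radius{k_\alpha}>2\Upper{\group_{i^*}}\ge2\ell$, and as $\alpha\subseteq I$ we have $k_\alpha\in I$. The key point is that this forces the occurrence of node $\sigma_{k_\alpha}$ at position $k_\alpha$ to be \emph{extremal within $I$}: if $\leftradius{k_\alpha}>\leng(I)$ then $\pred_H(k_\alpha)<\min I$, i.e.\ $k_\alpha$ is the first occurrence of $\sigma_{k_\alpha}$ inside $I$, and symmetrically $\rightradius{k_\alpha}>\leng(I)$ makes it the last (in the case $\ell>m/2$ one checks $\Radius{k_\alpha}=+\infty$, which gives the same dichotomy). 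By pigeonhole at least half the arcs of $C$ share one of these two types; by the left--right mirror symmetry of the whole construction I may assume it is the "first occurrence" type, and call this subfamily $C_A$, $|C_A|\ge|C|/2$. Order $C_A$ by left endpoint as $\beta_1,\beta_2,\dots$ (distinct arcs have distinct left endpoints), write $u_i=\vertex_H(\beta_i)$ and $v_i=\sigma_{k_{\beta_i}}$, and consider the $|C_A|\times|C_A|$ matrix $M$ whose $(i,j)$ entry is the coordinate of $\improvement{H}{\beta_j}$ at the edge $(u_i,v_i)\in E$. The diagonal entry $M[i,i]$ is nonzero by the definitions of $\interior(\cdot)$ and of improvement vectors. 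For $j<i$ I claim $M[i,j]=0$: a nonzero entry needs $\vertex_H(\beta_j)\in\{u_i,v_i\}$, and (i) if $\vertex_H(\beta_j)=v_i$ then $\leftt(\beta_j)$ is an occurrence of $v_i$ in $I$, hence $\leftt(\beta_j)\ge k_{\beta_i}>\leftt(\beta_i)$, contradicting $j<i$; (ii) if $\vertex_H(\beta_j)=u_i$ then $\beta_i,\beta_j$ share the node $u_i$, so they are disjoint with $\beta_j$ to the left, whence every position strictly inside $\beta_j$ lies below $k_{\beta_i}$, where $v_i$ has no occurrence in $I$, so $v_i$ appears $0$ (an even number of) times inside $\beta_j$ and the entry vanishes. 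Thus $M$ is triangular with nonzero diagonal, $\rank(M)=|C_A|$, and $\rank_H(C)\ge\rank(M)\ge|C|/2$. Taking $B=H_I$ and $Q$ the arcs of $B$ induced by $C$ then yields Lemma~\ref{lem:main} for Case~2.

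The step I expect to be the real obstacle is precisely this triangularity claim, i.e.\ controlling which \emph{other} arcs of $C$ can have an improvement vector that is nonzero at the witness edge $(u_i,v_i)$ of a given arc $\beta_i$. Such disturbances can a priori come from two directions --- other arcs on node $u_i$ (through $v_i$ sitting in their interior) and arcs on node $v_i$ (through $u_i$ sitting in their interior) --- and the two facts that save the argument are that the first occurrence of a node in $I$ cannot lie to the left of an arc that contains that node in its interior, and that an arc on node $v_i$ has $v_i$ as its left endpoint and hence cannot begin before $v_i$'s first occurrence in $I$; the same two facts also show that the witness edges $(u_i,v_i)$ are pairwise distinct, so $M$ is genuinely a square matrix without repeated rows. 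Beyond this there is only routine bookkeeping to check --- that $C$ is nonempty, the $\ell$-versus-$m/2$ split and the $+\infty$-radius case, and that the left--right mirror argument applies verbatim to the "last occurrence" subfamily --- but none of it is delicate.
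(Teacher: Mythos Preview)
Your proof is correct and follows essentially the same approach as the paper: pick $I$ via Lemma~\ref{lem:arc_node_ratio_in_cover}, set $C=\restrict{\mathcal{L}}{I}$, attach to each arc a long-radius witness index whose predecessor lies outside $I$, and exhibit a triangular submatrix on the witness edges to get $\rank_H(C)\ge |C|/2$. The only cosmetic difference is that you order the arcs of $C_A$ by their \emph{left} endpoints whereas the paper orders by \emph{right} endpoints; both orderings yield the required triangular structure for exactly the same reasons (Cases~(i)/(ii) in your write-up correspond to Cases~b/a in the paper).
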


\begin{proof}
Recall that we are in the case where we have a group $\group_{i^*}$ that satisfies 
  Lemma \ref{lem:existence-interval} and (\ref{hehe10}).
Given that $\mathcal{L} =\cup_{\chunk_j\subseteq \group_{i^*}} \mathcal{L}_j$, they together imply that 
$$
\left| \mathcal{L} \right|\ge \big|\hspace{0.03cm}\good{\mathcal{L}} \big|\ge  0.5\cdot \big|\hspace{0.03cm}\good{\group_{i^*}}\big|=\Omega\left(\frac{|\calA|}{\sqrt{\log n}}\right).
$$
Let $\ell=\Upper{\mathcal{L}}$. If $\ell\le m/2$
then, by Lemma~\ref{lem:arc_node_ratio_in_cover}, there is an interval $I\subseteq [m]$ of length
   $2\ell$ such that
$$
 \frac{|\restrict{\mathcal{L}}{I}|}{\leng(I)}=\Omega\left(\frac{|\mathcal{L} |}{m}\right)
=\Omega\left(\frac{1}{\sqrt{\log n}}\right).
$$
If $\ell > m/2$, then $\group_{i^*}$ is the last group ($i^*=t$). In this case, let $I=[m]$, and 
$\restrict{\mathcal{L}}{I} = \mathcal{L}$ satisfies the same property. In either case, let $C=\restrict{\mathcal{L} }{I}$.
We finish our proof by showing that $\rank_H(C)\ge |C|/2$.

It follows from the definition of  long-radius 
arcs that every $\alpha\in C$
  has a $k\in \interior(\alpha)$ such that $\Radius{k}>2\cdot \Upper{\group_{i^*}}
  \ge \leng(I)$.
For each arc $\alpha\in C$, we fix such a $k$ arbitrarily
  and~say~$\alpha$~is  \emph{left-long-radius} if $\leftradius{k} > \leng(I)$,
and \emph{right-long-radius} if $\rightradius{k} > \leng(I)$.

Without loss of generality we assume below that at least half of the 
  arcs in $C$ are left-long-radius (the argument is symmetric otherwise); we
  write $C'$ to denote the set of such arcs so $|C'|\ge |C|/2$.
We finish the proof of the lemma by showing that $\rank_H(C')$ is full.

To see this is the case, we order arcs in $C'$ by their right endpoints as
  $\alpha_1,\ldots,\alpha_{|C'|}$ such that 
  $$\rightt(\alpha_1)<\cdots<\rightt(\alpha_{|C'|}).$$
We write  $k_i$ to denote the index picked in $\interior(\alpha_i)$ such that 
  $\leftradius{k}> \leng(I)$~(this implies that 
  $\pred_H(k_i)$ must be outside of $I$). 

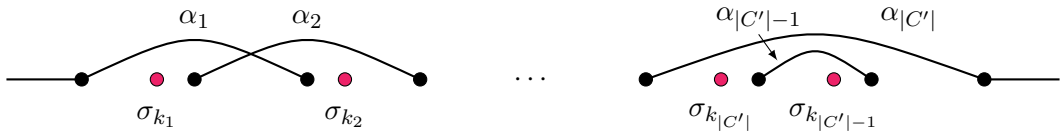
\begin{figure}[H]
\begin{tikzpicture}
\draw[fill=black] (-6,0) circle (2.5pt);
\draw[fill=black] (-4.5,0) circle (2.5pt);
\draw[fill=black] (-3,0) circle (2.5pt);
\draw[fill=black] (-1.5,0) circle (2.5pt);
\draw[fill=black] (1.5,0)  circle (2.5pt);
\draw[fill=black] (3,0)  circle (2.5pt);
\draw[fill=black] (4.5,0) circle (2.5pt);
\draw[fill=black] (6,0) circle (2.5pt);
\draw[fill=WildStrawberry] (-5,0) circle (2.5pt);
\draw[fill=WildStrawberry] (-2.5,0) circle (2.5pt);
\draw[fill=WildStrawberry] (2.5,0) circle (2.5pt);
\draw[fill=WildStrawberry] (4,0) circle (2.5pt);

\node at (-4.5,0.8) {$\alpha_1$};
\node at (-3,0.8) {$\alpha_2$};
\node at (3,0.8) {$\alpha_{|C'|-1}$};
\node at (5,0.8) {$\alpha_{|C'|}$};
\node at (0,0) {$\cdots$};
\node at (-5,-0.5) {$\sigma_{k_1}$};
\node at (-2.5,-0.5) {$\sigma_{k_2}$};
\node at (4,-0.5) {$\sigma_{k_{|C'|-1}}$};
\node at (2.5,-0.5) {$\sigma_{k_{|C'|}}$};
\draw[thick] (-6,0) -- (-7,0);
\draw[thick] (6,0) -- (7,0);
\draw[thick] (-6,0) .. controls (-4.5,0.7) .. (-3,0);
\draw[thick] (-4.5,0) .. controls (-3,0.7) .. (-1.5,0);
\draw[thick] (1.5,0) .. controls (3.75,0.8) .. (6,0);
\draw[thick] (3,0) .. controls (3.75,0.5) .. (4.5,0);
\draw[-latex] (3,0.6) -- (3.25,0.25);

\end{tikzpicture}
\centering
\caption{An illustration of Case 2.}\label{fig:case2}
\end{figure}
\clearpage 
We prove below that for each $i\in [|C'|]$, 
\begin{enumerate}
\renewcommand{\labelenumi}{(\theenumi)}
    \item the entry of  $\improvement{H}{\alpha_i}$ 
  indexed by  $(\vertex(\alpha_i), \sigma_{k_i} )$ is nonzero, and \label{cond:second-lb-1}
    \item the entry of
  $\improvement{H}{\alpha_j}$ indexed by the same edge is $0$ for all $j<i$. \label{cond:second-lb-2}
\end{enumerate}
It then follows from this triangular structure in the matrix formed by these
  vectors that $\rank_H(C')$ is full.
Note that (\ref{cond:second-lb-1}) follows trivially given that $k_i\in \interior(\alpha_i)$. So we focus on (\ref{cond:second-lb-2})
  in the rest of the proof.

Assume for a contradiction that the entry of $\improvement{H}{\alpha_j}$ indexed by
  $(\vertex(\alpha_i), \sigma_{k_i})$ is nonzero for some $j<i$.
  Then either $\vertex(\alpha_j)=\vertex(\alpha_i)$
  or $\vertex(\alpha_j)=\sigma_{k_i}) $.

\begin{flushleft}\begin{enumerate}
    \item[] \textbf{Case a:} $\vertex(\alpha_j)=\vertex(\alpha_i)$.
    Since the entry of $\improvement{H}{\alpha_j}$ indexed by
  $(\vertex(\alpha_i), \sigma_{k_i})$ is nonzero the node $\sigma_{k_i} $ must be in 
    $\interior (\alpha_j)$. However, $\alpha_j$ is on the left of $\alpha_i$ and they span disjoint intervals (because $\vertex(\alpha_j)=\vertex(\alpha_i)$).
    This contradicts with the fact that $\pred_H(k_i)$ is outside of $I$.
    \item[] \textbf{Case b:} $\vertex(\alpha_j)=\sigma_{k_i}$.
    Since the entry of $\improvement{H}{\alpha_j}$ indexed by
  $(\vertex(\alpha_i), \sigma_{k_i})$ is nonzero, $\vertex(\alpha_i) $ must be in 
    $\interior (\alpha_j)$.
    We know that  $\rightt(\alpha_j)<\rightt(\alpha_i)$ by the ordering of arcs.
    On the other hand, $\leftt(\alpha_j)\ge k_i$ since $\pred_H(k_i)$ is outside of $I$. As a result, $\alpha_j$ is contained
      in $\alpha_i$: $\leftt(\alpha_i)<\leftt(\alpha_j)
      <\rightt(\alpha_j)<\rightt(\alpha_i)$, and thus $\alpha_j$ cannot have $\vertex(\alpha_i)$ in its interior, a contradiction.
\end{enumerate}\end{flushleft}

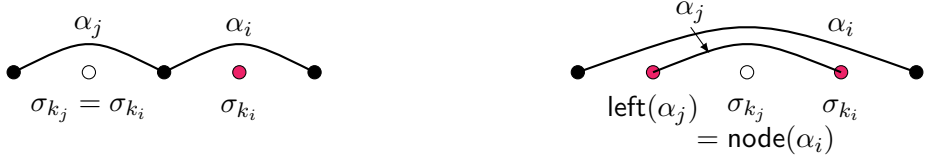
\begin{figure}[H]
\begin{tikzpicture}
\draw[fill=black] (-6,0) circle (2.5pt);
\draw[fill=black] (-4,0) circle (2.5pt);
\draw[fill=black] (-2,0) circle (2.5pt);
\draw[fill=WildStrawberry] (-3,0) circle (2.5pt);
\draw[fill=white] (-5,0) circle (2.5pt);

\draw[fill=black] (1.5,0)  circle (2.5pt);
\draw[fill=black] (6,0) circle (2.5pt);
\draw[fill=WildStrawberry] (2.5,0) circle (2.5pt);
\draw[fill=WildStrawberry] (5,0) circle (2.5pt);
\draw[fill=white] (3.75,0) circle (2.5pt);

\node at (-5,0.6) {$\alpha_j$};
\node at (-3,0.6) {$\alpha_i$};
\node at (-3,-0.5) {$\sigma_{k_i}$};
\node at (-5,-0.5) {$\sigma_{k_j}=\sigma_{k_i}$};

\node at (3,0.8) {$\alpha_{j}$};
\node at (5,0.6) {$\alpha_{i}$};
\node at (2.5,-0.5) {$\leftt(\alpha_j)$};
\node at (5,-0.5) {$\sigma_{k_i}$};
\node at (3.75,-0.5) {$\sigma_{k_j}$};
\node at (4,-0.9) {$=\vertex(\alpha_i)$};
\draw[thick] (-6,0) .. controls (-5,0.5) .. (-4,0);
\draw[thick] (-4,0) .. controls (-3,0.5) .. (-2,0);
\draw[thick] (1.5,0) .. controls (3.75,0.8) .. (6,0);
\draw[thick] (2.5,0) .. controls (3.75,0.5) .. (5,0);
\draw[-latex] (3,0.6) -- (3.25,0.25);

\end{tikzpicture}
\centering
\caption{An illustration of the proof in Case 2. Left: Case a; Right: Case b. The circles are the nodes where contradictions happen.}\label{fig:case2:contradiction}
\end{figure}

This finishes the proof of the lemma.\end{proof}
\subsection{Case 3: Most arcs in $\good{\group_{i^*}}$ are short-radius}\label{sec:case3}

Combining Lemma \ref{lem:existence-interval} and (\ref{finalcase}), we are left with the case that
\begin{equation}\label{hahahehebaba}
\frac{|\good{\mathcal{S}}|}{|\calA|} \ge 
    \Omega\left(\frac{1}{\sqrt{\log n}}\right) \quad\ \text{and}\ \quad
    \frac{|\good{\mathcal{S}}|}{|\group_{i^*-1} \cup \group_{i^*}\cup \group_{i^*+1}|} \ge \Omega(1)
\end{equation}
In Section \ref{sec:case300} we first  handle the easy case when 
  $i^*=t$ is the last group and its last two chunks 
  contain the majority of good arcs in $\mathcal{S}$:
\begin{equation}\label{hehehe1}
 {\big|\good{\mathcal{S}_{s-1}\cup\mathcal{S}_{s}}\big|}  \ge 
   0.5\cdot \big|\good{\mathcal{S}}\big|.
\end{equation}

\subsubsection{Case 3.0: $i^*=t$ and (\ref{hehehe1}) holds}\label{sec:case300}

\begin{lemma}[\answer{Lemma~\ref{lem:main} for Case 3.0}]\label{secondlowerbound:3.0}
Assume that $\group_{i^*}=\group_{t}$ and ${\big|\good{\mathcal{S}_{s-1}\cup\mathcal{S}_{s}}\big|}  \ge 
   0.5\cdot \big|\good{\mathcal{S}}\big|$.
Then there exist an interval $I\subseteq [m]$ and a set~$C$ of arcs  of $H$ contained in $I$ such that $$\frac{\rank_H(C)}{\leng(I)}\ge \Omega\left(\frac{1}{\sqrt{\log n}}\right).$$
\end{lemma}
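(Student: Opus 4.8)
The plan is to prove Lemma~\ref{secondlowerbound:3.0} directly with the trivial choice $I=[m]$, exploiting the fact that the arcs living in the two heaviest chunks all have length $\Theta(n)$. With $I=[m]$ we have $\leng(I)=m=5n$ and every arc of $H$ is contained in $I$; moreover, since $H_I=H$ is a substring of $H$, the vector-preservation clause of Lemma~\ref{lem:main} is automatic, so it suffices to exhibit a set $C$ of arcs of $H$ with $|C|=\Omega(n/\sqrt{\log n})$ whose nodes $\vertex_H(\alpha)$, $\alpha\in C$, are pairwise distinct. Given such a $C$, Lemma~\ref{firstlowerbound} yields $\rank_H(C)\ge|C|/2=\Omega(n/\sqrt{\log n})$, and dividing by $\leng(I)=5n$ gives the claimed bound $\rank_H(C)/\leng(I)\ge\Omega(1/\sqrt{\log n})$.

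First I would do the count. From~(\ref{hahahehebaba}) we get $|\good{\mathcal{S}}|=\Omega(|\calA|/\sqrt{\log n})$, which is $\Omega(n/\sqrt{\log n})$ since $|\calA|\ge 4n$ by Corollary~\ref{cor:arc-length-ratio-lb-5n}; combining this with the hypothesis~(\ref{hehehe1}) yields $|\good{\mathcal{S}_{s-1}\cup\mathcal{S}_{s}}|\ge 0.5\,|\good{\mathcal{S}}|=\Omega(n/\sqrt{\log n})$. Each arc of this set belongs to $\chunk_{s-1}\cup\chunk_{s}$, and since $s=\lceil\log m\rceil$ forces $2^{s}\ge m$, such an arc $\alpha$ has $\leng(\alpha)\ge 2^{s-2}+1>m/4$. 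So I will have $\Omega(n/\sqrt{\log n})$ arcs, each of length greater than $m/4$, all contained in the length-$m$ interval $I$.

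Next I would show that arcs this long can repeat a node only $O(1)$ times, so that greedily selecting a subset with distinct nodes costs only a constant factor. Fix a node $v\in V$ with occurrences $p_1<\cdots<p_r$ in $H$; the arcs of $H$ with node $v$ are exactly the consecutive pairs $(p_1,p_2),\ldots,(p_{r-1},p_r)$, and the position-intervals $[p_a:p_{a+1}]$ underlying distinct such arcs overlap only at shared endpoints. Hence any $k$ of them have lengths summing to at most $(m-1)+k$, so if every one has length $>m/4$ then $k\cdot(m/4)<m+k$, i.e.\ $k=O(1)$ for all large $n$. Therefore at most a constant number of arcs of $\good{\mathcal{S}_{s-1}\cup\mathcal{S}_{s}}$ carry any given node, and a greedy selection (each chosen arc eliminating only the $O(1)$ others sharing its node) produces a subset $C$ with pairwise distinct nodes and $|C|=\Omega(n/\sqrt{\log n})$, which completes the argument via the first paragraph.

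I do not expect a real obstacle in this sub-case: it is singled out precisely because the two heaviest chunks consist of arcs of length $\Theta(n)$, which simultaneously makes the node-repetition bound immediate and lets me take $I=[m]$ outright rather than invoking Lemma~\ref{lem:arc_node_ratio_in_cover} (whose hypothesis $\ell\le m/2$ would in fact fail for arcs this long). The only step that needs a little care is the chunk-boundary arithmetic establishing $\leng(\alpha)>m/4$ for every $\alpha\in\chunk_{s-1}\cup\chunk_{s}$.
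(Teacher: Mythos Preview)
Your proposal is correct and is essentially the same argument as the paper's: take $I=[m]$ (equivalently $B=H$), observe that arcs in $\chunk_{s-1}\cup\chunk_s$ have length $\Omega(m)$ so only $O(1)$ of them can share a node, extract a subset with distinct nodes of size $\Omega(n/\sqrt{\log n})$, and apply Lemma~\ref{firstlowerbound}. The only cosmetic difference is that the paper works with the full set $\chunk_{s-1}\cup\chunk_s$ rather than its subset $\good{\mathcal{S}_{s-1}\cup\mathcal{S}_s}$, but the size lower bound and the rest of the argument are identical.
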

\begin{proof}
On the one hand, from (\ref{hehehe1}) we have trivially
  that $|\chunk_{s-1}\cup \chunk_s|\ge \Omega(m/\sqrt{\log n})$.
On the other hand,
   every $\alpha\in \chunk_{s-1}\cup \chunk_s$ 
  has length $\Omega(m)$ and thus, the number of arcs 
  $\alpha\in \chunk_{s-1}\cup \chunk_s$ with the same $\vertex(\alpha)$ is bounded from above by a constant.
As a result, there is a subset of $\chunk_{s-1}\cup \chunk_s$ such that 
  its size is $\Omega(|\chunk_{s-1}\cup \chunk_s|)$ and all arcs in it have
  distinct nodes. It follows from Lemma \ref{firstlowerbound} that
$$
\rank_H(\chunk_{s-1}\cup \chunk_s)\ge \Omega(|\chunk_{s-1}\cup \chunk_s|)\ge   \Omega\big(m/\sqrt{\log n}\big).
$$
Lemma \ref{lem:main} in this case then follows by setting $B=H$ itself and $Q=\chunk_{s-1}\cup \chunk_s$.
\end{proof}

\subsubsection*{$\mathbf{\Touches}$ vs $\mathbf{\NonTouches}$}
Ruling out Case 3.0, we have from (\ref{hahahehebaba})
  that there is a $\chunk_{j^*}$ in $\group_{i^*}$ 
  with $j^*\le s-2$ such that
    \[
    \frac{|\good{\mathcal{S}_{j^*}}|}{|\calA|} \ge 
  \Omega\left(\frac{1}{ {\log n}}\right)\quad\ \text{and}\ \quad
    \frac{|\good{\mathcal{S}_{j^*}}|}{| \group_{i^*+1}  \cup \group_{i^*}\cup \group_{i^*-1}|} \ge \Omega\left(\frac{1}{\sqrt{\log n}}\right).
    \]
Setting $C^*=\good{\mathcal{S}_{j^*}}$, $\ell=\Upper{C^*}\le m/2$
  since $j^*\le s-2$, and $$P=\Appearances\big(\group_{i^*-1}\cup \group_{i^*}\cup \group_{i^*+1}\big)\subseteq [m]$$
  (so $|P|=\Theta(|\group_{i^*-1}\cup \group_{i^*}\cup \group_{i^*+1}|)$), by Lemma~\ref{lem:arc_node_ratio_in_cover}
 there is an interval $I$ of length  $2\ell$ such that

\begin{equation}\label{hahe1}
\frac{|\restrict{C^*}{I}|}{\leng(I)} = \Omega\left(\frac{1}{\log n}\right)\ \quad\text{and}\quad\ 
       \frac{|\restrict{C^*}{I}|}{|P\cap I|} = \Omega\left(\frac{1}{\sqrt{\log n}}\right).
\end{equation}
For convenience we write $C =\restrict{C^*}{I}$ in the rest of the proof. Every arc $\alpha\in C$ lies in $\mathcal{S}_{j^*}$, has
  length between $\ell/2$ and $\ell$, is good and \shortradius,
  and is contained in $I$ (which has length $2\ell$).

We need the following definition of two arcs overlapping with each other:

\begin{definition} 
We say that two arcs $\alpha=(i,j)$ and $\beta =(k,\ell)$ of $H$ \emph{overlap}  if $\vertex(\alpha)\ne \vertex(\beta)$ are adjacent in the graph and the endpoints of the arcs satisfy 
  $i < k < j < \ell$ or $k < i < \ell < j$.
\end{definition}

Given $C$, we say an arc $\beta=(i,j)$ is \emph{endpoint-disjoint}
  from $C $ if $i,j\notin \Appearances(C )$, i.e. 
  $\beta$ shares no endpoint with any arc $\alpha\in C $.
We write $\Touches$ to denote the set
  of arcs $\beta=(i,j)\in \calA$ 
  that are contained in $I$, are endpoint-disjoint from $C$, and 
  overlap with at least one arc in $ C$.
On the other hand, we write $\NonTouches$ to denote the set of arcs 
  $\beta\in \calA$ that are contained in $I$, endpoint-disjoint from $C$,
  and do not overlap with any arc in $C$.
  
We distinguish two cases depending on the
size of the set $\Touches$.
Section \ref{sec:case31} handles 
Case 3.1 when $ \Touches $ is "large", specifically,
\begin{equation}\label{eq:touches_large}
\frac{|\Touches|}{\leng(I)} \ge    \frac{\big|\restrict{(\group_{i^*+1}  \cup \group_{i^*}\cup \group_{i^*-1})}{I}\hspace{-0.06cm}\big|}{\leng(I)}+\frac{1}{\sqrt{\log n}}, \end{equation}
and Section \ref{sec:case32} handles the opposite case 3.2, when $ \Touches $ is "small".


\subsubsection{Case 3.1: $\Touches$ is large}\label{sec:case31}

\begin{lemma}[\answer{Lemma~\ref{lem:main} for Case 3.1}]\label{secondlowerbound:3.1}
Assume that condition (\ref{eq:touches_large}) holds, i.e $\Touches$ is large. Then for the interval $I\subseteq [m]$ of Equation~(\ref{hahe1}) there exists a set~$F$ of arcs of $H$ contained in $I$ such that $$\frac{\rank_H(F)}{\leng(I)}\ge \Omega\left(\frac{1}{\sqrt{\log n}}\right).$$
\end{lemma}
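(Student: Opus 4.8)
The plan is to produce a set $F$ of arcs of $H$, all contained in $I$ and all \emph{much shorter} than the arcs of $C$, with $|F|=\Omega(\leng(I)/\sqrt{\log n})$, and then to show that (most of) their improvement vectors are linearly independent by giving each one a private nonzero coordinate. First I would observe that every arc of $\Touches$ is contained in $I$, hence has length at most $\leng(I)=2\ell\le 2^{i^*w+1}$, so no arc of $\Touches$ lies in a group $\group_{i^*+2},\group_{i^*+3},\ldots$. I split $\Touches$ into the arcs lying in $\group_{i^*-1}\cup\group_{i^*}\cup\group_{i^*+1}$ --- a set contained in $\restrict{(\group_{i^*-1}\cup\group_{i^*}\cup\group_{i^*+1})}{I}$ --- and the remaining arcs, which I call $F$. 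Assumption (\ref{eq:touches_large}) then gives $|F|\ge\leng(I)/\sqrt{\log n}$, and every $\beta\in F$ lies in $\group_1\cup\cdots\cup\group_{i^*-2}$, so $\leng(\beta)\le 2^{(i^*-2)w}$. Since arcs of $C$ lie in $\chunk_{j^*}\subseteq\group_{i^*}$ and hence have length exceeding $2^{(i^*-1)w}=2^w\cdot 2^{(i^*-2)w}$, every arc of $F$ is shorter than every arc of $C$ by a factor of more than $2^w$.

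Next I would identify a private coordinate for each $\beta=(k,\ell')\in F$. Fix an arc $\alpha(\beta)\in C$ that $\beta$ overlaps (one exists by the definition of $\Touches$). The overlap condition forces $\beta$ to contain \emph{exactly one} endpoint of $\alpha(\beta)$ strictly in its span --- its right endpoint when $\leftt(\alpha(\beta))<k<\rightt(\alpha(\beta))<\ell'$, its left endpoint in the symmetric case --- and I would choose $\alpha(\beta)$ so that this crossed position $p(\beta)$ is the smallest endpoint of any $C$-arc that $\beta$ overlaps. Writing $u=\vertex(\alpha(\beta))=\sigma_{p(\beta)}$: inside $\beta$ the node $u$ occurs only at $p(\beta)$. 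Indeed, on the side of $p(\beta)$ facing away from $\alpha(\beta)$, goodness of $\alpha(\beta)$ puts the next occurrence of $u$ at distance at least $\leng(\alpha(\beta))/2^w$, which exceeds $\leng(\beta)$ by the length comparison above, so it falls outside $\beta$; and on the other side the previous (or next) occurrence of $u$ is the opposite endpoint of $\alpha(\beta)$, also outside $\beta$. Hence $u$ occurs exactly once, an odd number of times, strictly inside $\beta$, and being adjacent to $\vertex(\beta)$ it lies in $\interior(\beta)$; so the coordinate of $\improvement{H}{\beta}$ indexed by $e_\beta:=(\vertex(\beta),u)$ is nonzero.

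Finally I would convert these coordinates into a rank bound by triangulation. Order the arcs of $F$ by nondecreasing $p(\beta)$. Two arcs of $F$ with the same value of $p(\cdot)$ both contain that position strictly inside, so they are distinct arcs of \emph{distinct} nodes, and neither node equals $\sigma_{p(\cdot)}=\vertex(\alpha(\cdot))$ (overlap forbids an arc to share a node with an arc it overlaps); so within a block of equal $p(\cdot)$ the coordinates $e_\beta$ are diagonal. The remaining task is to bound ``forward collisions'': a later arc $\beta'$ whose improvement vector is nonzero at the pivot $e_\beta=(\vertex(\beta),\sigma_{p(\beta)})$ of an earlier arc $\beta$. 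Such a collision forces $\vertex(\beta')\in\{\vertex(\beta),\sigma_{p(\beta)}\}$ and the other endpoint of $e_\beta$ to occur an odd number of times inside the short arc $\beta'$. Using that $\beta'$ is short, that $\sigma_{p(\beta)}$ is surrounded by ``empty'' windows of length $\gg\leng(\beta')$ (goodness of $\alpha(\beta)$), that $\beta'$ cannot straddle $p(\beta)$ (two arcs of one node have disjoint spans) nor any $C$-endpoint smaller than $p(\beta')>p(\beta)$ (minimality in the choice of $p(\cdot)$, together with endpoint-disjointness of $\Touches$ from $C$), one checks that each collision is attributable to one of $O(1)$ arcs ``next to'' $\beta$ (the two arcs of $\vertex(\beta)$ adjacent to $\beta$ and the at most one arc of $\sigma_{p(\beta)}$ a later short arc can straddle). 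Discarding those $O(1)\cdot|F|$ many arcs leaves $F'\subseteq F$ with $|F'|=\Omega(|F|)$ whose improvement-vector matrix is lower-triangular with nonzero diagonal under the chosen pivots and ordering, so $\rank_H(F)\ge\rank_H(F')\ge|F'|=\Omega(\leng(I)/\sqrt{\log n})$. Since $B=H_I$ is a substring, the Vector-Preservation property of Lemma \ref{lem:main} is automatic, so taking $Q$ to be the arcs of $B$ induced by $F$ also settles Lemma \ref{lem:main} in this case.

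The step I expect to be the main obstacle is the last one, the collision bookkeeping: getting a clean $O(1)$ bound on the number of forward collisions per arc is exactly where all the hypotheses on $C$ get consumed --- goodness makes the nodes $\sigma_{p(\beta)}$ locally isolated, the short-radius property is what forced $F$ strictly below the three central groups (hence $\leng(\beta)\ll\leng(\alpha)$), and $\Touches$-arcs avoiding the endpoints of $C$ kills spurious coincidences --- and this accounting is the crux of the whole case.
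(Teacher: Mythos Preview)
Your setup matches the paper's exactly: you take $F=\Touches\setminus\restrict{(\group_{i^*-1}\cup\group_{i^*}\cup\group_{i^*+1})}{I}$, correctly deduce $|F|\ge\leng(I)/\sqrt{\log n}$ and that every arc of $F$ is shorter than every arc of $C$ by a factor exceeding $2^w$, and your argument that the coordinate $e_\beta=(\vertex(\beta),\sigma_{p(\beta)})$ is nonzero in $\improvement{H}{\beta}$ (via goodness of the witness) is exactly the paper's proof of condition~(1).

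The gap is in the triangulation. Your claimed $O(1)$ bound on forward collisions is not justified, and in fact fails. Take the case $\vertex(\beta')=\vertex(\beta)=v$: a collision means $u:=\sigma_{p(\beta)}$ occurs an odd number of times inside $\beta'$. The node $u$ may occur many times in $I$, and occurrences far from $p(\beta)$ need not be endpoints of any arc in $C$ --- remember $C=\restrict{\good{\mathcal{S}_{j^*}}}{I}$ contains only the \emph{good, short-radius} arcs of $u$ in chunk $j^*$, not all arcs of $u$. Your minimality-of-$p(\beta')$ argument therefore has no force at such an occurrence: $\beta'$ can straddle an occurrence of $u$ that is not a $C$-endpoint at all, with no constraint on $p(\beta')$. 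There is no a priori bound on how many arcs of $v$ in $F$ contain some occurrence of $u$, and your list ``the two arcs of $\vertex(\beta)$ adjacent to $\beta$'' does not capture them. (Separately, even a genuine $O(1)$ out-degree bound in the collision digraph would not by itself let you discard down to a constant fraction of $F$; you would need an in-degree bound or a more careful argument.)

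The paper closes this gap with a localization step you are missing. After passing (WLOG) to the subset $F'\subseteq F$ whose witness lies to the left, it partitions $I$ into five quantiles and keeps only those arcs whose witness's right endpoint falls in a single quantile $q$; call this $F'_q$, with $|F'_q|\ge|F|/8$. Every arc of $F'_q$ then has its span within distance $\leng(I)/2^w$ of the $q$-th quantile, while every witness arc has length exceeding $\leng(I)/4$, so its left endpoint is well outside the quantile. Ordering $F'_q$ by the witness right endpoints, the matrix is \emph{exactly} triangular --- no deletions needed. The localization is precisely what kills the bad case above: with all arcs of $F'_q$ pinned near one quantile, an earlier arc $\alpha_j$ of the same node as $\alpha_i$ sits to the left of $\rightt(\beta_i)$ but still near the quantile, hence strictly between $\leftt(\beta_i)$ and $\rightt(\beta_i)$, where $\vertex(\beta_i)$ simply does not occur. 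Your identification of the crux was correct; the missing ingredient is this quantile trick.
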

\begin{proof}
Let $F=\Touches \backslash (\restrict{(\group_{i^*+1}\cup \group_{i^*}\cup \group_{i^*-1})}{I}).$
In this case we have
\begin{equation}\label{eqn:case2-1-touches-lb}
\frac{|F|}{\leng(I)} \ge    \frac{1}{\sqrt{\log n}} .   
\end{equation}

For every $\alpha\in F$ we pick arbitrarily 
  an arc $\beta \in C$ such that $\alpha$ and $\beta$ overlap;
  we call $\beta$ a \emph{witness arc} for $\alpha$.
Assume without loss of generality that for the majority of $\alpha\in F$,
  its witness arc $\beta\in C $ is on the left of $\alpha$, meaning
  that $\leftt(\beta)<\leftt(\alpha)$; the argument is symmetric otherwise.
We write $F'$ to denote the subset of such arcs in $F$; we have
  $|F'|\ge |F|/2$.
  
\begin{figure}[H]
\begin{tikzpicture}
\draw[thick] (-5,0) -- (4,0);
\draw[thick,draw=RoyalBlue, dashed] (-4,0) .. controls(-1.5,0.6) .. (1,0);
\draw[thick,draw=RoyalBlue, dashed] (-2,0) .. controls(0.5,.6) .. (3,0);
\draw[thick,draw=WildStrawberry] (-3,0) .. controls(-2,-0.6) .. (-1,0);
\draw[thick,draw=WildStrawberry] (0,0) .. controls(1,-0.6) .. (2,0);
\draw[thick] (-1,0) .. controls(-0.5,-0.4) .. (0,0);

\draw[fill=RoyalBlue] (-4,0) circle (2.5pt);
\draw[fill=WildStrawberry] (-3,0) circle (2.5pt);
\draw[fill=RoyalBlue] (-2,0) circle (2.5pt);
\draw[fill=WildStrawberry] (-1,0) circle (2.5pt);
\draw[fill=WildStrawberry] (0,0) circle (2.5pt);
\draw[fill=RoyalBlue] (1,0) circle (2.5pt);
\draw[fill=WildStrawberry] (2,0) circle (2.5pt);
\draw[fill=RoyalBlue] (3,0) circle (2.5pt);

\node at (-5.5,0) {$I:=$};
\node at (-4.5,0) {$[$};
\node at (3.5,0) {$]$};
\node at (-1.5,0.8) {\textcolor{RoyalBlue}{$\beta_1$}};
\node at (0.5,0.8) {\textcolor{RoyalBlue}{$\beta_2$}};
\node at (-2,-0.8) {\textcolor{WildStrawberry}{$\alpha_2$}};
\node at (1,-0.8) {\textcolor{WildStrawberry}{$\alpha_1$}};

\end{tikzpicture}
\centering
\caption{An example of $\Touches$ and witness. Here $\vertex(\beta_1)$ and $\vertex(\beta_2)$ are adjacent to $\vertex(\alpha_1)=\vertex(\alpha_2)$ in $G$.
The dashed blue arcs are in $C$, the red arcs are in $\Touches$, and the black arc is in $\NonTouches$. $\beta_1$ is the witness of $\alpha_1\in F'$ (i.e. $\beta_1$ is on the left of $\alpha_1$), and $\beta_2$ is the witness of $\alpha_2\in F\backslash F'$.}\label{fig:witness}
\end{figure}

Next we partition the interval $I$ into five quantiles so that each one is of length 
  $\lfloor \leng(I)/5\rfloor$ or $\lceil\leng(I)/5\rceil$.~We also assume that $\leng(I)$ is sufficiently large so that 
  $\lceil\leng(I)/5\rceil<\leng(I)/4$; otherwise it is bounded by a constant and our goal is trivially met. (Note that 
  $C$ is nonempty. So $I$ is an interval of constant length and contains
  at least one arc. We can thus prove Lemma~\ref{lem:main} directly just by taking $B=H_I$ and one single arc in it;
  the ratio in (\ref{aslarge}) is $\Omega(1)$.)
Let $\beta$ be the witness arc of an $\alpha\in F'$.
Given that $ \beta\in C\subseteq \chunk_{j^*}$,
  we have $$\leng(\beta) > \frac{\Upper{\chunk_{j^*-1}}}{2} 
  \ge \frac{\ell}{2}= \frac{ \leng(I)}{4}>\left\lceil \frac{\leng(I)}{5}\right\rceil.$$
Since $\beta$ is contained in $I$, $\rightt(\beta)$ can not lie in the first quantile of $I$.
Partitioning $F'$ into $F'_i$ for $i=2,3,4,5$ so that $F'_i$
  contains all arcs $\alpha\in F'$ such that $\rightt(\beta)$ of its 
  witness arc $\beta$ lies in the $i$-th quantile of $I$ and 
  letting $F'_q$ denote the largest set, we have
$$
|F'_q|\ge \frac{|F'|}{4}\ge \frac{|F|}{8}=\Omega\left(\frac{\leng(I)}{
\sqrt{\log n}}\right).
$$

We finish the proof by showing that  $\rank_H(F'_q)$ is full.
The proof that $\rank_H(F'_q)$ is full is similar to the proof of Case 2.
We order arcs $\smash{\alpha_1,\ldots,\alpha_{|F'_q|}}$ in $F'_q$  by the 
  right endpoints of their witness arcs: $\beta_i$ is the witness arc of $\alpha_i$ and they satisfy $$\rightt(\beta_1)\le \cdots\le  {\rightt(\beta_{|F'_q|})}.$$
Note that we used $\le$ instead of $<$ because some of the
  witness arcs might be the same.
\clearpage   
We prove below that for each $i\in [|F'_q|]$, 
\begin{enumerate}
\renewcommand{\labelenumi}{(\theenumi)}
    \item the  entry of $\improvement{H}{\alpha_i}$ indexed by edge $(\vertex(\alpha_i),$ $\vertex(\beta_{ i}))$ is nonzero, and 
    \item the entry of
  $\improvement{H}{\alpha_j}$ indexed by the same edge is $0$ for every $j<i$.
\end{enumerate}
It follows from this triangular structure in the matrix that $\rank_H(F'_q)$ is full.
  
For (1) we first note that $(\vertex_H(\alpha_i),\vertex_H(\beta_i))$ is an edge in $G$ because $\alpha_i$ and $\beta_i$ overlap.
As $\beta_i\in C $ is a good arc,
  we have $$\rightradius{\beta_i}\ge \frac{\leng(\beta)}{2^w}.$$
On the other hand, by the definition of $F$, $\alpha_i$
  belongs to $\group_k$ for some $k<i^*-1$ (note that $k$ cannot be larger than 
  $i^*+1$ because  $\alpha_i$ would then be too long to be contained in $I$).
If $\succ_H(\rightt(\beta_i))$ lies inside $\alpha_i $,
  we have $\leng(\alpha)> {\leng(\beta)}/2^w$.
But this contradicts with the fact that $\beta\in \chunk_{j^*}\subseteq \group_{i^*}$
  but $\alpha\in \group_k$ for some $k<i^*-1$ and thus,
  $\leng(\alpha)/\leng(\beta)<1/2^w$.
As a result, $\succ_H(\rightt(\beta_i))>\rightt(\alpha_i)$ and thus,
  $\vertex(\beta_i)$ appears exactly once inside $\alpha_i$.
This implies (1).

\begin{figure}[H]
\centering
\begin{tikzpicture}
\draw[thick, draw=RoyalBlue] (-4,0) .. controls (-2,0.5) .. (0,0);
\draw[thick] (0,0) .. controls (1.5,0.5) .. (3,0);
\draw[thick, draw=WildStrawberry] (-1,0) .. controls (0,0.5) .. (1,0);

\draw[fill=RoyalBlue] (-4,0) circle (2.5pt);
\draw[fill=WildStrawberry] (-1,0) circle (2.5pt);
\draw[fill=RoyalBlue] (0,0) circle (2.5pt);
\draw[fill=WildStrawberry] (1,0) circle (2.5pt);
\draw[fill=RoyalBlue] (3,0) circle (2.5pt);

\node at (-2.2,0.8) {\textcolor{RoyalBlue}{$\beta_i\in \good{\mathcal{S}_{j^*}}$}};
\node at (0.5,0.8) {\textcolor{WildStrawberry}{$\alpha_i\in \group_{<i^*-1}$}};
\node at (3,-0.5) {$\sigma_{\succ(\rightt(\beta_i))}$};
\node at (1,-0.5) {$\sigma_{\rightt(\alpha_i)}$};

\end{tikzpicture}
\centering
\caption{An illustration of (1). The conditions guarantee that there is only one appearance of $\vertex(\beta_i)$ in $\interior(\alpha_i)$. }\label{fig:case31-cond1}
\end{figure}
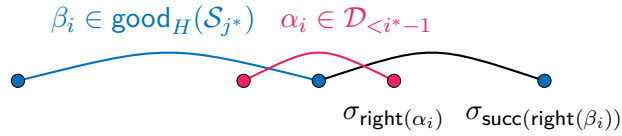

To show (2), assume for a contradiction that the entry of $\improvement{H}{\alpha_j}$
  indexed by the edge $(\vertex_H(\alpha_i),\vertex_H(\beta_i))$ is nonzero for some $j<i$.
  Then $\vertex_H(\alpha_j)$ is either $\vertex_H(\alpha_i)$ or $\vertex_H(\beta_i)$.
We consider the two cases:

\begin{flushleft}
\begin{minipage}{15cm}
\begin{enumerate}
    \item[] \textbf{Case a:} $\vertex_H(\alpha_j)=\vertex_H(\alpha_i)$.
    Since the entry of $\improvement{H}{\alpha_j}$
  indexed by $(\vertex_H(\alpha_i),\vertex_H(\beta_i))$ is nonzero, $\vertex_H(\beta_i)$ occurs in 
  $\interior(\alpha_j)$. 
    From $\rightt(\beta_j)\le \rightt(\beta_i)$
     we have $\rightt(\alpha_j)\le \leftt(\alpha_i)$,
     i.e. $\alpha_j$ is left of $\alpha_i$.
     Every arc of $F'_q$, and in particular $\alpha_j$, has length less than $\leng(I)/2^w$ and has in its interior a point in the $q$-th quantile (namely, $\rightt(\beta_j)$). Hence both endpoints of $\alpha_j$ are either inside the $q$-th quantile or at distance at most $\leng(I)/2^w$.
However, given that $\rightt(\beta_i)$ lies in the $q$-th quantile
      and that $\leng(\beta_i) \ge \leng(I)/4$,
      there cannot be any occurrence of $\vertex(\beta_i)$ to the left
      of $\rightt(\beta_i)$ inside, or within distance $\leng(I)/2^w$ of, the $q$-th quantile; this contradicts the assumption that $\vertex_H(\beta_i))$ occurs in  $\interior(\alpha_j)$.
    \item[] \textbf{Case b:} $\vertex_H(\alpha_j)=\vertex_H(\beta_i)$.
      Since the entry of $\improvement{H}{\alpha_j}$
  indexed by $(\vertex_H(\alpha_i),\vertex_H(\beta_i))$ is nonzero, $\vertex_H(\alpha_i))$ occurs in 
  $\interior(\alpha_j)$.
    First we note that $\rightt(\beta_j) \neq \rightt(\beta_i)$ because $\vertex_H(\beta_j)\ne\vertex_H(\alpha_j)=\vertex_H(\beta_i)$. Since $j<i$ we have therefore
      $\rightt(\beta_j)<\rightt(\beta_i)$.
    Next, given that $\rightt(\beta_j)$ is in the $q$-th quantile 
      and that $\leftt(\beta_i)$ is outside of the $q$-th quantile,
      we have that $\rightt(\beta_j)$ is contained in $\beta_i$.
    Given that it is also contained in $\alpha_j$ and 
      that $\vertex_H(\alpha_j)=\vertex_H(\beta_i)$,
we must have $\alpha_j=\beta_i$,
      a contradiction since $\beta_i\in  \group_{i^*}$ but
      $\alpha_j$ lies in $\group_k$ for some $k<i^*-1$.
\end{enumerate}
\end{minipage}
\end{flushleft}

\begin{figure}[H]
\begin{tikzpicture}
\draw[thick] (-2.5,-1) -- (-2.5,1);
\draw[thick] (-7.5,-1) -- (-7.5,1);
\draw[thick] (-8.5,0) -- (-1.5,0);
\draw[thick, draw=WildStrawberry] (-2,0) .. controls (-3,0.5) .. (-4,0);
\draw[thick, draw=WildStrawberry] (-4,0) .. controls (-5,0.5) .. (-6,0);
\draw[thick, draw=RoyalBlue, dashed] (-3,0) .. controls (-5.5,0.7) .. (-8,0);

\draw[thick] (5.8,-1) -- (5.8,1);
\draw[thick] (2,-1) -- (2,1);
\draw[thick] (0.5,0) -- (6.5,0);
\draw[thick, draw=RoyalBlue, dashed] (1,0) .. controls (3,0.7) .. (5,0);
\draw[thick, draw=RoyalBlue, dashed] (1.5,0) .. controls (3.5,0.7) .. (5.5,0);
\draw[thick, draw=WildStrawberry] (4,0) .. controls (5,-0.5) .. (6,0);
\draw[thick, draw=WildStrawberry, densely dotted] (1.5,0) .. controls (3.5,-0.7) .. (5.5,0);

\draw[fill=RoyalBlue] (-8,0) circle (2.5pt);
\draw[fill=WildStrawberry] (-6,0) circle (2.5pt);
\draw[fill=white] (-5,0) circle (2.5pt);
\draw[fill=WildStrawberry] (-4,0) circle (2.5pt);
\draw[fill=RoyalBlue] (-3,0) circle (2.5pt);
\draw[fill=WildStrawberry] (-2,0) circle (2.5pt);

\draw[fill=RoyalBlue] (1,0)  circle (2.5pt);
\draw[fill=RoyalBlue] (1.5,0)  circle (2.5pt);
\draw[fill=RoyalBlue] (5,0)  circle (2.5pt);
\draw[fill=RoyalBlue] (5.5,0)  circle (2.5pt);
\draw[fill=WildStrawberry] (4,0) circle (2.5pt);
\draw[fill=WildStrawberry] (6,0) circle (2.5pt);

\node at (-5,0.8) {\textcolor{WildStrawberry}{$\alpha_j$}};
\node at (-3,0.8) {\textcolor{WildStrawberry}{$\alpha_i$}};
\node at (-6.5,0.8) {\textcolor{RoyalBlue}{$\beta_i$}};
\node at (-2.8,1.5) {$F_q'$};
\node at (-5,-0.5) {$\vertex(\beta_i)$};

\node at (5.5,1.5) {$F_q'$};
\node at (3.5,-0.8) {\textcolor{WildStrawberry}{$\alpha_{j}$}};
\node at (5,-0.8) {\textcolor{WildStrawberry}{$\alpha_{i}$}};
\node at (2.75,0.8) {\textcolor{RoyalBlue}{$\beta_{j}$}};
\node at (4.25,0.8) {\textcolor{RoyalBlue}{$\beta_{i}$}};

\end{tikzpicture}
\vspace{0.15cm}
\centering
\caption{An illustration of the proof in Case 3.1. Left: Case a; Right: Case b. The circles and the dotted arc are the places where contradictions happen.}\label{fig:case31}
\vspace{-.5cm}
\end{figure}
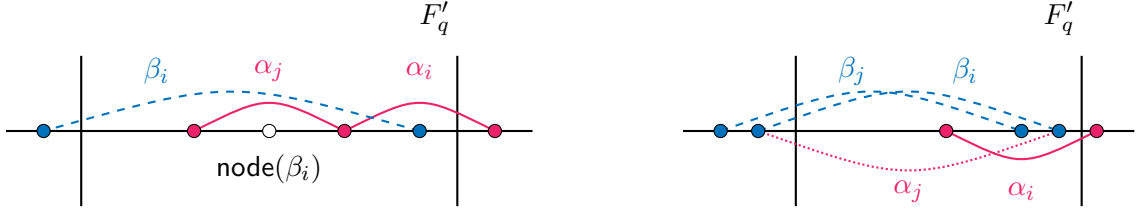
\end{proof}

Lemma \ref{lem:main} follows by taking $B=H_I$ and $Q$ to be the set of arcs of $B$ induced by the quantile $F'_q$ of $F$. 
\subsubsection{Case 3.2: $\Touches$ is small}\label{sec:case32}

We are in the last case when
$$
\frac{|\Touches|}{\leng(I)} < \frac{\big|\restrict{(\group_{i^*+1}  \cup \group_{i^*}\cup \group_{i^*-1})}{I}\hspace{-0.06cm}\big|}{\leng(I)}+ \frac{1}{\sqrt{\log n}}. $$
Given that $|\restrict{(\group_{i^*+1}  \cup \group_{i^*}\cup \group_{i^*-1})}{I}\hspace{-0.06cm}|
  \le |P\cap I|$, we have from (\ref{hahe1})
\begin{equation}\label{hehe100}
|\Touches|< |P\cap I|+\frac{|\leng(I)|}{\sqrt{\log n}} =O\left(\sqrt{\log n}\cdot |C|\right).
\end{equation}

Let $\gamma$ be the configuration of $S(H)$ in the statement of Lemma \ref{lem:main},
  and $\gamma'$ be the configuration of $S(H)$ before the first move of $I$.
We start with a sketch of the proof for this case.
First we note that we can apply Lemma \ref{firstlowerbound} to conclude that 
  $\rank_H(C )=\Omega(|C |)$.
This is again because the length of $I$ is $2\ell$,  
  all arcs in $C$ are contained in $I$, and have length at least $\ell/2$. Hence we can pick a subset $C'$ of $C$ of size $|C'| \geq |C|/4$ such that the arcs of $C'$ have distinct nodes.  Lemma \ref{firstlowerbound} then implies that $\rank_H(C ) \geq \rank_H(C' )=\Omega(|C |)$.
  
The main step of the proof is to construct a subset $R\subset I$ that satisfies 
  $R\cap \Appearances{(C)}=\emptyset$.
We remove the moves in $R$ to obtain the desired subsequence $B$ of $H$:
  $B=H_{I\setminus R}$, and let $\tau$ be the restriction of configuration $\gamma'$ on $S(B)$.
For each $i\in I\setminus R$, we use $\rho(i)\in [|B|]$ to denote its corresponding 
  index in $B$.
Then each arc $\alpha =(i ,j )\in C $ corresponds to an arc $\rho(\alpha )=(\rho(i ),\rho(j))$
  in $B$ (since $R\cap \Appearances(C)=\emptyset$, 
  both $i$ and $j$ survive), and we write $Q$ to denote the set of $|C|$ arcs of $B$ that consists of
  $\rho(\alpha )$, $\alpha \in C $.
  
The key property we need from the set $R$ is that the removal of moves in $R$
  does not change the improvement vector of any arc $\alpha \in C $.
More formally, we prove in Lemma \ref{lem:nochange} that
  $\improvement{\gamma,H}{\alpha }=\improvement{\tau,B}{\rho(\alpha)}$
  for all $\alpha \in C $.
It follows that (1) $B,Q$ and $\tau$ satisfy the second condition of Lemma \ref{lem:main}, and
  (2) $\textsf{rank}_B(Q)=\textsf{rank}_H(C )=\Omega(|C |)$.
To finish the proof of Lemma \ref{lem:main}, we prove in Lemma \ref{lem:largeratio}
  that the length of $B$ is small: $\smash{|B|= |I\setminus R|\le O(\sqrt{\log n})\cdot |C|}$.


We now construct $R$. 
To help the analysis in Lemma \ref{lem:nochange} we will consider $R$ as being composed of three parts, $R=R_1\cup R_2\cup R_3$. 
Given the plan above  we would like to add as many indices $i\in I\setminus \Appearances({C})$
  to $R$ as possible since the smaller $I\setminus R$ is, the larger the
  ratio $|C |/|I\setminus R|$ becomes. At the same time we need to maintain the key property
  that the removal of $R$ does not change the improvement vector of any arc $\alpha\in C$.


For each node $u\in S(H_I)$, we consider the following cases: (\textbf{a}) $u\in S_1(H_I)$, i.e the node $u$ appears exactly once in the interval,
(\textbf{b}) $u\in S_2(H_I)$ and $u$ appears an even number of times and (\textbf{c}) $u\in S_2(H_I)$  and $u$ appears an odd number of times.

\begin{flushleft}
\begin{enumerate}
    \item[] 
    \begin{minipage}{15cm}
    \textbf{Case a:} $u\in S_1(H_I)$.
    Let $k\in I$ with $\sigma_k=u$ be the unique occurrence of $u$ in $I$.
    If the radius of $k$ is \longradius: $\Radius{k}>2\cdot \Upper{\group_{i^*}}$,
     we add $k$ to $R_1$; we leave $k$ in $I\setminus R$ otherwise.
    The idea here is that if the radius of $k$ is \longradius, then given that every
      arc $\alpha \in C $ is \shortradius, we have $k\notin \interior(\alpha )$ and thus,
      the removal of $k$ does not affect the improvement vector of $\alpha $.
    On the other hand, if the radius of $k$ is small, then it is an endpoint of two
      arcs $(\pred_H(k),k)$ and $(k,\succ_H(k))$ 
    and both have length at most $2\cdot \Upper{\group_{i^*}}$. At the same time, 
      given that $u$ only appears once in $I$ and that $I$ has length $2\ell$, at least one of them has   length at least $\ell\ge 2^{j^*-1}+1$.
    As a result, we have
      $k\in P$ when $k$ is not added to $R_1$.
    \end{minipage}

    \item[] 
    \begin{minipage}{15cm}
    \textbf{Case b:} $u\in S_2(H_I)$ and $u$
      appears an even number of times in $H_I$.
    Let $k_1<k_2<\cdots<k_{2q}$ be the occurrences of $u$ in $I$
      for some $q\ge 1$.
    Then for each $i\in [q]$, we add both $k_{2i-1}$ and $k_{2i}$ to $R_2$ if 
      $(k_{2i-1},k_{2i})\in \NonTouches$, and keep both of them
      in $I\setminus R$ otherwise.
    Note that if $(k_{2i-1},k_{2i})\notin \NonTouches$, then  either $(k_{2i-1},k_{2i})\in \Touches$ or at least 
        one of the two endpoints is in $\Appearances(C)$.
    As a result, we can conclude that the number of these $2q$ indices that 
      do not get added to $R_2$ can be bounded from above by
      $$
      O\Big(\text{number of $\beta\in \Touches\cup C $ with $\vertex_H({\beta})=u$}\Big).
      $$
    \end{minipage} 
    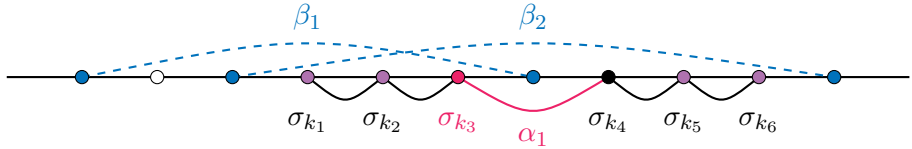
\begin{figure}[H]
    \begin{tikzpicture}
    \draw[thick] (-5,0) -- (7,0);
    \draw[thick,draw=RoyalBlue, dashed] (-4,0) .. controls(-1,0.6) .. (2,0);
    \draw[thick,draw=RoyalBlue, dashed] (-2,0) .. controls(2,0.6) .. (6,0);
    \draw[thick,draw=WildStrawberry] (1,0) .. controls(2,-0.6) .. (3,0);
    \draw[thick] (-1,0) .. controls(-0.5,-0.4) .. (0,0);
    \draw[thick] (0,0) .. controls(0.5,-0.4) .. (1,0);
    \draw[thick] (3,0) .. controls(3.5,-0.4) .. (4,0);
    \draw[thick] (4,0) .. controls(4.5,-0.4) .. (5,0);
    
    \draw[fill=RoyalBlue] (-4,0) circle (2.5pt);
    \draw[fill=white] (-3,0) circle (2.5pt);
    \draw[fill=RoyalBlue] (-2,0) circle (2.5pt);
    \draw[fill=Orchid] (-1,0) circle (2.5pt);
    \draw[fill=Orchid] (0,0) circle (2.5pt);
    \draw[fill=WildStrawberry] (1,0) circle (2.5pt);
    \draw[fill=RoyalBlue] (2,0) circle (2.5pt);
    \draw[fill=black] (3,0) circle (2.5pt);
    \draw[fill=Orchid] (4,0) circle (2.5pt);
    \draw[fill=Orchid] (5,0) circle (2.5pt);
    \draw[fill=RoyalBlue] (6,0) circle (2.5pt);
    
    \node at (-1,0.8) {\textcolor{RoyalBlue}{$\beta_1$}};
    \node at (2,0.8) {\textcolor{RoyalBlue}{$\beta_2$}};
    \node at (2,-0.8) {\textcolor{WildStrawberry}{$\alpha_1$}};
    \node at (-1,-0.6) {$\sigma_{k_1}$};
    \node at (0,-0.6) {$\sigma_{k_2}$};
    \node at (1,-0.6) {\textcolor{WildStrawberry}{$\sigma_{k_3}$}};
    \node at (3,-0.6) {\textcolor{black}{$\sigma_{k_4}$}};
    \node at (4,-0.6) {$\sigma_{k_5}$};
    \node at (5,-0.6) {$\sigma_{k_6}$};
    
    \end{tikzpicture}
    \centering
    \caption{Illustration of Case a and Case b. Here 
    The dashed blue arcs are in $C$, the red arc is in $\Touches$, and the black arc is in $\NonTouches$. The circle is in $B_1$, and the purple nodes are in $B_2$.}\label{fig:case32-r1r2}
    \end{figure}

    \item[]
    \begin{minipage}{14cm}  \textbf{Case c:}
        $u\in S_2(H_I)$  and $u$ appears an odd number of times
        in $H_I$.
        Let $k_1<\cdots<k_{2q+1}$ be the occurrences of $u$ in $I$
          for some $q\ge 0$.
    \end{minipage}
        
    \item[]
    \begin{minipage}{14cm}
    \textbf{Case c$_1$:}
        If the number of $\beta\in \Touches\cup C $ with $\vertex_H(\beta)=u$ is at least $1$,
          then we can handle this case similarly as Case 2: For each $i\in [q]$ we add
          $k_{2i-1}$ and $k_{2i}$ to $R_2$ if the arc $(k_{2i-1},k_{2i})$ is in $\NonTouches$, and we always keep 
          $k_{2q+1}$ in $I\setminus R$.
        In this case, the number of these $2q+1$ indices that do not get added to $R_3$ is
        $$
      1+O\Big(\text{number of $\beta\in \Touches\cup C$ with $\vertex_H({\beta})=u$}\Big)
      $$
      which remains an $O(\cdot)$ of the same quantity given that the latter is at least $1$.
    \end{minipage}
        
      \item[] \begin{minipage}{14cm} 
      \textbf{Case c$_2$:} Consider the case when there is no $\beta\in \Touches\cup C $ with          
        $\vertex_H(\beta)=u$. 
    We start with the easier situation when
      there is no $k\in I$ such that $\sigma_k=u$ and 
      $k \in \interior(\alpha )$ for some $\alpha \in C $.
    In this case we  add all $k\in I$
      with $\sigma_k=u$ to $R_3$.
    Note that the $(\vertex_H(\alpha ),u)$-th entry of the improvement vector
      of every $\alpha \in C $ is $0$. So
      removing all occurrences of $u$ has no effect.
      \end{minipage}
 
   \item[] \begin{minipage}{14cm}\textbf{Case c$_3$:} We are left with the case when there is no $\beta\in \Touches\cup C$ with $\vertex_H(\beta)=u$ and at the same time, there is    an arc $\alpha\in C$ such that $k_i\in \interior(\alpha)$ for some $i$. 
  Combining these two assumptions we must have that 
  $k_i\in \interior(\alpha)$ for all $i\in [2q+1]$.
      Given that $\alpha$ is a \shortradius arc, we have that the radius of both $k_1$ and $k_{2q+1}$
        is at most $2\cdot \Upper{\group_{i^*}}$.
      On the other hand, given that $\leng(I)=2\ell$
        and $\leng(\alpha)\le \ell$,
        the radius of either $k_1$ or $k_{2q+1}$ is at least $\ell/2\ge 2^{j^*-2}$.
If this holds for $k_1$,
    we add $k_2,\ldots,k_{2q+1}$ to $R_2$ and keep $k_1$ in $I\setminus R$; otherwise we add $k_1,\ldots,k_{2q}$ to
    $R_2$ and keep $k_{2q+1}$ in $I\setminus R$.
    In both cases the index left in $I\setminus R$ lies in $P$.
    \end{minipage}
\end{enumerate}
\end{flushleft}

  \begin{figure}[H]
    \begin{tikzpicture}
    \draw[thick] (-1,0) -- (13,0);
    \draw[thick,draw=RoyalBlue, dashed] (0,0) .. controls(3.5,0.8) .. (7,0);
    \draw[thick,draw=RoyalBlue, dashed] (4,0) .. controls(6.5,0.6) .. (9,0);
    \draw[thick,draw=WildStrawberry] (6,0) .. controls(7,-0.6) .. (8,0);
    \draw[thick] (1,0) .. controls(1.5,-0.4) .. (2,0);
    \draw[thick] (2,0) .. controls(2.5,-0.4) .. (3,0);
    \draw[thick] (5,0) .. controls(5.5,-0.4) .. (6,0);
    \draw[thick] (11,0) .. controls(11.5,-0.4) .. (12,0);
    \draw[thick] (10,0) .. controls(10.5,-0.4) .. (11,0);
    
    \draw[fill=RoyalBlue] (0,0) circle (2.5pt);
    \draw[fill=black] (1,0) circle (2.5pt);
    \draw[fill=Orchid] (2,0) circle (2.5pt);
    \draw[fill=Orchid] (3,0) circle (2.5pt);
    \draw[fill=RoyalBlue] (4,0) circle (2.5pt);
    \draw[fill=Orchid] (5,0) circle (2.5pt);
    \draw[fill=Orchid] (6,0) circle (2.5pt);
    \draw[fill=RoyalBlue] (7,0) circle (2.5pt);
    \draw[fill=black] (8,0) circle (2.5pt);
    \draw[fill=RoyalBlue] (9,0) circle (2.5pt);
    \draw[fill=orange] (10,0) circle (2.5pt);
    \draw[fill=orange] (11,0) circle (2.5pt);
    \draw[fill=orange] (12,0) circle (2.5pt);
    
    \node at (-0.5,0) {\textcolor{black}{$[$}};
    \node at (12.5,0) {\textcolor{black}{$]$}};
    \node at (-1.5,0) {\textcolor{black}{$I=$}};
    \node at (2,0.8) {\textcolor{RoyalBlue}{$\beta_1$}};
    \node at (7.5,0.8) {\textcolor{RoyalBlue}{$\beta_2$}};
    \node at (7,-0.8) {\textcolor{WildStrawberry}{$\alpha_1$}};
    \node at (1,-0.6) {\textcolor{black}{$\sigma_{k_1}$}};
    \node at (2,-0.6) {\textcolor{Orchid}{$\sigma_{k_2}$}};
    \node at (3,-0.6) {\textcolor{Orchid}{$\sigma_{k_3}$}};
    \node at (5,-0.6) {\textcolor{Orchid}{$\sigma_{k'_1}$}};
    \node at (6,-0.6) {\textcolor{Orchid}{$\sigma_{k'_2}$}};
    \node at (8,-0.6) {\textcolor{WildStrawberry}{$\sigma_{k'_3}$}};
    \node at (10,-0.6) {\textcolor{orange}{$\sigma_{k''_1}$}};
    \node at (11,-0.6) {\textcolor{orange}{$\sigma_{k''_2}$}};
    \node at (12,-0.6) {\textcolor{orange}{$\sigma_{k''_3}$}};
    
    \end{tikzpicture}
    \centering
    \caption{Illustration of Case c. Here $\vertex(\beta_1)$ and $\vertex(\beta_2)$ are adjacent to $\vertex(\alpha_1)$ in $G$.
    The dashed blue arcs are in $C$, the red arc is in $\Touches$, and the black arc is in $\NonTouches$. The orange nodes are in $B_3$, and the purple nodes are in $B_2$.}\label{fig:case32-r2r3}
    \end{figure}

Summarizing Case b and Case c, we have that $R_2$ consists of endpoints of a 
        collection of endpoint-disjoint arcs in $\NonTouches$.
      Moreover, the number of indices left in $I\setminus R $
        can be bounded by
\begin{equation}\label{heheha} 
        |P\cap I|+O (|\Touches\cup C | ).
\end{equation}

This gives us the following bound on $|I\setminus R|$:

\begin{lemma}\label{lem:largeratio}
 $|I\setminus R|\le O(\sqrt{\log n})\cdot |C |$.
\end{lemma}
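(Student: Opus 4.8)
The plan is to bound $|I\setminus R|$ by counting, node by node, how many occurrences of each node of $H_I$ survive the removal of $R$, and then to plug in the two size estimates for $|P\cap I|$ and $|\Touches|$ that are already available from (\ref{hahe1}) and (\ref{hehe100}).

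First I would recast the case analysis carried out above as a charging argument. Fix a node $u\in S(H_I)$ and let $\kappa(u)$ denote the number of occurrences of $u$ in $I$ that remain in $I\setminus R$. If $u$ falls under Case a, then $\kappa(u)\le 1$, and it was already shown that a surviving occurrence lies in $P\cap I$; under Case c$_3$ we again have $\kappa(u)=1$ with the surviving occurrence in $P\cap I$. If $u$ falls under Case b or Case c$_1$, the occurrences of $u$ in $I$ are grouped into consecutive pairs, so each pair is an arc of $H$ contained in $I$; such a pair stays in $I\setminus R$ only if it is not in $\NonTouches$, which forces it either to lie in $\Touches$ or to have an endpoint in $\Appearances(C)$. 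In both situations the pair is charged to an arc $\beta\in\Touches\cup C$ with $\vertex_H(\beta)=u$, and distinct surviving pairs of $u$ receive distinct charges; hence $\kappa(u)=O\bigl(\#\{\beta\in\Touches\cup C:\vertex_H(\beta)=u\}\bigr)$, the one extra surviving index $k_{2q+1}$ of Case c$_1$ being absorbed since that count is at least $1$ there. If $u$ falls under Case c$_2$, then $\kappa(u)=0$.

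Summing $\kappa(u)$ over $u\in S(H_I)$: the Case-a and Case-c$_3$ survivors are pairwise distinct indices of $P\cap I$ and hence contribute at most $|P\cap I|$ in total; the Case-b and Case-c$_1$ contributions sum to $\sum_u O\bigl(\#\{\beta\in\Touches\cup C:\vertex_H(\beta)=u\}\bigr)=O(|\Touches\cup C|)$; and Case c$_2$ contributes nothing. This is exactly the bound (\ref{heheha}),
\[
|I\setminus R|\ \le\ |P\cap I| + O\bigl(|\Touches\cup C|\bigr).
\]
Since arcs of $\Touches$ are endpoint-disjoint from $C$ we have $\Touches\cap C=\emptyset$, so $|\Touches\cup C|=|\Touches|+|C|$. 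Now I substitute: from the second part of (\ref{hahe1}), $|P\cap I|=O(\sqrt{\log n})\cdot|\restrict{C^*}{I}|=O(\sqrt{\log n})\cdot|C|$, and (\ref{hehe100}) gives $|\Touches|=O(\sqrt{\log n})\cdot|C|$. Plugging these in yields $|I\setminus R|=O(\sqrt{\log n})\cdot|C|$, as claimed.

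I do not expect a real obstacle in this lemma itself — it is a short accounting step resting on the preceding case analysis. The only point that needs care is the charging in Cases b and c$_1$: one must check that the surviving pairs of occurrences of a node $u$ inject (up to a constant factor) into the set of arcs of $\Touches\cup C$ whose node is $u$, so that no arc of $\Touches\cup C$ is over-charged. This is precisely why the occurrences of $u$ were paired consecutively and why Case c was split according to whether $u$ is the node of some arc in $\Touches\cup C$.
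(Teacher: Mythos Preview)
Your proposal is correct and follows essentially the same route as the paper. The paper's own proof is a single sentence invoking (\ref{heheha}), (\ref{hahe1}), and (\ref{hehe100}); you have simply unpacked the derivation of (\ref{heheha}) from the preceding case analysis as an explicit charging argument and then performed the same substitution. One minor quibble: the claim that ``distinct surviving pairs of $u$ receive distinct charges'' is not literally true (an arc of $C$ with node $u$ has two endpoints and may absorb two pairs), but since you only use the $O(\cdot)$ bound this does not affect the argument.
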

\begin{proof}
This follows by combining (\ref{heheha}), (\ref{hahe1}) and (\ref{hehe100}).
\end{proof}

Finally we show that there is no change in the improvement
  vectors of $\alpha\in C$ after removing $R$:

\begin{lemma}\label{lem:nochange}
For every arc $\alpha\in C$ of $H$, 
  its corresponding arc $\beta=\rho(\alpha)\in Q$ of $B$ satisfies
$$
\improvement{\tau,B}{\beta}=\improvement{\gamma,H}{\alpha}.
$$
\end{lemma}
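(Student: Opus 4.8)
The plan is to reduce the statement to two parity facts about the removed set $R=R_1\cup R_2\cup R_3$ and then verify them part by part. Write $I=[i_0:j_0]$, so that $\gamma'=\gamma_{i_0-1}$ and $\tau=\restrict{\gamma'}{S(B)}$, and fix an arc $\alpha=(i,j)\in C$ with node $u=\vertex_H(\alpha)$; its image $\beta=\rho(\alpha)=(\rho(i),\rho(j))$ is genuinely an arc of $B$ because $u$ does not occur among the $H$-positions $(i:j)$, hence not among the surviving positions of $B$ between $\rho(i)$ and $\rho(j)$. For a node $x$ let $M(x)$ be the number of positions of $R$ lying in $[i_0:i-1]$ that carry $x$. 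Splitting the moves of $H$ before the $i$-th one into those before position $i_0$ (which turn $\gamma$ into $\tau$) and those in $[i_0:i-1]$ (of which exactly the ones in $R$ are dropped in forming $B$), a short bookkeeping computation gives $\tau_{\rho(i)-1}(x)=\gamma_{i-1}(x)\cdot(-1)^{M(x)}$ for every $x\in S(B)$. Since the $(u,v)$-entry of $\improvement{\gamma,H}{\alpha}$ is $2\gamma_{i-1}(u)\gamma_{i-1}(v)$ when $v$ is a neighbor of $u$ occurring an odd number of times in $(i:j)$ and is $0$ otherwise (and likewise for $\improvement{\tau,B}{\beta}$ with $\tau_{\rho(i)-1}$ and the $B$-interior of $\beta$), it suffices to prove: \textbf{(S)} for every neighbor $v$ of $u$, the number of positions of $R\cap(i:j)$ carrying $v$ is even; and \textbf{(P)} for every neighbor $v$ of $u$ occurring an odd number of times in $(i:j)$, $M(u)+M(v)$ is even. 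Indeed (S) makes the two improvement vectors have the same support, and (P) makes them agree on it.

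To establish (S) and (P) I will go through $R_1$, $R_2$, $R_3$ in turn, using repeatedly that $\alpha\in C$ is \shortradius (so $\Radius{\alpha}\le 2\Upper{\group_{i^*}}$), that $[i:j]\subseteq I$, that $i,j\in\Appearances(C)$ and hence $i,j\notin R$, and that $u$ does not occur strictly between $i$ and $j$. For $R_1$: a removed position there carries a node $v\in S_1(H_I)$ of long radius, so if such $v$ is a neighbor of $u$ with its (unique) occurrence inside $(i:j)$, that occurrence lies in $\interior(\alpha)$ and forces $\Radius{\alpha}>2\Upper{\group_{i^*}}$, a contradiction; thus $R_1$ puts no occurrence of a neighbor of $u$ inside $(i:j)$, and for the $v$ relevant to (P) (which occur an odd, hence positive, number of times in $(i:j)$) the unique occurrence lies in $(i:j)$ and survives, so $R_1$ contributes $0$ to $M(v)$; also $u\notin R_1$ since $u$ appears twice in $I$. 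For $R_2$: its positions split into endpoint-disjoint arcs $(p,p')\in\NonTouches$, i.e. pairs of consecutive occurrences of a common node $w$. If $w=u$, no removed pair can have $i$ or $j$ strictly between $p$ and $p'$ (there $u$ would occur), and the pair through $i$ or through $j$ meets $\Appearances(C)$ hence is not in $\NonTouches$ and not removed; so every removed $u$-pair lies wholly on one side of $i$ and contributes an even amount to $M(u)$. If $w$ is a neighbor of $u$, a removed $w$-pair with exactly one endpoint in $(i:j)$, or one straddling $i$, forces (using $\{p,p'\}\cap\{i,j\}=\emptyset$ and that $w$ is absent between $p$ and $p'$) one of the crossing patterns $p<i<p'<j$ or $i<p<j<p'$ — impossible for a pair in $\NonTouches$ — or else forces $w$ to be absent from $(i:j)$ altogether (the case $p<i<j<p'$); hence every removed $w$-pair contributes an even number of $v$-occurrences inside $(i:j)$, giving (S), and no removed $v$-pair straddles $i$ for $v$ relevant to (P), giving an even contribution to $M(v)$. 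For $R_3$: it consists of all occurrences of nodes $u'$ for which no occurrence of $u'$ lies in the interior of any arc of $C$; if $v=u'$ is a neighbor of $u$, this forces $v$ to occur an even number of times in $(i:j)$, so deleting all of them is parity-neutral inside $(i:j)$ (giving (S)), and such $v$ is never relevant to (P); and $u\notin R_3$, since $\alpha\in C$ already has node $u$. Summing these contributions yields (S) and (P), and the lemma follows together with the bookkeeping identity above.

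The step I expect to be the real obstacle is the $R_2$ analysis: showing that an endpoint-disjoint $\NonTouches$ pair $(p,p')$ can never interact with an arc $\alpha\in C$ in a parity-breaking way. The definition of $\NonTouches$ is tailored precisely to forbid the two crossing patterns $p<i<p'<j$ and $i<p<j<p'$; the work is to check that every remaining relative position of a removed pair and $\alpha$ — the pair nested inside $\alpha$, disjoint from $\alpha$, or with $\alpha$ nested inside $(p,p')$ — is harmless, the last case because $(p,p')$ consists of \emph{consecutive} occurrences of $w$ and therefore sandwiches $[i:j]$ with no occurrence of $w$ between them, so the $(u,w)$-entry of the improvement vector of $\alpha$ is already zero. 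The remaining ingredients (the bookkeeping identity, and the $R_1$ and $R_3$ bullets) are routine once one keeps track of which endpoints lie in $\Appearances(C)$ and uses the \shortradius property of $C$.
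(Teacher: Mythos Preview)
Your proposal is correct and follows essentially the same approach as the paper's own proof. Both arguments reduce the equality of the two improvement vectors to two parity statements---that the deletions preserve the parity of each neighbor's occurrences inside the arc (your (S), giving equal supports) and the parity of occurrences of $u$ and of each relevant neighbor before the arc (your (P), giving equal signs)---and then verify these by going through $R_1$, $R_2$, $R_3$ using exactly the same facts: the \shortradius property of $\alpha$ for $R_1$, the $\NonTouches$/endpoint-disjoint structure of the pairs in $R_2$ to rule out the crossing configurations, and the Case~c$_2$ condition for $R_3$. Your write-up is somewhat more explicitly organized (naming (S) and (P) and listing the relative positions of a removed pair and $\alpha$), but the content is the same.
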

\begin{proof}
Let $\vertex_H(\alpha)=\vertex_B(\beta)=v$ and let $u$ be a vertex that 
  is adjacent to $v$ in $G$.
We consider two cases: the $(u,v)$-entry of $\improvement{\gamma,H}{\alpha }$ is $0$, and the $(u,v)$-entry is nonzero (either $-2$ or $2$).
For both cases our goal is to show that the $(u,v)$-entry of $\improvement{\tau,B}{\beta}$
  is the same.

For the first case, if $u$ does not appear inside $\alpha $ in $H$ then
  $u$ also does not appear inside $\beta$ in $B$.
Otherwise, $u$ appears $2q$ times inside $\alpha $ for some $q\ge 1$ and is covered in either Case b or Case c.
If it is in Case c$_2$, then all occurrences of $u$ are deleted 
  in $R_3$ and thus, the $(u,v)$-entry of $\improvement{\tau,B}{\beta}$ must be $0$.
Otherwise, a number of endpoint-disjoint arcs with node $u$
  in $\NonTouches$ are added to $R_2$ and deleted.
It follows that $u$ still appears an even number of times inside
  $\beta$ in $B$ and thus, the $(u,v)$-entry of $\improvement{\tau,B}{\beta}$
  remains $0$.
  
For the second case, $u$ appears an odd number of times inside
  $\alpha $ in $H$. 
If $u\in S_1(H_I)$, then~this unique appearance of $u$
  inside $\alpha$ was not deleted in Case 1 since 
  $\alpha$ is a \shortradius arc in $\group_{i^*}$. 
It remains in $I\setminus R$ and $u$ appears
  exactly once inside $\beta$ in $B$.
For the case when $u\in S_2(H_I)$,  
  it follows from a similar argument as in the first case 
   that $u$ still appears an odd number of times inside $\beta$
  in $B$.
\clearpage
We have shown that both $(u,v)$-entries in $\improvement{\gamma,H}{\alpha}$ and $\improvement{\tau,B}{\beta}$
  are nonzero; we now show that they have the same value.
Let $i$ be the first index of $I$. 
Recall that $\gamma'$ is the configuration of~$S(H)$ obtained after making the first 
  $i-1$ moves of $H$ on $\gamma$,  and that $\tau$ is the restriction of $\gamma'$ on $S(B)$.
To finish the proof, it suffices to show that the parity of the 
  number of occurrences of $u$ (and $v$) in $H_{[i:\leftt(\alpha )-1]}$
  is the same as that of the number of occurrences of $u$ ($v$, respectively)
  in $B_{[1:\leftt(\beta)-1]}$.
This is trivial for $v$ since $R_1\cup R_3$ does not contain any $k$ with $\sigma_k=v$ and the removal of arcs that are 
  endpoint-disjoint from $C$
  does not change the parity of this number.
For the case of $u$, if it appears only once in $I$ (inside $\alpha$), then 
  clearly the number we care about is $0$ in both~cases.
If $u\in S_2(H_I)$, then it is covered by Case b or c but not
  Case 3.2.
In both cases we delete a number of arcs with node $u$ that are
  in $\NonTouches$ and pairwise endpoint-disjoint.
It follows that the parity of this number does not change.
This finishes the proof of the lemma.
\end{proof}
\begin{lemma}[\answer{Lemma~\ref{lem:main} for Case 3.2}]\label{secondlowerbound:3.2}
Assume that condition (\ref{hehe100}) holds, i.e., $\Touches$ is small. Then there exist (i) a sequence $B$ of moves of length at most $5n$,
  (ii) a configuration $\tau$ of $S(B)$, and (iii) a set of arcs $Q$ of $B$ such that
\begin{enumerate}
    \item The rank of $Q$ in $B$ satisfies $\dfrac{\rank_B(Q)}{\leng(B)}\ge \Omega\left(\dfrac{1}{\sqrt{\log n}}\right) \quad\ \  (\text{A});$
    \item For every arc $\alpha\in Q$, there exists an arc $\alpha'$ of $H$ such that $\improvement{\tau,B}{\alpha}=\improvement{\gamma,H}{\alpha'}\quad\ \  (\text{B}).$
\end{enumerate}
\end{lemma}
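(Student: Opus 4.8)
The plan is to collect the objects built throughout this subsection. Recall that we have already fixed the interval $I\subseteq[m]$ of length $2\ell$ satisfying (\ref{hahe1}), the set $C=\restrict{C^*}{I}$ of good \shortradius arcs of $\chunk_{j^*}$ contained in $I$, and the set $R=R_1\cup R_2\cup R_3$ constructed above together with its defining case analysis (Cases a, b, c$_1$, c$_2$, c$_3$). I take $B=H_{I\setminus R}$, let $\gamma'$ be the configuration of $S(H)$ obtained from $\gamma$ after the moves of $H$ preceding $I$ and let $\tau$ be its restriction to $S(B)$, and let $Q=\{\rho(\alpha):\alpha\in C\}$ be the set of arcs of $B$ obtained from the arcs of $C$ under the index map $\rho$. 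This is well defined because $R\cap\Appearances(C)=\emptyset$, so both endpoints of every $\alpha\in C$ survive in $B$, and no new occurrence of $\vertex_H(\alpha)$ is created strictly inside $\rho(\alpha)$ by deletions, so $\rho(\alpha)$ is indeed an arc of $B$.

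First I would dispatch the easy requirements. For the length bound, $\leng(B)=|I\setminus R|\le|I|=2\ell\le m=5n$ (using $\ell=\Upper{C^*}\le m/2$, which holds since $j^*\le s-2$), so $B$ has length at most $5n$. For the Vector-Preservation property (B), Lemma~\ref{lem:nochange} gives, for every $\alpha\in C$, that $\improvement{\tau,B}{\rho(\alpha)}=\improvement{\gamma,H}{\alpha}$; since $\alpha$ is itself an arc of $H$, taking $\alpha'=\alpha$ verifies (B) for every arc $\rho(\alpha)\in Q$.

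For the High-rank property (A) I would argue in two steps. Step one: the rank of $C$ in $H$ is almost full. Every arc of $C$ is contained in $I$, which has length $2\ell$, and (being in $\chunk_{j^*}$ with $\ell=\Upper{C^*}$) has length at least $\ell/2$; hence at most a constant number of arcs of $C$ can share a common node, so there is $C'\subseteq C$ with $|C'|=\Omega(|C|)$ and pairwise distinct nodes, and Lemma~\ref{firstlowerbound} yields $\rank_H(C)\ge\rank_H(C')=\Omega(|C|)$. Step two: transfer to $B$ and divide. By Lemma~\ref{lem:nochange} the matrix whose columns are $\improvement{\tau,B}{\rho(\alpha)}$, $\alpha\in C$, is, up to reordering columns, the same matrix as the one whose columns are $\improvement{\gamma,H}{\alpha}$, $\alpha\in C$; hence $\rank_B(Q)=\rank_H(C)=\Omega(|C|)$. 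By Lemma~\ref{lem:largeratio}, $\leng(B)=|I\setminus R|\le O(\sqrt{\log n})\cdot|C|$. Combining the two bounds,
$$\frac{\rank_B(Q)}{\leng(B)}\ \ge\ \frac{\Omega(|C|)}{O(\sqrt{\log n})\cdot|C|}\ =\ \Omega\!\left(\frac{1}{\sqrt{\log n}}\right),$$
which is exactly (A). Together with the length bound and (B), this proves the lemma.

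The genuinely hard part of this case is not the present assembly but the two ingredients it invokes. Lemma~\ref{lem:largeratio} rests on the accounting that every surviving index of $I\setminus R$ either is an endpoint of an arc of $C$, or lies in $P$, or is an endpoint of an arc of $\Touches\cup C$, so that (\ref{heheha}), (\ref{hahe1}) and the small-$\Touches$ bound (\ref{hehe100}) combine to the claimed $O(\sqrt{\log n})|C|$. Lemma~\ref{lem:nochange} rests on checking, over all of Cases a, b, c$_1$, c$_2$, c$_3$, that deletions never flip the parity of the number of occurrences of any neighbor $u$ of $\vertex_H(\alpha)$ inside $\alpha\in C$, nor the parities of the prefixes $H_{[i:\leftt(\alpha)-1]}$ that determine the sign of each nonzero entry; the subtle spot is Case c$_3$, where a node with an odd number of occurrences all lying inside some $\alpha\in C$ must have exactly one extreme occurrence kept, and one must certify that this kept index belongs to $P$.
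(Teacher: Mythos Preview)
Your proof is correct and follows essentially the same approach as the paper: you take $B=H_{I\setminus R}$, $\tau$ the restriction of $\gamma'$ to $S(B)$, and $Q$ the arcs induced by $C$, then invoke Lemma~\ref{lem:nochange} for property (B) and combine the almost-full-rank argument for $C$ with Lemma~\ref{lem:largeratio} for property (A). Your write-up is in fact more careful than the paper's, explicitly verifying that $\rho(\alpha)$ is an arc of $B$, that $\leng(B)\le 5n$, and that the rank transfers from $H$ to $B$ via the equality of improvement vectors.
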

\begin{proof}
Indeed using the interval $I\subseteq [m]$ of Equation~(\ref{hahe1}), we set $B=H_{I\setminus R}$ and $\tau$ be the restriction of configuration $\gamma'$ on $S(B)$.
We set also $Q$ the arcs of $B$ which are induced by collection $C=\restrict{C^*}{I}$ of (\ref{hahe1}). Lemma~\ref{lem:nochange} shows that (B)---Vector-Preservation--- property 
holds for $B,Q,\tau$. Finally the aforementioned analysis shows that (i) $\rank(C)$ is almost full or equivalently that $\rank_B(Q)\ge \Omega(|Q|)$ and (ii) using
Lemma~\ref{lem:largeratio} $\leng(B)=\leng(H_{I\setminus R})\le O(\sqrt{\log n})\cdot |Q|$ implying that (A)---High-rank property--- holds too.
\end{proof}

\section{Binary Max-CSP and Function problems}\label{sec:maxcsp}

\begin{definition}
An instance of Max-CSP (Constraint Satisfaction Problem) consists
of a set $V = \{ x_1, \ldots, x_n \}$ of variables that can take values over
a domain $D$, and a set $C = \{c_1, \ldots, c_m \}$ of constraints
with given respective weights $w_1, \ldots, w_m$.
A constraint $c_i$ is a pair $(R_i, t_i)$ consisting of a relation
$R_i$ over $D$ of some arity $r_i$ (i.e. $R_i \subseteq D^{r_i}$), 
and an $r_i$-tuple of variables (i.e., $t_i \in V^{r_i}$).
An assignment $\tau: V \rightarrow D$ satisfies the constraint $c_i$
if $\tau(t_i) \in R_i$.
The MAX CSP problem is: given an instance, find an assignment that
maximizes the sum of the weights of the satisfied constraints.
\end{definition}
We will focus here on the case of binary domains $D$, which wlog we can take to be
$\{0,1\}$, and binary relations ($r_i=2$); we refer to this as Binary Max-CSP,
or Max-2CSP. Several problems can be viewed as special cases of Binary Max-CSP where the
relations of the constraints are restricted to belong to
a fixed family ${\cal R}$ of relations; 
this restricted version is denoted Max-CSP(${\cal R}$).
For example, the Max Cut problem in graphs is equivalent to Max-CSP(${\cal R}$) where
${\cal R}$ contains only the ``not-equal'' (binary) relation $\neq$ (i.e.,
the relation $\{ (0,1), (1,0) \}$).
Other examples include:
\begin{itemize}
    \item {\em Directed Max Cut}. Given a directed graph with weights on its edges, partition the set of nodes into two sets
$V_0, V_1$ to maximize the weight of the edges that are directed from $V_0$ to $V_1$.
This problem is equivalent to Max-CSP(${\cal R}$) where ${\cal R}$ consists of the relation $\{(0,1)\}$.
 \item {\em Max 2SAT}. Given a weighted set of clauses with two literals in each clause, find a truth assignment that
maximizes the weight of the satisfied clauses.
This is equivalent to Max-CSP(${\cal R}$),
where ${\cal R}$ contains 4 relations, one for each of the 4 possible clauses with two literals
$a \lor b, {\bar a} \lor b, a \lor {\bar b}, {\bar a} \lor {\bar b}$; the relation for a clause contains the three assignments that satisfy the clause.
If we allow unary clauses in the 2SAT instance, then we include in ${\cal R}$ also the two unary constraints for $a$ and $\lnot a$. 
\item {\em Stable Neural Network}. A neural network in the Hopfield model \cite{hopfield} is an undirected graph $G=(V,E)$ (the nodes correspond to the neurons, the edges to the synapses) with a given weight $w_e$ for each edge $e \in E$ and a given threshold $t_u$ for each node $u \in V$; the weights and thresholds are not restricted in sign. A configuration $\gamma$ is an assignment of a value (its `state') $-1$ or $1$ to each node. A node $u$ is stable in a configuration $\gamma$ if $\gamma(u) = -1$ and $t_u + \sum_{v: (u,v) \in E} w_{(u,v)} \gamma_v \leq 0$, or
$\gamma(u) = 1$ and $t_u + \sum_{v: (u,v) \in E} w_{(u,v)} \gamma_v \geq 0$. A configuration is {\em stable} if all the nodes are stable in it. Hopfield showed that every neural network has one or more stable configurations, using a potential function argument: a node $u$ is unstable in a configuration $\gamma$ iff flipping its state increases the value of the potential function $p(\gamma) = \sum_{u \in V} t_u \cdot \gamma(u) + \sum_{(u,v) \in E} w_{(u,v)} \cdot \gamma(u) \gamma(v)$. Hence, $\gamma$ is stable iff $p(\gamma)$ cannot be increased by flipping the state of any node. Thus, the problem of finding a stable configuration is the same as the local  Max-CSP(${\cal R}$) problem when  ${\cal R}$ contains the unary constraint $a$ and the binary constraint $a = b$. The natural greedy algorithm, in which unstable nodes asynchronously (one-at-a-time) flip their state (in any order) monotonically increases the potential function and  converges to a stable configuration\footnote{Note that if unstable nodes flip their state simultaneously then the algorithm may oscillate and not converge to a stable configuration.}.
Our results apply to the smoothed analysis of this natural dynamics for neural networks.
\item {\em Network coordination game} with 2 strategies per player. We are given a graph $G=(V,E)$ where each node corresponds to a player with 2 strategies,
and each edge $(u,v)$ corresponds to a game $\Gamma_{u,v}$ between players $u ,v$ with a given
payoff matrix (both players get the same payoff in all cases). The total payoff of
each player for any strategy profile is the sum of the payoffs from
all the edges of the player. The problem is to find a pure equilibrium (there always exists one as these are potential games).
This problem can be viewed as a special case of local Max-CSP(${\cal R}$) 
where ${\cal R}$ contains the 4 singleton relations 
$\{(0,0)\}, \{(0,1)\}, \{(1,0)\}, \{(1,1)\}$,
and we want to find a locally optimal assignment that cannot be improved by flipping any single variable.
The FLIP algorithm in this case is the better response dynamics of the game.
Boodaghians et. al. \cite{coordination} studied the smoothed complexity of the better response algorithm for network coordination games, where each entry of every payoff matrix is independently drawn from a probability distribution supported on $[-1,1]$ with density at most $\phi$. They showed that for general graphs and $k$ strategies per player the complexity is at most
$\phi \cdot (nk)^{O(k \log (nk)}$ with probability $1- o(1)$, and in the case of complete graphs it is polynomial.
\end{itemize}

A constraint can be viewed as a function that maps each assignment for the
variables of the constraint to a value 1 or 0, depending on whether the
assignment satisfies the constraint or not.
We can consider more generally the {\em Binary function optimization problem} (BFOP),
where instead of constraints we have functions (of two arguments) with
more general range than $\{0,1\}$, for example $\{0,1,\ldots, k\}$,
for some $k$ fixed (or even polynomially bounded):
Given a set $V = \{ x_1, \ldots, x_n \}$ of variables with domain $D=\{0,1\}$, 
a set $F = \{f_1, \ldots, f_m \}$ of functions, where each
$f_i$ is a function of a pair $t_i$ of variables,
and given respective weights $w_1, \ldots, w_m$, find an assignment
$\tau: V \rightarrow D$ to the variables 
that maximizes $\sum_{i=1}^m w_i \cdot f_i (\tau(t_i))$.
In the local search version, we want to find an assignment that
cannot be improved by flipping the value of any variable.
For smoothed analysis, the weights $w_i$ are drawn independently
from given bounded distributions as in Max Cut.
We will show that the bounds for Max Cut extend to
the general Binary Max-CSP and Function optimization problems with arbitrary
(binary) constraints and functions.

Consider an instance of BFOP.  Even though a function (or a constraint in Max-2CSP) 
has two arguments, its value may depend on only one of them, i.e. it may be
essentially a unary function (or constraint).  More generally,
it may be the case that the function depends on both variables but the two variables can be decoupled and
the function can be separated into two unary functions.
We say that a binary function $f(x,y)$ is {\em separable}
if there are unary functions $f', f''$ such that
$f(x,y) = f'(x)+f''(y)$ for all values of $x,y$; otherwise $f$ is {\em nonseparable}.
For binary domains there is a simple easy criterion. 

\begin{lemma}
A binary function $f$ of two arguments with domain $\{0,1\}$ is separable
iff $f(0,0)+f(1,1) = f(0,1) +f(1,0)$.
\end{lemma}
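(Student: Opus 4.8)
The plan is to prove the two directions separately, both by direct computation on the four points of $\{0,1\}^2$.

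For the ``only if'' direction, I would simply substitute the separated form. If $f(x,y)=f'(x)+f''(y)$ for all $x,y\in\{0,1\}$, then
\[
f(0,0)+f(1,1)=f'(0)+f''(0)+f'(1)+f''(1)=f'(0)+f''(1)+f'(1)+f''(0)=f(0,1)+f(1,0),
\]
so the stated equality holds. This step is immediate and needs no further work.

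For the ``if'' direction, I would exhibit an explicit decomposition. Assuming $f(0,0)+f(1,1)=f(0,1)+f(1,0)$, define the unary functions $f'(x)=f(x,0)$ and $f''(y)=f(0,y)-f(0,0)$, so that $f'(x)+f''(y)=f(x,0)+f(0,y)-f(0,0)$. I would then verify the identity $f(x,y)=f'(x)+f''(y)$ at each of the four points: at $(0,0)$, $(1,0)$, and $(0,1)$ the identity reduces to a trivial cancellation of the $f(0,0)$ terms, while at $(1,1)$ it becomes $f(1,1)=f(1,0)+f(0,1)-f(0,0)$, which is exactly the hypothesis rearranged. Hence $f$ is separable.

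The only place where the hypothesis is actually used is the $(1,1)$ check in the ``if'' direction; everything else is bookkeeping. So there is no real obstacle here — the statement is essentially the observation that the $2\times 2$ ``table'' of $f$ lies in the span of row‑constant and column‑constant tables precisely when its unique ``discrete mixed difference'' $f(0,0)-f(0,1)-f(1,0)+f(1,1)$ vanishes, and the explicit $f',f''$ above make this concrete.
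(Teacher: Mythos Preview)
Your proposal is correct and follows essentially the same approach as the paper: both directions are handled by direct computation, and your explicit decomposition $f'(x)=f(x,0)$, $f''(y)=f(0,y)-f(0,0)$ is literally the same pair of unary functions the paper writes down (they express them as $f'(x)=f(0,0)+cx$, $f''(y)=dy$ with $c=f(1,0)-f(0,0)$ and $d=f(0,1)-f(0,0)$, which agrees with yours at both points of $\{0,1\}$).
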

\begin{proof}
If $f$ is separable, i.e., $f(x,y) = f'(x)+f''(y)$,
then $f(0,0)+f(1,1) = f'(0)+f''(0)+f'(1)+f''(1)=f(0,1) +f(1,0)$.
On the other hand, if $f(0,0)+f(1,1) = f(0,1) +f(1,0)$,
then we can define $f'(x) = f(0,0) + c \cdot x$,
where $c= f(1,0)-f(0,0) = f(1,1) - f(0,1)$,
and $f''(y)= d \cdot y$ where $d= f(0,1)-f(0,0) = f(1,1)-f(1,0)$.
It is easy to check that $f(x,y) = f'(x)+f''(y)$ for all values of $x, y \in \{0,1\}$.
\end{proof}

If our given instance of BFOP has some separable functions, then we can
replace them with the equivalent unary functions.
After this, we have a set of unary and binary functions, where
all the binary functions are nonseparable.

Consider a sequence $H$ of variable flips starting from an initial 
assignment (configuration) $\gamma \in \{0,1\}^n$.
When we flip a variable $x_j$ in some step, the change in the value
of the objective function can be expressed as $\langle \delta , X \rangle$,
where the coordinates of the vectors  $\delta , X$ correspond
to the functions of the given instance, the vector $\delta$ gives the
changes in the function values and $X$ is the vector of random weights of the functions.
Define the matrix $M_{H, \gamma}$ which has a row corresponding to each
nonseparable function $f_i$ and a column for each pair
of closest flips of the same variable in the sequence $H$, 
where the column is the sum of the vectors $\delta_1, \delta_2$ for the
two flips, restricted to the nonseparable functions.
Note that for separable functions, the corresponding 
coordinate of $\delta_1 + \delta_2 =0$.
Thus, the sum of the changes in the value of the objective function
in the two closest flips of the same variable is equal to the
inner product of the column $\delta$ with the vector of random weights of the nonseparable functions.

\begin{lemma}
The entry of the matrix $M_{H, \gamma}$ at the row for the (nonseparable) function $f_i$
and the column corresponding to an arc $\alpha$ of a variable $x_j$
is nonzero iff $x_j$ is one of the variables of the function $f_i$
and the other variable $x_k$ of $f_i$ appears an odd number of times
in the interior of $\alpha$.
\end{lemma}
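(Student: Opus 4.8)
The plan is to unwind the definition of the column of $M_{H,\gamma}$ attached to an arc and reduce the whole statement to a two‑variable computation combined with the separability criterion of the previous lemma.

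First I would fix a nonseparable binary function $f_i$ with (ordered) variable pair $t_i$, and an arc $\alpha=(p,q)$ of $H$, so that $\sigma_p=\sigma_q=x_j$ and $\sigma_\ell\ne x_j$ for all $p<\ell<q$. If $x_j$ does not occur in $t_i$, then flipping $x_j$ at steps $p$ and $q$ does not change the value of $f_i$, both $\delta_1$ and $\delta_2$ vanish in the coordinate of $f_i$, and the $(f_i,\alpha)$ entry of $M_{H,\gamma}$ is $0$, which matches the claim. So I would assume $x_j$ is a variable of $f_i$; since a function depending on only one of its two formal arguments satisfies $f(0,0)+f(1,1)=f(0,1)+f(1,0)$ and is therefore separable, a nonseparable $f_i$ genuinely depends on both arguments, and I would write $t_i=(x_j,x_k)$ with $x_k\ne x_j$ (a function on the pair $(x_j,x_j)$ is effectively unary — and one checks directly that all of its arc columns vanish — so it is not a row of $M_{H,\gamma}$).

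Next I would compute the two relevant changes in $f_i$. Let $\gamma_r$ be the configuration after the first $r$ moves of $H$ (so $\gamma_0=\gamma$), and set $a=\gamma_{p-1}(x_j)$, $b=\gamma_{p-1}(x_k)$, $b'=\gamma_{q-1}(x_k)$. Because $x_j$ is flipped exactly once strictly between the starts of steps $p$ and $q$ (it is untouched in between, by the definition of an arc), we have $\gamma_{q-1}(x_j)=\neg a$. At step $p$ only $x_j$ changes its value, so the change in $f_i$ there is $f_i(\neg a,b)-f_i(a,b)$, and at step $q$ it is $f_i(a,b')-f_i(\neg a,b')$; the $(f_i,\alpha)$ entry of $M_{H,\gamma}$ is the sum
\[ \bigl(f_i(\neg a,b)-f_i(a,b)\bigr)+\bigl(f_i(a,b')-f_i(\neg a,b')\bigr) \]
(with the arguments of $f_i$ transposed throughout if $x_j$ is the second coordinate of $t_i$).

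Finally I would run the parity case analysis. The number of occurrences of $x_k$ among $\sigma_{p+1},\dots,\sigma_{q-1}$, i.e.\ in the interior of $\alpha$, is exactly the number of times $x_k$ is flipped strictly between steps $p$ and $q$, so $b'=b$ when this count is even and $b'=\neg b$ when it is odd. In the even case the four terms cancel and the entry is $0$. In the odd case, substituting $b'=\neg b$ and noting that the resulting expression changes sign under $a\mapsto\neg a$ and under $b\mapsto\neg b$, the entry equals $\pm\bigl(f_i(0,1)+f_i(1,0)-f_i(0,0)-f_i(1,1)\bigr)$, which is nonzero precisely because $f_i$ is nonseparable, by the separability lemma. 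Assembling the three cases yields that the $(f_i,\alpha)$ entry of $M_{H,\gamma}$ is nonzero iff $x_j\in t_i$ and the other variable $x_k$ of $f_i$ appears an odd number of times in the interior of $\alpha$. I do not foresee a real obstacle; the only points that need care are correctly tracking the parity of the value of $x_k$ across the arc (using that $x_j$, being an arc endpoint, is not flipped in between), invoking the separability criterion to get nonvanishing in the odd case, and the minor bookkeeping for the degenerate case where the two variables of $f_i$ coincide.
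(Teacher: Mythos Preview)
Your proposal is correct and follows essentially the same route as the paper: dispose of the case $x_j\notin t_i$, write the arc's column entry as the sum of the two single-flip changes in $f_i$, and then do the parity case split on the other variable $x_k$, invoking the separability criterion to get nonvanishing in the odd case. The paper's proof is the same argument stated more tersely (it simply asserts the value $\pm\bigl(f_i(0,0)+f_i(1,1)-f_i(0,1)-f_i(1,0)\bigr)$ in the odd case according to whether $x_j,x_k$ agree before the first flip), while you additionally spell out the intermediate expression and note the degenerate situation where the two arguments of $f_i$ coincide.
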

\begin{proof}
If $x_j$ is not one of the variables of the function $f_i$ then
the entry is clearly 0. So suppose $x_j$ is one of the variables of $f_i$.
If the other variable $x_k$ appears an even number of times in the interior
of $\alpha$ then its value at the two flips of $x_j$ is the same
and the entry is again 0, regardless of what the function $f_i$ is.
So assume that $x_k$ appears an odd number of times in the interior of $\alpha$.
Then it is easy to check that the entry is equal to
$f_i(0,0)+f_i(1,1) - [f_i(0,1)+f_i(1,0)]$ if the variables $x_j, x_k$ have
different values before the first flip, and it has the opposite value
$f_i(0,1)+f_i(1,0) - [f_i(0,0)+f_i(1,1) ]$ if $x_j, x_k$ have the same value before the
first flip; in either case the entry is nonzero because $f_i$ is nonseparable.
\end{proof}

Thus, the zero-nonzero structure of the matrix $M_{H, \gamma}$ is the same
as that of the matrix for the Max Cut problem on the graph $G$ which has
the variables as nodes and has edges corresponding to the nonseparable functions
with respect to the same initial configuration $\gamma$ and sequence of flips $H$.
The arguments in the proof for the Max Cut problem that choose a subsequence
and bound the rank of the corresponding submatrix depend only on the 
zero-nonzero structure of the matrix and not on the precise values:
In every case, we identify a diagonal submatrix or a triangular submatrix of the appropriate size. 
Therefore, we can apply the same analysis for the general Max-2CSP and BFOP problems with arbitrary binary constraints or functions, proving Theorem \ref{maxcsptheorem}.


\section{Conclusions}\label{sec:conclusions}

We analyzed the smoothed complexity of the FLIP algorithm for local Max-Cut, and more generally, for binary maximum constraint satisfaction problems (like Max-2SAT, Max-Directed Cut, Stable Neural Network etc.). We showed that with high probability, every execution of the FLIP algorithm for these problems, under any pivoting rule, takes at most $\phi n^{O(\sqrt{\log n})}$ steps to terminate. The proof techniques involve a sophisticated analysis of the execution sequences of flips that are potentially generated by the FLIP algorithm, with the goal of identifying suitable subsequences (including non-contiguous subsequences) that contain many steps with linearly independent improvement vectors, which are preserved from the full execution sequence. We do not know at this~point~whether~the $\sqrt{\log n}$ in the exponent, which is due to the ratio between the length and the rank of the subsequence, can be improved or is best possible for this approach, i.e. whether our new rank lemma for subsequences is tight. 

There are several other interesting open questions raised by this work. One question concerns the extension to non-binary constraints. For example, does a similar result hold for Local Max-3SAT?  Does it hold generally for all Max-CSP problems with binary domains? There are several new challenges in addressing these questions. 

Another question concerns the extension to  domains of higher cardinality $k$. Simple examples of Max-2CSP with larger domain include Max-k-Cut, where the nodes are partitioned into $k$ parts instead of 2 as in the usual Max-Cut, and the Network Coordination Game with $k$ strategies per player. Bibak et. al. studied Max-k-Cut and showed that the FLIP algorithm converges with high probability in $\phi n^{O({\log n})}$ steps for general graphs for fixed $k$ (and polynomial time for complete graphs if $k=3$) \cite{bibak2019improving}. Boodaghians et. al. studied the network coordination game and showed a similar bound $\phi n^{O({\log n})}$ for general graphs for fixed $k$ (and in the case of complete graphs, polynomial time for all fixed $k$) \cite{coordination}.
Can the $\log n$ in the exponent be improved to $\sqrt{\log n}$ for these problems using a combination of the techniques in these papers and the present paper, and more generally does it hold for all Max-2CSP problems with non-binary domains?

Ultimately, is the true smoothed complexity of  Local Max-CSP problems polynomial or are there bad examples of instances and distributions that force super-polynomial behavior?

\clearpage
\appendix
\section{Proof of Lemma \ref{lem:arc_node_ratio_in_cover}}\label{sec:app1}

Let $\ell$ be a positive integer that satisfies $\ell\le m/2$.
To prove Lemma \ref{lem:arc_node_ratio_in_cover}, we define 
  $\Cover{\ell}$, a set of intervals of length $2\ell$ that aim to \emph{evenly}
  cover indices of $[m]$.
Formally, we have 
    %
    \[
    \Cover{\ell} = \Covereven{\ell}  \cup \Coverodd{\ell}  \cup \Coverboundary{\ell}  ~,
    \]
    where
    \begin{align*}
     &\Covereven{\ell} = \left\{ \big[(2i-2)\ell+1:2i\ell\big]: i=1,\ldots, \left\lfloor\frac{m}{2\ell}\right\rfloor \right\},
     \\[0.5ex]
     &\Coverodd{\ell} = \left\{\big[(2i-1)\ell+1:(2i+1)\ell\big] :i=1,\ldots, \left\lfloor\frac{m-\ell}{2\ell}\right\rfloor\right\} ,
     \\[0.6ex]
     &\Coverboundary{\ell} = \left\{\big[ m-2\ell+1:m\big]\right\}.
     \end{align*}
The following lemma summarizes properties we need about $\Cover{\ell}$.

    %

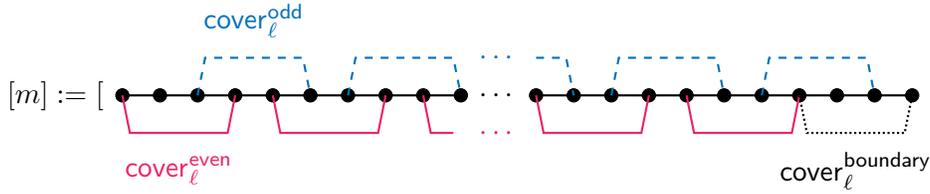
\begin{figure}[H]
\begin{tikzpicture}
\draw[fill=black] (-5,0) circle (2.5pt);
\draw[fill=black] (-4.5,0) circle (2.5pt);
\draw[fill=black] (-4,0) circle (2.5pt);
\draw[fill=black] (-3.5,0) circle (2.5pt);
\draw[fill=black] (-3,0) circle (2.5pt);
\draw[fill=black] (-2.5,0) circle (2.5pt);
\draw[fill=black] (-2,0) circle (2.5pt);
\draw[fill=black] (-1.5,0) circle (2.5pt);
\draw[fill=black] (-1,0) circle (2.5pt);
\draw[fill=black] (-0.5,0) circle (2.5pt);
\draw[fill=black] (5.5,0) circle (2.5pt);
\draw[fill=black] (5,0) circle (2.5pt);
\draw[fill=black] (4.5,0) circle (2.5pt);
\draw[fill=black] (4,0) circle (2.5pt);
\draw[fill=black] (3.5,0) circle (2.5pt);
\draw[fill=black] (3,0) circle (2.5pt);
\draw[fill=black] (2.5,0) circle (2.5pt);
\draw[fill=black] (2,0) circle (2.5pt);
\draw[fill=black] (1.5,0) circle (2.5pt);
\draw[fill=black] (1,0) circle (2.5pt);
\draw[fill=black] (0.5,0) circle (2.5pt);

\node at (-6,0) {$[m]:=$};
\node at (-5.3,0) {$[$};
\node at (0,0) {$\cdots$};
\node at (5.8,0) {$]$};
\node at (0,-0.5) {\textcolor{WildStrawberry}{$\cdots$}};
\node at (0,0.5) {\textcolor{RoyalBlue}{$\cdots$}};
\node at (-4.25,-1) {\textcolor{WildStrawberry}{$\Covereven{\ell} $}};
\node at (-3.25,1) {\textcolor{RoyalBlue}{$\Coverodd{\ell} $}};
\node at (4.75,-1)
{\textcolor{black}{$\Coverboundary{\ell} $}};

\draw[thick] (-5,0) -- (-0.5,0);
\draw[thick] (0.5,0) -- (5.5,0);
\draw[WildStrawberry, thick] (-5,0) -- (-4.9,-0.5) -- (-3.6,-0.5) -- (-3.5,0);
\draw[WildStrawberry, thick] (-3,0) -- (-2.9,-0.5) -- (-1.6,-0.5) -- (-1.5,0);
\draw[WildStrawberry, thick] (-1,0) -- (-0.9,-0.5) -- (-0.6,-0.5);
\draw[WildStrawberry, thick] (0.5,0) -- (0.6,-0.5) -- (1.9,-0.5) -- (2,0);
\draw[WildStrawberry, thick] (2.5,0) -- (2.6,-0.5) -- (3.9,-0.5) -- (4,0);
\draw[black, thick, densely dotted] (4,0) -- (4.1,-0.5) -- (5.4,-0.5) -- (5.5,0);
\draw[RoyalBlue, thick, dashed] (-4,0) -- (-3.9,0.5) -- (-2.6,0.5) -- (-2.5,0);
\draw[RoyalBlue, thick, dashed] (-2,0) -- (-1.9,0.5) -- (-0.6,0.5) -- (-0.5,0);
\draw[RoyalBlue, thick, dashed] (0.5,0.5) -- (0.9,0.5) -- (1,0);
\draw[RoyalBlue, thick, dashed] (1.5,0) -- (1.6,0.5) -- (2.9,0.5) -- (3,0);
\draw[RoyalBlue, thick, dashed] (3.5,0) -- (3.6,0.5) -- (4.9,0.5) -- (5,0);
\end{tikzpicture}
\centering
\caption{An illustration of $\Cover{\ell}$ with $\ell=2$.}
\end{figure}


\begin{lemma}\label{lem:arc_in_cover}
Each $i\in [m]$ is contained in at most three intervals in $\Cover{\ell}$.
Each arc $\alpha$ of $H$ with $\leng (\alpha) \le \ell$ is contained in at least
  one interval in $\Cover{\ell}$. 
\end{lemma}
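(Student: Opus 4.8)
The plan is to verify Lemma~\ref{lem:arc_in_cover} by two independent counting arguments: one bounding how many intervals of $\Cover{\ell}$ can contain a fixed index $i\in[m]$, and one showing every short arc lands inside some interval.

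\textbf{Step 1: Each index lies in at most three intervals.}
First I would observe that $\Covereven{\ell}$ consists of pairwise disjoint intervals (they are $[1:2\ell],[2\ell+1:4\ell],\dots$), so any fixed $i\in[m]$ lies in at most one interval of $\Covereven{\ell}$. Likewise the intervals of $\Coverodd{\ell}$ are $[\ell+1:3\ell],[3\ell+1:5\ell],\dots$, again pairwise disjoint, so $i$ lies in at most one of them. Finally $\Coverboundary{\ell}$ is a single interval, contributing at most one more. Adding up gives at most $3$, which is exactly the claim. This step is routine; the only thing to double-check is that consecutive intervals within $\Covereven{\ell}$ (resp.\ $\Coverodd{\ell}$) really are disjoint and not merely non-nested, which follows immediately from the arithmetic-progression form of their left endpoints with common difference $2\ell$.

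\textbf{Step 2: Each arc of length $\le\ell$ is contained in some interval.}
Let $\alpha$ be an arc with $\leng(\alpha)\le\ell$, and write its endpoints as $\leftt(\alpha)=a$, $\rightt(\alpha)=b$, so $1\le a\le b\le m$ and $b-a+1\le\ell$, hence $b\le a+\ell-1$. I would split on the position of $a$. If $a\le m-2\ell+1$, consider the largest multiple-of-$\ell$ shift: there is an integer $c\in\{a-\ell,\dots,a-1\}\cap\{\,k\ell : k\ge 0\,\}\cup\{(2k-1)\ell: k\ge1\}$ — more cleanly, among the two ``phases'' one of the interval families has a left endpoint in the range $[a-\ell+1, a]$ (because the left endpoints of $\Covereven{\ell}\cup\Coverodd{\ell}$ form an arithmetic progression with common difference $\ell$, namely $1,\ell+1,2\ell+1,3\ell+1,\dots$, so consecutive left endpoints differ by exactly $\ell$). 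Pick the interval $[\,p:p+2\ell-1\,]$ with the largest left endpoint $p\le a$; then $p\ge a-\ell+1$, so $p+2\ell-1\ge a+\ell\ge b$, and hence $[a:b]\subseteq[p:p+2\ell-1]$. I must check this chosen interval is genuinely a member of $\Cover{\ell}$ (i.e.\ that the index $i$ defining it does not exceed $\lfloor m/2\ell\rfloor$ or $\lfloor(m-\ell)/2\ell\rfloor$); this is where the case $a>m-2\ell+1$ is handled separately by using $\Coverboundary{\ell}=[m-2\ell+1:m]$, which trivially contains $[a:b]$ since $b\le m$ and $a\ge m-2\ell+1$ wait — more carefully, if $a>m-2\ell+1$ then $a\ge m-2\ell+2>m-2\ell+1$ so $[a:b]\subseteq[m-2\ell+1:m]$, done.

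\textbf{Main obstacle.}
The delicate point — and the only place where the bound $\ell\le m/2$ is used — is verifying in Step~2 that the ``shifted'' interval produced near the right end of $[m]$ is actually listed in $\Cover{\ell}$; the floor functions in the definitions of $\Covereven{\ell}$ and $\Coverodd{\ell}$ mean the last few nominal intervals may be truncated or absent, and one has to confirm that $\Coverboundary{\ell}$ exactly plugs the resulting gap of width less than $2\ell$ at the tail. Once the boundary interval is brought in, every arc whose left endpoint is too large for the periodic families is swallowed by $\Coverboundary{\ell}$, and the argument closes. Everything else is elementary interval arithmetic, so I would write Step~1 in two sentences and devote the bulk of the proof to making the endpoint bookkeeping in Step~2 airtight.
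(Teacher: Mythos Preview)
Your proposal is correct and follows essentially the same approach as the paper. Both arguments exploit that the left endpoints of $\Covereven{\ell}\cup\Coverodd{\ell}$ form an arithmetic progression with step $\ell$, so any arc of length $\le\ell$ fits inside one of the periodic intervals unless it sits in the tail, where $\Coverboundary{\ell}$ takes over; the only cosmetic difference is that the paper splits on the position of $\rightt(\alpha)$ relative to $\lfloor m/\ell\rfloor\cdot\ell$ while you split on $\leftt(\alpha)$ relative to $m-2\ell+1$, which are equivalent thresholds up to the length bound $\leng(\alpha)\le\ell$.
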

\begin{proof}
The first part of the lemma is trivial. For the second part, let  $\alpha=(i,j)=(\leftt(\alpha),\rightt(\alpha))$ be
an arbitrary arc of length at most $\ell$. Let $k= \lfloor \frac{m}{\ell} \rfloor \ge 2$ (since $m\ge 2\ell$) be the largest integer such that $k\ell\le m$.
Notice that if $m$ is a multiple of $\ell$, then $\Coverboundary{\ell}$ is included in either 
$\Covereven{\ell}$ or $\Coverodd{\ell}$ depending on the parity of $\left \lfloor \frac{m}{\ell} \right \rfloor$. 
We will split into two cases, depending on whether $\rightt(\alpha) \le k\ell= \left \lfloor \frac{m}{\ell} \right \rfloor\ell]$ or  $\rightt(\alpha) > k\ell $.

     \textbf{Case 1:} If $\rightt(\alpha)\le k\ell=\left \lfloor \frac{m}{\ell} \right \rfloor\ell$, then we claim that $\alpha$ is contained in one of the intervals in
  $\Covereven{\ell}\cup \Coverodd{\ell}$.To see this, let $k'=\left \lceil \tfrac{\rightt(a)}{\ell} \right \rceil \le k$ be the smallest integer such that $\rightt(a)\le k'\ell$ (so $\rightt(\alpha)> (k'-1)\ell$).
If $\rightt(\alpha)\le \ell$ $(k'=1)$, $\alpha$ is trivially contained in $[ \ell]\subset [2\ell]$ and we are done;  so we assume below that $k'\ge 2$. Since $\leng(\alpha)\le \ell$ or equivalently $\rightt(\alpha)-\leftt(\alpha)+1\le \ell$ , it holds that $\leftt(\alpha)\ge \rightt(\alpha)-\ell+1>(k'-2)\ell +1$ and thus, $\alpha=(\leftt(\alpha),\rightt(\alpha))$ is contained in $[(k'-2)\ell+1:  k'\ell]$, 
one of the intervals in $\Covereven{\ell}\cup \Coverodd{\ell}$.

     \textbf{Case 2:} We are left with the case when $\rightt(\alpha)> k\ell=\left \lfloor \frac{m}{\ell} \right \rfloor\ell$. 
Using $m<(k+1)\ell$ (by the choice of $k$), we have that  $\leftt(\alpha)\ge \rightt(a)-\ell+1>(k-1)\ell+1>m-2\ell+1.$
As a result, $\alpha$ is contained in the interval in $\Coverboundary{\ell}$.

\end{proof}
We are now ready to use $\Cover{\ell}$ to prove Lemma \ref{lem:arc_node_ratio_in_cover}.

\begin{proof}[Proof of Lemma \ref{lem:arc_node_ratio_in_cover}]
We first partition $\Cover{\ell}$ into
\begin{align*}
    \mathcal{I}_\text{large} = \left\{I\in \Cover{\ell}:
    \frac{\left|\restrict {C}{I}\right|}{|C|} \ge \frac{\ell}{8m}\right\}\ \quad\text{and}\ \quad
    \mathcal{I}_\text{small} = \left\{I\in \Cover{\ell}:\frac{\left|\restrict{C}{I}\right|}{|C|} < \frac{\ell}{8m}\right\}.
\end{align*}
    We would like to show that there is an interval in $\mathcal{I}_\text{large}$
    that satisfies the second inequality in (\ref{hehe2}).
    
    From the definition of $\mathcal{I}_\text{small}$ we have
    \[
    \sum_{I\in \mathcal{I}_\text{small}} \frac{\left|\restrict{C}{I}\right|}{|C|}<
    \frac{\ell}{8m}\cdot \big|\Cover{\ell}\big|\le
    \frac{\ell}{8m}\cdot \left(2\cdot \left\lfloor \frac{m}{2\ell}\right\rfloor+1\right)\le \frac{1}{4}.
    \]
It follows from the second part of Lemma~\ref{lem:arc_in_cover} that each arc is contained in at least one interval of $\Cover{\ell}$ and thus $\sum_{I\in \Cover{\ell}} \left|\restrict{C}{I}\right|\ge |C|$ or equivalently
\begin{equation}\label{hehe3}
    \sum_{I\in \mathcal{I}_\text{large}} \frac{\left|\restrict{C}{I}\right|}{|C|} >  \frac{3}{4}.
\end{equation}
Assume for a contradiction that every interval $I \in \mathcal{I}_\text{large}$ satisfies that 
$\dfrac{\left| \restrict{C}{I} \right|}{|C|}< 
       \dfrac{|P\cap I|}{4|P|}.
$
Then  
$$
\frac{3}{4}<\sum_{I\in \mathcal{I}_\text{large}}\frac{\left| \restrict{C}{I} \right|}{|C|}< 
\sum_{I\in \mathcal{I}_\text{large}}
       \frac{|P\cap I|}{4|P|}.\vspace{0.08cm}
$$
However, it follows from the first part of Lemma \ref{lem:arc_in_cover}  that each $i\in P$ is contained in $I$ at most three times, and thus
$$
\sum_{I\in \Cover{\ell}} |P\cap I|\le 3|P|,
$$
which leads to a contradiction with (\ref{hehe3}).

\end{proof}

\section{A sequence that has low rank in every scale}\label{sec:hard-case}

Given a sequence $S$ of moves, we write $\rank(S)$ to denote 
  the rank of the set of all arcs in $S$.
In this section, we will construct a sequence $H$
   of length $5n$, specify the subgraph of 
  $G$ over active nodes in $H$,
  and show that every substring $S$ of $H$
  satisfies 
\[\frac{\rank (S)}{\leng(S)}= O\left(\frac{1}{\log n}\right).\]

For simplicity, let $d=0.1\log_3 n$. The set of active nodes
  in $H$ are $\bigcup_{k=1}^{d}V_k$, where $$V_k=\big\{v_{k,0},v_{k,2},\ldots, v_{k,n^{0.1}\cdot 3^{k-1}-1}\big\}.$$ 
Let $N_k=n^{0.1}\cdot 3^{k-1}$ denote the number of  nodes in $V_k$. So there are $O(n^{0.2})\ll n$ active nodes in $H$ in total. The subgraph of $G$ over these nodes contains all
  edges $(u,v)$ with $u\in V_1$ and $v\in \bigcup_{k=2}^d V_k$. 

Now we construct the sequence $H$ as the concatenation of  $5n/d$ blocks:
\[\mathcal{H}:= B_1\circ B_2 \circ \cdots \circ B_{5n/d}.\]
In each block $B_i$, there are $d$ moves, one from each $V_k$:
\begin{equation*}
    B_i = \Big(v_{1,i\hspace{-0.22cm}\pmod{N_1}} , v_{2,i\hspace{-0.22cm}\pmod{N_2}}, \cdots , v_{d, i\hspace{-0.22cm}\pmod{N_d}}\Big)
    ~.
\end{equation*}

We will prove the following bound on the rank of any substring $S$ of $H$:

\begin{lemma}\label{lem:hardness_logn_rank}
    For any substring $S$ of $H$ constructed above, we have
    \begin{equation*}
        \frac{\rank(S)}{\leng (S)} = O\left(\frac{1}{\log n}\right).
    \end{equation*}
\end{lemma}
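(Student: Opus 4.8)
The plan is to reduce the claim to the single bound $\rank(S)=O(b)$, where $b$ is the number of blocks $B_i$ that the substring $S$ meets, and then to note that $b=O(\leng(S)/\log n)$ whenever $\rank(S)>0$. The latter is easy: each block consists of exactly $d=0.1\log_3 n$ moves, so $b\le \leng(S)/d+2$; and if $\rank(S)>0$ then some node occurs twice in $S$, which for a level-$k$ node forces its two occurrences to be $N_k d$ moves apart, hence $\leng(S)\ge N_1 d\ge d$ and therefore $b\le 3\,\leng(S)/d$. (If $\rank(S)=0$ the lemma is trivial.)

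To prove $\rank(S)=O(b)$ I would bound separately the rank contributed by arcs whose node lies in $V_1$ and by arcs whose node lies in $\bigcup_{k\ge2}V_k$. Since every block contains exactly one move from $V_1$, the substring $S$ contains at most $b$ moves from $V_1$ and hence at most $b$ level-$1$ arcs, so these contribute rank at most $b$. The interesting part is the higher levels, and here the key observation will be a structural description of the improvement vectors. Fix $k\ge2$ and a node $v=v_{k,j}$. Because the move pattern is periodic, any two consecutive occurrences of $v$ lie exactly $N_k$ blocks apart, and a direct count of the interior shows that it contains each node of $V_1$ exactly $N_k/N_1=3^{k-1}$ times, which is \emph{odd}, and contains no neighbor of $v$ outside $V_1$ (since each $V_k$ is independent and joined only to $V_1$). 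Thus the improvement vector of every arc with node $v$ is supported on, and nonzero on, exactly the $N_1$ edges from $v$ to $V_1$. Furthermore, passing from one arc of $v$ to the next, both $v$ and every node of $V_1$ are flipped an odd number of times in between, so each product $\gamma(v)\gamma(v_{1,m})$ is preserved; consequently all arcs of $S$ with node $v$ have one and the same improvement vector.

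Granting this, the arcs with a fixed higher-level node span a space of dimension at most one, and for distinct nodes these spaces live on disjoint sets of edges, so the rank contributed by all arcs at levels $\ge2$ is at most the number of nodes of $\bigcup_{k\ge2}V_k$ occurring at least twice in $S$. A level-$k$ node can occur twice only when $S$ meets more than $N_k$ blocks, and then at most $N_k$ such nodes do; so this count is at most $\sum_{k\ge2:\,N_k<b}N_k$. The sequence $N_k=3^{k-1}N_1$ is geometric with ratio $3$, so this sum is less than $\tfrac32 b$. Adding the two contributions gives $\rank(S)<\tfrac52 b = O(b)$, and the lemma follows from the first paragraph.

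The one genuinely delicate point — and the only place where the choice of the base $3$ matters — is the structural claim that all arcs at a common node of level $\ge2$ carry the same improvement vector; everything else is periodicity bookkeeping and a geometric sum. The routine care needed is in the boundary cases: the trailing positions within a block, the level $k=d$, and very short substrings $S$; none of these affects the asymptotics.
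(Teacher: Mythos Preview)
Your proposal is correct and follows essentially the same approach as the paper: separate the rank contribution of $V_1$-arcs (bounded by the number of blocks) from that of higher-level arcs, observe that all arcs at a fixed node $v_{k,j}$ with $k\ge 2$ share the same improvement vector because each $V_1$-node appears $3^{k-1}$ (odd) times in the interior, and finish with a geometric sum over levels. Your justification of the ``same improvement vector'' claim is in fact slightly more explicit than the paper's, since you spell out why the signs $\gamma(v)\gamma(v_{1,m})$ are preserved from one arc to the next.
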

\begin{proof}
Note that to have a repeating node in $S$,
  its length needs to be at least $N_1d +1$.
As a result we may assume without loss of generality that 
  $S$ is the concatenation of $t$ blocks for some 
  $t>N_1$.
Every node $v_{k,j}$ in $V_k$ 
  with $N_k+1>t$ appears at most once in $S$ and thus,
  it does not contribute anything to the rank of $S$.
Let $k^*$ denote the largest $k$ with $N_k+1\le t$.

For each $k:2\le k\le k^*$ and each $v_{k,j}\in V_k$, 
  we show that all arcs with node $v_{k,j}$ (if any) in $S$
  must share the same improvement vector and thus,
  contribute at most $1$ to the rank.
This follows from the facts that $v_{k,j}$ is only adjacent
  to nodes in $V_1$ and that every arc of $v_{k,j}$ contains $3^{k-1}$ (an odd number) occurrences of each node in $V_1$.
Finally, we have that the contribution to the rank of $S$ from
  each node $v_{1,j}\in V_1$ is trivially at most  its number
  of arcs in $S$.

Combining all these bounds, we are ready to bound the rank of $S$:
\begin{align*}
     \rank (S)\le \sum_{k=2}^{k^*} |V_k| + |V_1|\cdot \left\lceil\frac{t}{N_1}\right\rceil
     =O(N_{k^*})+O(t)=O(t)=O\left(\frac{\leng(S)}{\log n}\right),
\end{align*}
where in the first equation we used $t>N_1$ and in the 
  last equation we used $d=\Theta(\log n)$.
\end{proof}




\begin{flushleft}
\bibliographystyle{acm}
\bibliography{smooth-complexity-refs}

\begin{thebibliography}{10}

\bibitem{angel2017local}
{\sc Angel, O., Bubeck, S., Peres, Y., and Wei, F.}
\newblock Local max-cut in smoothed polynomial time.
\newblock In {\em Proceedings of the 49th Annual ACM SIGACT Symposium on Theory
  of Computing\/} (2017), ACM, pp.~429--437.

\bibitem{bibak2019improving}
{\sc Bibak, A., Carlson, C., and Chandrasekaran, K.}
\newblock Improving the smoothed complexity of flip for max cut problems.
\newblock In {\em Proceedings of the Thirtieth Annual ACM-SIAM Symposium on
  Discrete Algorithms\/} (2019), SIAM, pp.~897--916.

\bibitem{coordination}
{\sc Boodaghians, S., Kulkarni, R., and Mehta, R.}
\newblock Smoothed efficient algorithms and reductions for network coordination
  games.
\newblock {\em CoRR abs/1809.02280v4\/} (2019).

\bibitem{elsasser2011settling}
{\sc Els{\"a}sser, R., and Tscheuschner, T.}
\newblock Settling the complexity of local max-cut (almost) completely.
\newblock In {\em International Colloquium on Automata, Languages, and
  Programming\/} (2011), Springer, pp.~171--182.

\bibitem{etscheid2015squaredeclidean}
{\sc Etscheid, M., and R{\"o}glin, H.}
\newblock Smoothed analysis of the squared {E}uclidean maximum-cut problem.
\newblock In {\em Algorithms-ESA 2015}. Springer, 2015, pp.~509--520.

\bibitem{etscheid2015smoothed}
{\sc Etscheid, M., and R\"{o}glin, H.}
\newblock Smoothed analysis of local search for the maximum-cut problem.
\newblock {\em ACM Trans. Algorithms 13}, 2 (Mar. 2017), 25:1--25:12.

\bibitem{fabrikant2004complexity}
{\sc Fabrikant, A., Papadimitriou, C., and Talwar, K.}
\newblock The complexity of pure nash equilibria.
\newblock In {\em Proceedings of the thirty-sixth annual ACM symposium on
  Theory of computing\/} (2004), ACM, pp.~604--612.

\bibitem{hopfield}
{\sc Hopfield, J.~J.}
\newblock Neural networks and physical systems with emergent collective
  computational abilities.
\newblock {\em Proc. Nat. Acad. Sci. USA 79\/} (1982), 2554--2558.

\bibitem{johnson1988easy}
{\sc Johnson, D.~S., Papadimitriou, C.~H., and Yannakakis, M.}
\newblock How easy is local search?
\newblock {\em Journal of Computer and System Sciences 37}, 1 (1988), 79--100.

\bibitem{klee1970good}
{\sc Klee, V., and Minty, G.~J.}
\newblock How good is the simplex algorithm.
\newblock Tech. rep., Washighton University, Seattle, Department of
  Mathematics, 1970.

\bibitem{schaffer1991simple}
{\sc Sch{\"a}ffer, A.~A., and Yannakakis, M.}
\newblock Simple local search problems that are hard to solve.
\newblock {\em SIAM journal on Computing 20}, 1 (1991), 56--87.

\bibitem{spielman2009smoothed}
{\sc Spielman, D.~A., and Teng, S.}
\newblock Smoothed analysis: an attempt to explain the behavior of algorithms
  in practice.
\newblock {\em parallel computing 52}, 10 (2009), 76--84.

\bibitem{spielman2004smoothed}
{\sc Spielman, D.~A., and Teng, S.-H.}
\newblock Smoothed analysis of algorithms: Why the simplex algorithm usually
  takes polynomial time.
\newblock {\em Journal of the ACM (JACM) 51}, 3 (2004), 385--463.

\end{thebibliography}
\end{flushleft}

\end{document}